\titleformat*{\section}{\bf\Large\center}
\def\ncss{four}
\def\ntot{five}
\def\ncssest{eight}
\def\ntotest{ten} 
\def\formds{ \qquad \text{for} \quad \mds}
\def\glzz{\gamma_{\lin,z}}
\def\sumii{\sum_{i=1}^I}
 \def\suml{\sum_{l=1}^{n_i}}
\def\sumicz{\sum_{i: C_i = 1, Z_i = z}} 
\def\sumic{\sum_{i: C_i = 1}} 
 \def\sumk{sum_{k=1}^K}
\def\sumq{\sum_{q=1}^Q}
      \newcommand{\tauj}{\tau_{[k]}}
\def\wif{w_{i, \ff}}
\def\wit{w_{i,\tt}}
\def\witc{w_{i,\tc}}
\def\htheta{\hat\theta}
\def\tc{{\textup{tc}}}
\def\ff{{\textup{f}}}
\def\tt{{\textup{t}}}
\def\htlm{\htau_{\lin, (m)}}
\def\xiz{x_i^0}
\def\agg{\textup{ag}}
\def\ttl{\hat\tau^\agg_\lin}
\def\ttlz{\hat\tau^{\agg,0}_\lin}
\def\otj{\otimes_{j=1}^J}
\def\sxyz{S_{xY(z)}}\def\bxz{\bx^0}
\def\hgsim{\hg_{\md}^\imp}
 \def\xiz{x_i^0}
\def\htfm{\htau_{\fisher,(m)}}
\def\htlm{\htau_{\lin,(m)}}
\def\htsm{\hat\tau_{\md, (m)}}
\def\htnm{\hat\tau_{\neyman, (m)}}
\def\xmp{x^\mp}
\def\hts{\htau_\md }
\def\hgl{\hg_\lin}
\def\cinf{c_{\infty}}
\def\ci{c_{\infty}}
  \def\mI{\mathcal I}
\def\summ{\sum_{m\in\mathcal M}}
\def\hsesq{\hat{\text{se}}^2}
\def\hses{\hat{\text{se}}_\md}
\def\hse{\hat{\text{se}}}
\def\mmp{\mm}
\def\mp{\textup{mp}}
\def\cv{\textup{ccov}}
\def\cc{\textup{cc}}
\def\mi{\textup{mim}}
\def\mim{\textup{mim}}
\def\cim{\textup{cim}}
\def\uc{\textup{ic}}
\def\im{\textup{imp}}
\def\imp{\textup{imp}}
\def\htscv{\hts^\cv}
\def\ccov{\cv}
\def\sumk{\sum_{k=1}^K}
\def\xicv{x_i^{\cv}}
\def\bxcv{\bx^\cv}
\def\cxis{\{\xis(c) - \bxs(c)\}}
\def\hgfs{\hg_{\fisher}^\dg}
\def\hgos{\hg_{\lin,1}^\dg}
\def\hgzs{\hg_{\lin,0}^\dg}
\def\hgzzs{\hg_{\lin,z}^\dg}
\def\hgfc{\hg_{\fisher}^\cc}
\def\hgoc{\hg_{\lin,1}^\cc}
\def\hgzc{\hg_{\lin,0}^\cc}
\def\htfcv{\htf ^\cv}
\def\htlcv{\htl^\cv}
\def\xidg{x_i^\dg}
\def\dg{{\dagger}}
\def\htss{\htau_{\md}^\dg }
\def\hgss{\hg_{\md}^\dg }
\def\hgfs{\hg_{\fisher}^\dg }
\def\hgls{\hg_{\lin}^\dg }
\def\dgsi{\dagger\in\{\cc, \cv,  \im, \mim, \mp \} }
\def\dgsss{\dagger \in \{ \im, \mi\}}
\def\sxxc{S_{xx}^\cc}
\def\sxyc{S_{xY}^\cc}
\def\hsxxc{\hat S^\cc_{xx}}
\def\hsxyc{\hat S^\cc_{xY}}
\def\sxximp{S_{xx}^\imp}
\def\sxxmim{S_{xx}^\mim}
\def\byu{\by^\uc}
\def\bxc{\bar x^\cc}
\def\byc{\by^\cc}
\def\hxc{\hat x^\cc}
\def\hyc{\hy^\cc}
\def\tb{{\im}}
\def\bc{\bar C}
\def\bxz{\bar x^0}
\def\bcy{\overline{CY}}
\def\bcx{\overline{ Cx}}
\def\mds{\md = \fisher,\lin}
\def\sumq{\sum_{q=1}^Q}
\def\hbt{\htau}
\def\bt{\tau}
\def\hglc{\hg_{\lin}^\cc }
\def\glc{\gamma_{\lin}^\cc }
\def\gsc{\gamma_{\md}^\cc }
\def\hgsc{\hg_{\md}^\cc }
\def\htsc{\htau_{\md}^\cc }
\def\bc{\bar C}
\def\nzzc{N_{z}^\cc }
\def\hc{\hat C}
\def\hs{\hat S}\def\dc{D_c}
\def\meaniz{N_z^{-1}\sumiz}
\def\meani{\ninv\sumi}
\def\baj{\bar {A}_j}
\def\hgsim{\hg_{\md}^\im }
\def\md{{\star}}
\def\htsim{\hts^\im}
\def\htsc{\hts^\cc}
\def\htsmi{\hts^\mi}
\def\gsmi{\gamma_{\md}^\mi }
\def\gsim{\gamma_{\md}^\im }
\def\barm{\bar M}
\def\xij{x_{ij}}
\def\mj{\mathcal {J}}
\def\hxmi{\hx^\mi}
\def\xis{x_i^\tb }
\def\cxis{\{\xis(c) - \bxs(c)\}}
\def\sxxdg{\sxx^\dg }
\def\sxydg{\sxy^\dg }
\def\sxxim{\sxx^\im }
\def\sxxmi{\sxx^\mi }
\def\sxyim{\sxy^\im }
\def\sxymi{\sxy^\mi }
\def\ninv{N^{-1}}
\def\bu{\bar u}
\def\sqrtn{\sqrt N}
\def\mn{\mathcal{N}}
\def\hajxj{\widehat{Ax_j}}
\def\bajxj{\overline{Ax}_j}
\def\xs{x^\tb }
\def\bxs{\bx^\tb }
\def\hxs{\hx^\tb }
\def\hsxxs{\hsxx^\tb }
\def\hsxys{\hsxy^\tb }
\def\hsxxmi{\hsxx^\mi}
\def\hsxymi{\hsxy^\mi}
\def\tm{M}
\def\sxx{S_{xx}}
\def\nzinv{\nz^{-1}}
\def\tz{\tau^{\cc}}
\def\tone{\tau^{\uc}}
\def\ncinv{(N^\cc)^{-1}}
\def\nc{N^\cc}
 \def\dci{D_{c,\infty}}
\def\cji{c_{j,\infty}}
\def\hgc{\hg_{\fisher}^\cc }
\def\hsxx{\hat S_{xx}}
\def\hsxy{\hat S_{xY}}
\def\nz{N_z}
\def\hd{\hat\delta}
\def\hmu{\hat\mu}
\def\nzc{N_{0,\cc }}
\def\bxc{\bar x ^\cc}
\newcommand{\hb}{\hat{\beta}}
\newcommand{\hg}{\hat\gamma}
\def\htlc{\htau_{\lin}^\cc }
\def\htlc{\htau_{\lin}^\cc }
\def\htfim{\htau_\fisher^\im }
\def\htlim{\htau_\lin^\im }
\def\htfmi{\htf ^\mi }
\def\hy{\hat Y}
\def\hgzc{\hg_{\lin,z}^\cc }
\def\hgzrc{\hg_{\lin,0}^\cc }
\def\hgoc{\hg_{\lin,1}^\cc }
\def\by{\bar Y}
\def\nc{N^\cc}
\def\gf{\gamma_\fisher}
\def\gl{\gamma_\lin}
\def\ont{1_N^\T}
\def\beginp{\begin{pmatrix}}
\def\endp{\end{pmatrix}}
\def\on{1_N}
\def\sumiz{\sum_{i:Z_i = z}}
\def\hcxx{\widehat{Cxx}}
\def\bcxx{\overline{Cxx}}
\def\hcxxt{\widehat{Cxx }}
\def\bcxxt{\overline{Cxx }}
\def\hcxyt{\widehat{CxY }}
\def\hcyy{\widehat{CYY }}
\def\bcyy{\overline{CYY }}
\def\bcxyt{\overline{CxY}}
\def\bcxy{\overline{CxY}}
\def\ba{\bar A}
\def\ha{\hat A}
\def\hcy{\widehat{CY}}
\def\hcx{\widehat{Cx}}
\def\hax{\widehat{Ax}}
\def\bax{\overline{Ax}}
\def\hm{\hat M }
\def\mcar{\textsc{mcar}}
\newcommand{\hx}{\hat x}
\def\begini{\begin{itemize}}
\def\endi{\end{itemize}}
\def\mm{\mathcal{M}}
\def\rs{\rightsquigarrow}
\newcommand{\ols}{\textsc{ols}}
\newcommand{\op}{{ o_P(1)}}
\newcommand{\hgf}{\hat\gamma_\fisher}
\newcommand{\hglo}{\hat\gamma_{\lin,1}}
\newcommand{\hglz}{\hat\gamma_{\lin,0}}
\newcommand{\hglzz}{\hat\gamma_{\lin,z}}
\newcommand{\glo}{\gamma_{\lin,1}}
\newcommand{\glz}{\gamma_{\lin,0}}
\newcommand{\htf}{\hat\tau_\fisher}
\newcommand{\htl}{\hat\tau_\lin}
\newcommand{\htn}{\hat\tau_\neyman}
\newcommand{\sxy}{S_{xY}}
\newcommand{\ep}{\epsilon}
\newcommand{\asim}{\overset{.}{\sim}}
\newcommand{\sumi}{\sum_{i=1}^N}
\newcommand{\mt}{\mathcal{T}}
\def\T{{ \mathrm{\scriptscriptstyle T} }}
\def\neyman{{\textsc{n}}}
\def\fisher{{\textsc{f}}}
\def\lin{{\textsc{l}}}
\newcommand{\obs}{{\rm obs}}
\newcommand{\htau}{\hat \tau}
\newcommand{\htx}{\hat \tau_x}
\newcommand{\Yi}{Y_i}
\newcommand{\Zi}{Z_i}
\newcommand{\hY}{\hat Y}
\newcommand{\sumN}{\sum_{i=1}^N}
\newcommand{\bx}{\bar x}
\newcommand{\stt}[1]{S^{#1}_{\tau\tau,\md}}
\newcommand{\ot}[1]{1, \ldots,#1}
\DeclareMathOperator{\var}{var}
\DeclareMathOperator{\cov}{cov}
\DeclareMathOperator{\diag}{diag}
\DeclareMathOperator*{\plim}{plim}
\def\beginy{\begin{eqnarray}}
\def\endy{\end{eqnarray}}
\def\begina{\begin{eqnarray*}}
\def\enda{\end{eqnarray*}}
\def\begine{\begin{enumerate}}
\def\ende{\end{enumerate}}
\newcommand{\GG}[1]{}
\theoremstyle{definition}
\newtheorem*{theorem*}{Theorem}
\newtheorem*{rmk*}{remark}
\newtheorem{proposition}{Proposition}
\newtheorem{lemma}{Lemma}
\newtheorem{example}{Example}
\newtheorem{condition}{Condition}
\newtheorem{corollary}{Corollary}
\newtheorem*{corollary*}{Corollary}
\apptocmd{\sloppy}{\hbadness 10000\relax}{}{} 
\newcites{sec}{References}
\begin{document}

\onehalfspacing

\title{\bf \Large
To adjust or not to adjust? Estimating the average treatment effect in randomized experiments with missing covariates
}
\author{Anqi Zhao and Peng Ding
\footnote{Anqi Zhao, Department of Statistics and Data Science, National University of Singapore, 117546, Singapore (E-mail: staza@nus.edu.sg). Peng Ding, Department of Statistics, University of California, Berkeley, CA 94720 (E-mail: pengdingpku@berkeley.edu). Peng Ding was partially funded by the U.S. National Science Foundation (grant \# 1945136). 
}
}
\date{}
\maketitle

\begin{abstract}
Complete randomization allows for consistent estimation of the average treatment effect based on the difference in means of the outcomes without strong modeling assumptions on the outcome-generating process. Appropriate use of the pretreatment covariates can further improve the estimation efficiency.  However, missingness in covariates is common in experiments and raises an important question: should we adjust for covariates subject to missingness, and if so, how? The unadjusted difference in means is always unbiased.  The complete-covariate analysis adjusts for all completely observed covariates and improves the efficiency of the difference in means if at least one completely observed covariate is predictive of the outcome.  Then what is the additional gain of adjusting for covariates subject to missingness? A key insight  is that the missingness indicators act as fully observed pretreatment covariates as long as  missingness is not affected by the treatment, and can thus be used in covariate adjustment to  bring additional estimation efficiency.  This motivates adding the missingness indicators to the regression adjustment, yielding the missingness-indicator method as a well-known but not so popular strategy in the literature of missing data. We recommend it due to its many advantages. First, it removes the dependence of the regression-adjusted estimators on the imputed values for the missing covariates. Second, it improves the estimation efficiency of the complete-covariate analysis and the regression analysis based on only the imputed covariates. Third,  it does not require modeling the missingness mechanism and yields a consistent and efficient estimator even if the missing-data mechanism is related to the missing covariates and unobservable potential outcomes. Lastly, it is easy to implement via standard software packages for least squares. We also propose modifications to the missingness-indicator method based on asymptotic and finite-sample considerations. To reconcile the conflicting recommendations in the missing data literature, we analyze and compare various strategies  for analyzing randomized experiments with missing covariates under the design-based framework.  This framework treats randomization as the basis for inference and does not impose any modeling assumptions on the outcome-generating process and missing-data mechanism. 

\noindent {\bf Keywords}: 
efficiency;
imputation;
missingness pattern;
randomized controlled trial;
regression adjustment;
robust standard error 
 \end{abstract}

\section{Introduction}\label{sec:intro}

Ever since the seminal work of \citet{Fisher35}, randomization has become the gold standard for estimating treatment effects without strong modeling assumptions. It justifies the simple comparison of outcome means across treatment groups   \citep{Neyman23} and allows for additional gains in efficiency by appropriate adjustment for pretreatment covariates \citep{Fisher35, Lin13}. In particular, \citet{Lin13} showed that the coefficient of the treatment from the ordinary least squares ({\ols}) fit of the outcome on the treatment, centered covariates, and their interactions is a consistent and asymptotically efficient estimator for the average treatment effect, and moreover, the associated Eicker--Huber--White robust standard error is a convenient approximation to the true standard error. Importantly, \citet{Lin13}'s theory holds even if the linear model is misspecified.

Missingness in covariates, however, is ubiquitous in field experiments in biomedical and social sciences, and imposes a dilemma to subsequent analyses. We can simply ignore the covariates and proceed with the unadjusted difference in means, which is unbiased and consistent under complete randomization.  An immediate improvement is the {\it complete-covariate analysis} that adjusts for only the subset of covariates that are observed for all units.  It is more efficient than the unadjusted estimator if at least one completely observed covariate is prognostic to the outcome.  Although some existing methods for missing covariates are likely to further improve efficiency,  they rely on additional modeling assumptions on the outcome model or the missingness mechanism \citep{rubin1987multiple, little, ibrahim2005missing, don}.  Due to the unverifiable additional assumptions, these more sophisticated methods do not strictly dominate the simpler unadjusted estimator and complete-covariate analysis in randomized experiments.  Then a natural question arises: should we adjust for the missing covariates or not?

Our answer to the question above is yes. We propose to simply impute the missing covariates with zeros, augment the imputed covariates with the missingness indicators, and then apply \citet{Lin13}'s estimator for covariate adjustment.  This method becomes intuitive if the missingness is not affected by the treatment such that we can view the missingness indicators as a set of fully observed pretreatment covariates. The resulting {\it missingness-indicator method} has many advantages.  First, unlike the {\it complete-case analysis} that discards units with any missing covariates, it is consistent without assuming that the complete cases are representative of the whole population and can be much more efficient when the proportion of missingness is substantial.  Second, unlike the {\it single imputation} method, it is invariant to the imputed values for the missing covariates, so the convenient choice of imputing by zeros results in no loss of generality.  Third, it is asymptotically more efficient than the complete-covariate analysis and   single imputation  if the missingness indicators are prognostic to the outcome.  Fourth, it does not require modeling of the missingness mechanism and, surprisingly, remains  consistent for the average treatment effect even when the missingness mechanism depends on the missing covariates and unobservable potential outcomes, a scenario analogous to missing not at random under the super-population framework \citep{rubin1976inference, don}.  Lastly, it can be easily implemented via standard software packages for {\ols} and robust standard errors.

This missingness-indicator method is not entirely new. \citet[][Chapter 7]{cohen} proposed to use it in regression analysis; \citet[][Appendix B]{rr1984} suggested it in matching for causal inference with observational studies. 
See \cite{rubin2000}, \citet[][Sections 9.4 and 13.4]{rosenbaum2010design}, \citet{mattei2009estimating}, and \citet{fogarty2016discrete} for applications of this method to observational studies, and  see \cite{anderson} and \cite{review} for a review. 
However, this method was criticized by \citet{greenland}, \citet{donders}, and \citet{yang2019causal}. 
\citet{greenland} evaluated this method in the context of logistic regression in observational studies and reported severe bias even when the data are missing completely at random; \cite{donders} offered a similar discussion.  \citet{yang2019causal} argued that the underlying assumption in \cite{rr1984} for this method is unreasonable in observational studies. \citet{miettinen} acknowledged its convenience for application, but also pointed out its limitation of representing only partial control when applied to confounders. 
However, our recommendation is not in contradiction with the existing literature. 
The fundamental difference between randomized experiments and observational studies explains the seemingly contradictory recommendations.
Although the missingness-indicator method can be problematic in observational studies, it has appealing theoretical guarantees mentioned above in randomized experiments. 
Randomization balances the pretreatment covariates as well as the missingness indicators on average across different treatment groups. Using them in \citet{Lin13}'s procedure thus ensures consistent and efficient estimation for the average treatment effect. 
Our results echo \citet{white2005} and \citet[][Section 2.4]{carpenter2007missing} without assuming that the covariates and outcomes are normally distributed. 
Rather, we analyze and compare different methods under the design-based framework, also known as the randomization-based framework, free of any modeling assumptions. Therefore, the above theoretical guarantees of the missingness-indicator method hold even if the outcome model is misspecified.  
This is also a key distinction between our results and those for correctly-specified regression models with missing covariates
\citep{rubin1987multiple, little, robins1994, jones, ibrahim2005missing}.

Moreover, we propose modifications to the missingness-indicator method based on asymptotic and finite-sample considerations. First, the {\it missingness-pattern method} stratifies the data based on the missingness patterns and applies \citet{Lin13}'s estimator based on the available covariates within each stratum.  It is closely related to post-stratification \citep{luke} if we view the type of missingness pattern as a discrete covariate, and can be seen as an extension of the \emph{available-covariate analysis} proposed by \cite{wilks1932moments}, \cite{matthai1951estimation}, \cite{glasser1964}, and  \cite{haitovsky}.   The resulting estimator allows for heterogeneous adjustments across different missingness patterns, and thereby promises additional asymptotic efficiency over the missingness-indicator method if the covariates and missingness patterns affect the treatment effects in non-additive ways.  We recommend this method if the sample sizes within all missingness patterns are large enough to justify the application of \citet{Lin13}'s estimator. In finite samples, however, both the missingness-indicator and missingness-pattern methods can have substantial variability   due to estimating many regression coefficients when the number of covariates is large.  This motivates the {\it complete-case-indicator method} and the {\it missingness-count method} that augment the imputed covariates with only the scalar complete-case indicator and the missingness-count variable, respectively, instead of all missingness indicators. The resulting estimators may lose asymptotic efficiency but can improve the finite-sample properties.

We start with the completely randomized treatment-control experiment in Section \ref{sec:basic_setup}, and outline {\ntot}  strategies for handling missing covariates in randomized experiments in Section \ref{sec::6 strategies}. We analyze and compare their asymptotic properties from the design-based perspective in Section \ref{sec:asym}. 
We use simulation and an application  to illustrate the methods and theory in Sections \ref{sec:simu} and \ref{sec:application}.  
We then extend the theory to cluster randomization and stratified randomization  
in Section \ref{sec:ext}. 
We conclude  in Section \ref{sec::discussion} and relegate all technical details to the Supplementary Material.

The following notation facilitates the discussion.
For a finite population $\{(u_i,v_i): i\in \mI\}$, let $\bu  = |\mI|^{-1} \sum_{i \in \mI}  u_i $, $\bar v  =  |\mI|^{-1} \sum_{i \in \mI} v_i $, and 
$S_{uv}   = (|\mI|-1)^{-1} \sum_{i \in \mI} (u_i  -\bu)  (v_i  - \bar v)^\T$ be the finite-population means and covariance, respectively. We specify the composition of $\mI$ in the context. 
In the case of $u_i = v_i$, we also occasionally write $S_{uu}$ as $S_u^2$. 
For $v_i \in \mathbb R$ and $u_i \in \mathbb R^J$, let $  v_i  \sim u_i  $ denote the {\ols} fit of $v_i$ on $u_i$ over $i \in \mathcal I$.
For example, for $\{(Y_i, Z_i, x_i): i =1,\ldots, N\}$ with $Y_i \in \mathbb{R}$, $Z_i \in \mathbb{R}$, and $ x_i \in \mathbb{R}^J$, let $Y_i \sim 1 + Z_i +x_i $ denote the additive  {\ols} fit of $Y_i$ on $(1, Z_i, x_i)$ over $i =1,\ldots, N$ with regressor vector $  (1, Z_i, x_i^\T)^\T$; let $Y_i \sim 1 + Z_i + (x_i - \bar x) + Z_i(x_i - \bar x)$ denote the fully interacted {\ols} fit of $Y_i$ on $\{1, Z_i, (x_i - \bar x)\}$  over $i =1,\ldots, N$ with centered covariates, $x_i - \bar x$,  and regressor vector $ (1, Z_i, (x_i - \bar x)^\T, Z_i(x_i - \bar x)^\T)^\T$. 
Importantly, we do not invoke the modeling assumptions for {\ols}, but evaluate the sampling properties of the numerical outputs from the design-based perspective. We focus on the robust standard errors from {\ols} because the classic standard errors do not have the desired  design-based properties even in the simple cases \citep{Freedman08a, Lin13}.

\section{Basic setup under the treatment-control experiment}\label{sec:basic_setup}

\subsection{Regression-adjusted estimators with completely observed covariates}\label{sec:hts}
Consider an intervention of two levels, $z = 0, 1$, and a finite population of $N$ units, $i =1,\ldots, N$.  Let $Y_i(z)$ be the potential outcome of unit $i$ under treatment $z$. 
The individual treatment effect is $\tau_i  = Y_i(1) - Y_i(0)$, and the  finite-population average treatment effect  is $\tau = N^{-1} \sumN  \tau_i = \by(1) - \by(0)$, where $\by(z) = \meani Y_i(z)$.  

The designer  assigns $\nz $ units to receive level $z$ with $N_0 + N_1 = N$ and $(e_0, e_1) =(N_0/N , N_1/N).$ 
Let $Z_i$ denote the treatment level received by unit $i$, with $Z_i=1$ for treatment and $Z_i=0$ for control. 
Complete randomization samples $ (Z_1, \ldots, Z_N)$ uniformly from all permutations of $N_1$ 1's and $N_0$ 0's.
The observed outcome is $ Y_i  = \Zi\Yi(1) + (1-\Zi)\Yi(0) $ for unit $i$.

Let $\hY(z) = \nz ^{-1}\sum_{i:Z_i = z}  Y_i $ be the sample average of the outcomes under treatment $z$. 
The difference-in-means estimator $\hat{\tau}_\neyman = \hY(1) - \hY(0)$ is unbiased for $\tau$,  and equals the coefficient of $Z_i$ from the simple {\ols} fit of $Y_i \sim 1 + Z_i$ over $i = \ot{N}$. 
The presence of covariates affords the opportunity to further improve the efficiency. 
Let $x_i = (x_{i1}, \ldots, x_{iJ})^\T$ be the $J$-dimensional covariate vector for unit $i$. 
\citet{Fisher35} suggested an estimator $\htau_\fisher$ for $\tau$, which equals  the coefficient of $Z_i$ from the additive {\ols} fit $Y_i\sim 1 + Z_i + x_i$ over $i = \ot{N}$. 
\citet{Lin13} recommended an improved estimator, $\htau_\lin$, as the coefficient of $Z_i$ from the fully interacted {\ols} fit $Y_i\sim 1 + Z_i + ( x_i - \bar x) + Z_i(x_i - \bar x)$ over $i = \ot{N}$ with centered covariates and treatment-covariates interactions, and showed its asymptotic efficiency over $\hat{\tau}_\neyman$ and $\htau_\fisher$.  We summarize the results in Lemma \ref{lem::basic-x} below, with
the subscripts ``\neyman",  ``\fisher", and ``\lin" signifying \cite{Neyman23}, \cite{Fisher35}, and \cite{Lin13}, respectively. 
We adopt the finite-population design-based framework conditioning on the potential outcomes, but the theory extends to the super-population framework with minor modifications \citep{tsiatis2008covariate, negi2021revisiting}. The following regularity condition is standard for inference under the design-based framework \citep{DingCLT}.

\begin{condition}\label{asym_basic} As $N \to \infty$, 
(i) $e_z  $ has a limit in $  (0,1)$ for $z =  0,1$, 
(ii) the finite-population first two moments of $\{Y_i(0), Y_i(1), x_i\}_{i=1}^N $  have finite limits; the limit of $S_x^2$ is positive definite, and 
(iii) there exists a  $u_0 < \infty$ independent of $N$ such that $\meani \|x_i\|_4^4 \leq u_0$ and $\meani Y_i^4(z) \leq u_0$ for $z = 0,1$.
\end{condition}

\def\sszs{S^2_{z, \md}}

Let $\gamma_{\lin,z} = (S^2_x)^{-1}\sxyz $ be the coefficient of $x_i$ from the {\ols} fit $Y_i(z) \sim 1+x_i$ over $i = \ot{N}$.
Let $\gf  =    e_0\gamma_{\lin,0} + e_1\gamma_{\lin,1}$, and let $S_{z,\neyman}^2$, $S^2_{z, \fisher}$, and $S^2_{z, \lin}$  be the finite-population variances of $\{Y_i(z)\}_{i=1}^N$, $\{Y_i(z) - x_i^\T\gf \}_{i=1}^N$, and $\{Y_i(z) - x_i^\T \gamma_{\lin,z} \}_{i=1}^N$, respectively, for $z = 0,1$. 
Let $S_{\tau,\neyman}^2 = S^2_{\tau,\fisher}$ and $S^2_{\tau,\lin}$ be the finite-population variances of $(\tau_i)_{i=1}^N$ and $\{\tau_i -x_i^\T (\gamma_{\lin,1} - \gamma_{\lin,0})\}_{i=1}^N$, respectively. 
Condition \ref{asym_basic} ensures $e_z$,  $\sszs $, and $S^2_{\tau,\md}$ all have finite limits for $z = 0,1$ and $\md = \neyman, \fisher, \lin$. 
We will also use  the same symbols  to denote their respective limits when no confusion would arise. 

\begin{lemma}\label{lem::basic-x}
Assume complete randomization and Condition \ref{asym_basic}. We have  $\sqrtn (\hts - \tau) \rs \mn(0, v_\md)$ for $\md =  \neyman, \fisher, \lin$ with 
$$
v_\md = e_0^{-1}S^2_{0,\md} + e_1^{-1}S^2_{1, \md}  - S^2_{\tau, \md}
$$ 
and $v_\lin \leq v_\neyman$, $v_\lin \leq v_\fisher$. 
Further let $\hses$ be the robust standard error of $\hts$ from the corresponding {\ols} fit.  We have $N\hsesq _\md  - v_\md = S^2_{\tau, \md} + \op$ with $S^2_{\tau, \md} \geq 0$.
\end{lemma}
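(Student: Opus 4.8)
The plan is to reduce each of $\htn$, $\htf$, $\htl$ to a difference in means of a single fixed, adjusted set of potential outcomes, then invoke the finite-population central limit theorem \citep{DingCLT}. For $z=0,1$ put $\beta_z^\neyman = 0$, $\beta_z^\fisher = \gf$, $\beta_z^\lin = \gamma_{\lin,z}$, and $Y_i^\md(z) = Y_i(z) - (x_i - \bx)^\T\beta_z^\md$, with $\hx(z) = N_z^{-1}\sum_{i:Z_i=z} x_i$ the arm-$z$ sample mean of the covariates. The case $\md=\neyman$ is immediate. Averaging the additive fit $Y_i \sim 1 + Z_i + x_i$ within each arm gives the exact identity $\htf = \htn - \{\hx(1)-\hx(0)\}^\T\hgf$, where $\hgf$ is the fitted $x_i$-coefficient; under Condition \ref{asym_basic} the finite-population law of large numbers gives $\hgf \toinp \gf$, and in fact $\hgf - \gf = O_P(N^{-1/2})$, while complete randomization balances covariate means so that $\hx(1)-\hx(0) = O_P(N^{-1/2})$, whence $\htf$ agrees with the difference in arm means of $Y_i^\fisher(z)$ up to $O_P(N^{-1})$. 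The fully interacted fit is equivalent to separate within-arm fits $Y_i \sim 1 + (x_i - \bx)$ with arm-$z$ intercept $\hY(z) - \{\hx(z)-\bx\}^\T\hglzz$, so $\htl$ is the difference of these intercepts; since $\hglzz \toinp \gamma_{\lin,z}$ at rate $N^{-1/2}$ and $\hx(z)-\bx = O_P(N^{-1/2})$, $\htl$ likewise reduces to the difference in arm means of $Y_i^\lin(z)$ up to $O_P(N^{-1})$.

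Applying the finite-population CLT to $\{Y_i^\md(0), Y_i^\md(1)\}_{i=1}^N$, whose first two finite-population moments have finite limits and whose fourth moments are controlled by Condition \ref{asym_basic}(iii), then yields $\sqrtn(\hts-\tau)\rs\mn(0,v_\md)$: the contrast $\overline{Y^\md(1)}-\overline{Y^\md(0)}$ of finite-population means equals $\by(1)-\by(0)=\tau$ because $\sum_{i=1}^N (x_i-\bx)=0$; the arm-$z$ finite-population variance of $Y_i^\md(z)$ is $S^2_{z,\md}$, since centering by $\bx$ leaves it unchanged; and the finite-population variance of $Y_i^\md(1)-Y_i^\md(0) = \tau_i - (x_i-\bx)^\T(\beta_1^\md-\beta_0^\md)$ is $S^2_{\tau,\md}$; so the limiting variance is $e_0^{-1}S^2_{0,\md} + e_1^{-1}S^2_{1,\md} - S^2_{\tau,\md} = v_\md$.

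For the variance ordering, regard $v_\md$ as the value at $(\beta_0^\md,\beta_1^\md)$ of the map $V(\beta_0,\beta_1)$ returning the limiting variance of $\sqrtn$ times the difference in means of $Y_i(z)-(x_i-\bx)^\T\beta_z$; by the previous paragraph,
\[
V(\beta_0,\beta_1) = e_1^{-1}S^2_{Y(1)-x^\T\beta_1} + e_0^{-1}S^2_{Y(0)-x^\T\beta_0} - S^2_{\tau - x^\T(\beta_1-\beta_0)}.
\]
Expanding each $S^2$ term, $V$ is a quadratic form in $(\beta_0,\beta_1)$ with Hessian
\[
2\begin{pmatrix} (e_1/e_0)\,S_x^2 & S_x^2 \\ S_x^2 & (e_0/e_1)\,S_x^2 \end{pmatrix},
\]
which is positive semidefinite because $S_x^2 \succ 0$ and the scalar matrix with rows $(e_1/e_0,1)$ and $(1,e_0/e_1)$ has determinant $0$ and positive trace, and whose gradient vanishes at $\beta_z = (S_x^2)^{-1}S_{xY(z)} = \gamma_{\lin,z}$. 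Hence $(\gamma_{\lin,0},\gamma_{\lin,1})$ globally minimizes $V$, so $v_\lin = V(\gamma_{\lin,0},\gamma_{\lin,1}) \le V(0,0) = v_\neyman$ and $v_\lin \le V(\gf,\gf) = v_\fisher$.

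Finally, for the robust standard error, write the Eicker--Huber--White standard error of the coefficient of $Z_i$ from each fit, i.e.\ the relevant diagonal entry of the sandwich formula, as a weighted combination of empirical second moments of the arm-specific least-squares residuals (up to asymptotically negligible degrees-of-freedom factors). Using $\hgf \toinp \gf$, $\hglzz \toinp \gamma_{\lin,z}$ once more, together with the finite-population law of large numbers for those empirical second moments, whose validity rests on the uniform integrability supplied by the bounded fourth moments in Condition \ref{asym_basic}(iii), the residuals and their squares behave asymptotically like $Y_i^\md(z)$ and $\{Y_i^\md(z)\}^2$ in arm $z$, giving $N\hsesq_\md \toinp e_0^{-1}S^2_{0,\md} + e_1^{-1}S^2_{1,\md}$; subtracting $v_\md$ leaves $S^2_{\tau,\md}$, which is nonnegative as a limit of finite-population variances. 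The main obstacle is the bookkeeping in the first two steps: executing the least-squares algebra for the interacted fit and bounding every remainder at order $o_P(N^{-1/2})$, and deriving the closed form and probability limit of the robust standard error, both of which rely on the finite-population law of large numbers and Slutsky-type arguments together with the fourth-moment control of Condition \ref{asym_basic}(iii); given those, the CLT for the limiting law, the quadratic-form computation for the ordering, and the remaining convergences are routine.
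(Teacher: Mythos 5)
Your proof is correct. The paper itself states Lemma \ref{lem::basic-x} as a review of classic results (citing \citealp{Neyman23}, \citealp{Freedman08a}, \citealp{Lin13}, and relying on the finite-population CLT of \citealp{DingCLT}) and does not supply a self-contained proof; the closest in-paper analogue is the proof of Proposition \ref{prop:cc_1}, which establishes the same four claims for the complete-case subpopulation. Your argument matches that template in three of its four steps: the exact Frisch--Waugh--Lovell identities $\htf = \htn - \htx^\T\hgf$ and the within-arm-intercept representation of $\htl$ are the paper's Lemma \ref{lem:gf_gl}; the linearization onto adjusted potential outcomes followed by the finite-population CLT is exactly the paper's step (i)--(ii); and the reduction of the Eicker--Huber--White variance to arm-weighted residual second moments is the paper's step (iv), which it also does not derive from the sandwich formula but outsources to \cite{LD20} and \cite{ZDfrt} --- so your leaving that computation as acknowledged bookkeeping is no worse than the paper's own treatment. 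The one genuinely different route is the variance ordering: the paper writes $v_\fisher - v_\lin = (-1,1)\{\Phi \circ (S_\fisher - S_\lin)\}(-1,1)^\T$ with $\Phi = \diag(e_z^{-1}) - 1_{2\times 2}$ positive semidefinite and $S_\fisher - S_\lin$ a Gram matrix, then invokes the Schur product theorem, whereas you exhibit $v_\md$ as the value of a convex quadratic $V(\beta_0,\beta_1)$ whose Hessian is the Kronecker product of a rank-one positive semidefinite $2\times 2$ matrix with $S_x^2$ and whose gradient vanishes at $(\gamma_{\lin,0},\gamma_{\lin,1})$. Both are valid; your optimization argument has the advantage of delivering $v_\lin \le v_\neyman$ and $v_\lin \le v_\fisher$ simultaneously from a single global minimality statement (and is essentially Lin's original argument), while the paper's Schur-product route generalizes more directly to the pairwise comparison of any two adjustment vectors without recomputing a Hessian.
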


Lemma \ref{lem::basic-x} reviews the classic results from \cite{Neyman23}, \cite{Freedman08a}, and \cite{Lin13}.
The  simple, additive, and fully interacted {\ols} fits thus give consistent and asymptotically normal estimators for $\tau$, and the corresponding robust standard errors are asymptotically conservative for estimating the true standard errors. Asymptotically, \citet{Lin13}'s estimator is the most efficient, whereas \cite{Fisher35}'s estimator can be even less efficient than the unadjusted difference in means. 
 
\citet{LD20} showed that $v_\lin =v_\neyman (1-R^2)$, 
 where $R^2$ is the squared multiple correlation between $\htn$ and the difference in means of the covariates, $\hat\tau_x = \hat x(1) - \hat x(0)$ with $\hat x(z) = \meaniz x_i$.
Therefore, including more covariates in \citet{Lin13}'s estimator will never decrease the asymptotic efficiency. Lemma \ref{lem:eff_a} below gives a more general result.

\begin{lemma}\label{lem:eff_a}
Let $\htau_{\lin}^a$ and $\htau_{\lin}^b$ be \citet{Lin13}'s estimators based on covariates $a_i $  and $b_i $, respectively. 
Let $A$ and $B$ be the matrices with $a_i$ and $b_i$ being the $i$th row vectors, respectively. 
If the column space of $A$ is a subset of that of $B$,
then the asymptotic variance of $\htau_{\lin}^a$ is greater than or equal to that of $\htau_{\lin}^b$ under complete randomization and Condition \ref{asym_basic}.  
\end{lemma}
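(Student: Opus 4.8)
The plan is to reduce the statement to a monotonicity property of a squared multiple correlation. Recall the identity of \citet{LD20} quoted just above Lemma~\ref{lem:eff_a}: for \citet{Lin13}'s estimator $\htau_{\lin}^c$ based on a covariate vector $c_i$, the asymptotic variance equals $v_\neyman\,(1 - R_c^2)$, where $v_\neyman = e_0^{-1}S^2_{0,\neyman} + e_1^{-1}S^2_{1,\neyman} - S^2_{\tau,\neyman}$ is the asymptotic variance of $\htn$ and $R_c^2$ is the squared multiple correlation between $\htn$ and $\hat\tau_c = \hat c(1) - \hat c(0)$ in the joint limiting normal distribution of $\sqrt N(\htn - \tau)$ and $\sqrt N\,\hat\tau_c$. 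Since $v_\neyman \ge 0$ does not depend on the covariates, applying this identity with $c_i = a_i$ and with $c_i = b_i$ shows that the claimed inequality is equivalent to $R_a^2 \le R_b^2$.

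Next I would use the hypothesis on the column spaces. Because the column space of $A$ is contained in that of $B$, there is a matrix $M$ with $A = BM$, i.e., $a_i = M^\T b_i$ for every $i$; hence $\hat a(z) = M^\T \hat b(z)$ for $z = 0,1$ and therefore $\hat\tau_a = M^\T \hat\tau_b$. Consequently, the joint limiting law of $(\sqrt N(\htn - \tau), \sqrt N\,\hat\tau_a^\T)^\T$ is the pushforward of that of $(\sqrt N(\htn - \tau), \sqrt N\,\hat\tau_b^\T)^\T$ under the linear map $(t,u) \mapsto (t, M^\T u)$. The required joint asymptotic normality of $(\htn, \hat\tau_b^\T)^\T$, and then of $(\htn, \hat\tau_a^\T)^\T$ via this map, follows from the finite-population central limit theorem \citep{DingCLT} under Condition~\ref{asym_basic} taken for $b_i$.

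It remains to show that replacing a vector of predictors by a linear image of itself cannot increase the squared multiple correlation with a given scalar. Writing $R_c^2 = \sup_{\beta}\Corr^2(\htn,\beta^\T\hat\tau_c)$ with the correlation computed in the joint limiting law, every competitor $\beta^\T\hat\tau_a = (M\beta)^\T\hat\tau_b$ is itself a linear combination of the components of $\hat\tau_b$; thus the supremum defining $R_a^2$ is over a sub-collection of those defining $R_b^2$, so $R_a^2 \le R_b^2$. Combining with the first paragraph, the asymptotic variance of $\htau_{\lin}^a$ equals $v_\neyman(1 - R_a^2) \ge v_\neyman(1 - R_b^2)$, the asymptotic variance of $\htau_{\lin}^b$.

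The one point that deserves care, and the step I expect to be the main (though still routine) obstacle, is that the limiting covariance matrices of $\hat\tau_a$ and $\hat\tau_b$ may be singular --- for example when $a_i$ or $b_i$ carries a constant component or otherwise collinear entries --- so that $R_c^2$ must be read through the best-linear-predictor (equivalently, Moore--Penrose pseudoinverse) formulation of the squared multiple correlation rather than through an explicit matrix inverse. This is consistent with $\htau_\lin$ being invariant to nonsingular reparametrizations of its regressors, so that redundant covariates are harmless; with this reading the three steps above go through verbatim.
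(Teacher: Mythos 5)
Your proposal is correct and follows essentially the route the paper intends: it invokes the \citet{LD20} representation $v_\lin = v_\neyman(1-R^2)$ quoted immediately before the lemma, notes that column-space containment gives $a_i = M^\T b_i$ and hence $\hat\tau_a = M^\T\hat\tau_b$, and concludes by monotonicity of the squared multiple correlation under linear maps of the predictors. The remark about handling singular limiting covariances via the best-linear-predictor (pseudoinverse) reading of $R^2$ is the right way to deal with redundant covariates and matches the paper's convention of dropping collinear regressors.
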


Lemma \ref{lem:eff_a} reduces the comparison of asymptotic efficiency of two \cite{Lin13}'s estimators to that of  the column spaces of the respective covariate matrix. It affords the basis for comparing the asymptotic efficiency of various estimators; see Section \ref{sec:asym}. 
In finite samples, however, adjusting for more covariates may increase both bias and variance. This motivates us to consider practical estimators suitable for moderate sample sizes; see Section \ref{sec:ext}.

\subsection{Missing data and running examples}
The proposals by \cite{Fisher35} and \cite{Lin13} assume the $J$ covariates 
are fully observed for all $N$ units. 
Of interest is how to adapt when some covariates are only partially available. 

Let $M_i = (M_{i1}, \ldots, M_{iJ})^\T \in \{0,1\}^J$ be the missingness indicators for unit $i$ with $M_{ij} = 1$ if $x_{ij}$ is missing and $M_{ij} = 0$ if otherwise. 
The possible values of $M_i$ defines $2^J$ possible  missingness patterns, indexed by $m = (m_1, \ldots, m_J)^\T\in \{0,1\}^J$.
Not all $2^J$ missingness patterns need to be present in any given data set. 
Let $N_{(m)} =      N_{(m_1  \ldots m_J)}  = \sumi 1(M_i = m)$  be the number of units with missingness pattern $m = (m_1, \ldots, m_J)^\T $, and let $\mm = \{m: N_{(m)} > 0\}$ be the set of missingness patterns that are present in the data set. We use parentheses in the subscript  of $  N_{(m)} $ to differentiate it from the treatment group sizes, $(N_0, N_1)$.

We use the following running examples for illustrating the key concepts throughout the text. 
For simplicity, we use ``obs'' and ``mis'' to denote whether a covariate is observed or missing.

\begin{example}\label{ex:1}
Consider the case with $J=1$ covariate, $x_i = x_{i1}$, for $i = \ot{N}$. 
The missingness indicators satisfy $M_i = M_{i1}  \in   \{0, 1\}$, and suggest two possible missingness patterns, $m \in \{0, 1\}$, with $N_{(0)} = \sumi 1(M_i = 0) = N - \sumi M_i $ and $N_{(1)} = \sumi 1(M_i = 1) =  \sumi M_i $: 
\begin{center}
\begin{tabular}{|c|  c|c |}\hline
missingness pattern $(M_i)$ & $x_{i}$     & \multicolumn{1}{c|}{number of units}\\\hline
0  & obs   & $N_{(0)}  $ \\\hline
1  & mis   & $N_{(1)} $\\\hline
\end{tabular}
\end{center}
\end{example}

\begin{example}\label{ex:2}
Consider the case with $J=2$ covariates, $x_i = (x_{i1}, x_{i2})^\T$, for $i = \ot{N}$.
The missingness indicators satisfy $M_i = (M_{i1}, M_{i2})^\T  \in   \{0, 1\}^2$, and suggest $2^2 = 4$ possible missingness patterns, $m = (m_1, m_2)^\T \in \{0, 1\}^2$, with $ N_{(m)}  = \sumi 1(M_{i1} = m_1, M_{i2} = m_2)$: 
\begin{center}
\begin{tabular}{|c|cc |c|}\hline
missingness pattern ($M_i$) & $x_{i1}$ & $x_{i2}$   & number of units\\\hline
$(0,0)^\T$ & obs & obs  & $N_{(00)}$ \\\hline
$(0,1)^\T$ & obs& mis     & $N_{(01)}$ \\\hline
$(1,0)^\T$ & mis& obs      & $N_{(10)}$\\\hline
$(1,1)^\T$  & mis& mis    & $N_{(11)}$\\\hline
\end{tabular}
\end{center}
\end{example}

Depending on the specific application under consideration, the missingness may or may not depend on the treatment assignment.  
To simplify the presentation, we start with the case where missingness is unaffected by the treatment assignment in Sections \ref{sec::6 strategies} and \ref{sec:asym}, and then extend to the case where $M_i$ can be dependent on $Z_i$ in Section \ref{sec:ext}. 
Let $\tm_i(z) = (M_{i1}(z), \ldots, M_{iJ}(z) )^\T$ be the potential value of $M_i$ if unit $i$ were assigned to treatment $z$.
Condition \ref{cond:a} below formalizes the notion of treatment-independent missingness in terms of these potential values. 

\begin{condition}\label{cond:a}
$ \tm_{i}(0) = \tm_i(1) = M_i$ for all $i = \ot{N}$.
\end{condition}

Condition \ref{cond:a} states that the potential missingness does not depend on the treatment assignment such that the $M_i$'s are effectively a set of fully observed pretreatment covariates unaffected by $Z_i$'s. 
If the missingness happens before the treatment assignment,  it cannot be affected by the treatment and Condition \ref{cond:a} holds automatically. 
If the covariates are collected retrospectively after the experiment, the missingness indicators may be affected by the treatment and Condition  \ref{cond:a} may be violated. 
The former case is arguably more common in randomized experiments \citep{white2005, carpenter2007missing, sullivan2018should}. It is thus our focus.

\section{Five strategies for handling missing covariates}\label{sec::6 strategies}

We focus on {\ntot}  strategies for handling missing covariates. The simplest strategies are the complete-case analysis that utilizes units with completely observed covariates and the complete-covariate analysis that utilizes covariates completely observed for all units. 
We then review the single imputation method that first fills in the missing covariates with some values and then proceeds with the standard complete-data analysis with the imputed data. 
When the missingness indicators act as pretreatment covariates, it is also natural to include them directly in {\ols}, motivating the  missingness-indicator method. 
We end this section by proposing the missingness-pattern method as an  alternative to the missingness-indicator method that ensures additional asymptotic efficiency.

\subsection{Complete-case analysis}\label{sec:cc}
Most standard software routines adopt the \emph{complete-case analysis} as default by dropping all units with any missingness in covariates. 
Let $C_i  =  1(M_i = 0_J)$ be the complete-case indicator for unit $i$, with $C_i = 1$ if and only if the $J$-vector $x_i$ is fully observed. 
The  \emph{complete-case analysis}  uses only the $N^\cc = \sumi C_i$ complete cases, indexed by $ \{i: C_i = 1\}$.
Let $\bxc =  \ncinv \sum_{i: C_i = 1} x_{i}  $  be the average of $x_i$'s over the complete cases. 
The resulting analysis fits
\beginy\label{eq:cc_fisher}
&&Y_i \sim 1+Z_i +x_i,\\
\label{eq:cc_lin}
&& Y_i \sim 1 + Z_i  +  (x_i- \bxc  )  + Z_i  (x_i- \bxc )
\endy
over $\{i: C_i = 1\}$ 
under the additive and fully interacted specifications, respectively, 
and uses  the coefficients of $Z_i$, denoted by $\htau_{\fisher}^\cc $ and $\htlc$,  respectively, to estimate $\tau$.

In Example \ref{ex:1} with $J = 1$ covariate, we have $C_i = 1 - M_i$ for $i = \ot{N}$, and the complete-case analysis uses units with $M_i=0$. In Example \ref{ex:2} with $J=2$ covariates, we have $C_i = (1-M_{i1})(1-M_{i2})$ for $i = \ot{N}$, and the complete-case analysis uses units with $M_{i1}=M_{i2}=0$.

\subsection{Complete-covariate analysis}

Another straightforward option is the \emph{complete-covariate  analysis}  that omits any covariates that are not completely observed for all units. Denote by $\mj = \{j: M_{ij} = 0 \ \text{for} \  i =\ot{N}  \}$ the set of complete covariates, and let $\xicv = (x_{ij})_{j\in\mj}$ be the subvector of $x_i$ corresponding to the covariates in $\mj$. 
The complete-covariate analysis fits 
\beginy\label{eq:cv_fisher}
&&Y_i  \sim  1+Z_i +\xicv,\\
\label{eq:cv_lin}
&&Y_i \sim  1 + Z_i + (\xicv - \bxcv) + Z_i(\xicv - \bxcv)  
\endy 
over $i = \ot{N}$ under the additive and fully interacted specifications, respectively,  and uses the coefficients of $Z_i$, denoted by $\htfcv$ and $\htlcv$, respectively, to estimate $\tau$. 
In the case of $\mathcal J = \emptyset$, both \eqref{eq:cv_fisher} and \eqref{eq:cv_lin} reduce to $Y_i \sim 1+Z_i$ with $\htfcv = \htlcv = \htn$.

\subsection{Single imputation}\label{sec::imputation}
The \emph{single imputation} strategy imputes the missing covariates based on the observed data, and analyzes the imputed data by standard methods. It allows for the inclusion of all cases and covariates in the analysis.

Consider a covariate-wise imputation that imputes all missing $x_{ij}$'s along the $j$th dimension by some prespecified $c_j$ that may depend on the observed data.  
Common choices include $c_j = 0$ and the covariate-wise observed average $c_j = \hx_j^\obs = \sumi (1- M_{ij}) x_{ij} / \sumi (1- M_{ij})$.
Denote by $  \xis(c)    = (x^\imp_{i1}(c_1), \ldots, x^\imp_{iJ}(c_J))^\T$ the resulting imputed covariates with $c = (c_j)_{j=1}^J$ and $x^\imp_{ij}(c_j) = (1-M_{ij}) x_{ij} + M_{ij} c_j $. 
We can proceed with fitting
\beginy
\label{eq:im_f}
&&Y_i \sim 1+Z_i + \xis(c), \\
\label{eq:im_l} 
&&Y_i \sim  1 + Z_i +  \cxis + Z_i \cxis  
\endy
over $i = \ot{N}$, 
respectively, and estimate $\tau$ by the coefficients of $Z_i$, denoted by  $ \htfim(c)$ and $\htlim(c)$, respectively.

The above covariate-wise imputation enforces identical imputation value  for all missing values along the same covariate, and could thus appear quite restrictive at first glance. 
Other common choices for single imputation include the treatment-specific sample means of the observed covariates \citep{schemper1990}  and other conditional sample means of the observed covariates based on either only the observed covariates or both the observed covariates and outcomes  \citep{little}. 
We will nevertheless focus on the simple covariate-wise imputation in this text due to its sufficiency for randomized experiments; see \citet{sullivan2018should}, \citet{kayembe2020imputation}, and \citet{reiter} for numerical evidence on the insensitivity of standard analyses to imputation methods in randomized experiments.  We leave the theory on more sophisticated imputation methods to future work.
Moreover, we will show that the robust standard errors from {\ols} are convenient approximations to the true standard errors without any adjustment.  Therefore, we also omit the discussion of multiple imputation as a tool to assess imputation uncertainty in general scenarios \citep{rubin1987multiple}.

\subsection{Missingness-indicator method}\label{sec::mim}

The \emph{missingness-indicator method}  augments \eqref{eq:im_f} and \eqref{eq:im_l} under single imputation by also including $M_i$ as additional regressors to account for the missingness information.
In particular, we first impute the missing $x_{ij}$'s by covariate-specific $c_j$'s $(j=1,\ldots, J)$, and then fit
\beginy
&&Y_i  \sim  1 + Z_i + x_i^\imp(c) + M_i, \label{eq:mi_f}\\
&&Y_i  \sim  1 + Z_i + (x_i^\imp(c) - \bar x^\imp(c))  +  (M_i - \bar M)    + Z_i  (x_i^\imp(c) - \bar x^\imp(c)) + Z_i (M_i - \bar M) \label{eq:mi_l}
\endy
over $i = \ot{N}$ to construct the regression estimators as  the  coefficients of $Z_i$, denoted by $\hat\tau^\mim_\fisher(c)$ and $\hat\tau^\mim_\lin(c)$, respectively. This is equivalent to running the additive and fully interacted regressions based on the augmented covariate vector $x_i^\mim(c) = (  x_i^\imp(c) ^\T, M_i^\T)^\T \in \mathbb R^{2J}$. 

Strictly speaking, $M_{ij} = 0$ for all $i = \ot{N}$ for $j\in\mj$ such that we need to include only $M_{ij}$ for the incomplete covariates. 
In addition, in case $M_{ij} = M_{ij'}$ for all $i = \ot{N}$  for some $j \neq j'$, we need to include only one of them to avoid collinearity. 
Acknowledging the need to adjust for these complications on a case-by-case basis, we will use $M_i$ to represent the vector of missingness indicators added to the model after appropriate adjustment for notational simplicity. This causes little confusion because standard software packages for {\ols} automatically drop redundant regressors. 

As it turns out, the resulting point estimators  $\hts^\mim(c) \ (\mds)$  and their associated robust standard errors $\hse_\md^\mim(c)  \ (\mds)$ are invariant to the choice of the imputation vector $c$. This is a numeric merit due to the inclusion of the missingness indicators, and allows us to construct $\hts^\mim(c)$ by simply imputing all missing covariates as 0. We formalize the intuition in Lemma \ref{lem:invar_mim} below. 

Let  $x^0_i = x_i^\imp(0_J)$  be the imputed covariate vector where we fill in all missing $x_{ij}$'s with 0.  
Let $\hts^\mim  = \hts^\mim(0_J)$ and $\hses^\mim = \hses^\mim(0_J) $  be the regression estimators and associated robust standard errors from
\beginy
\label{eq:mi_f_0}\
&& Y_i   \sim   1 + Z_i + x_i^0 + M_i \\
\label{eq:mi_l_0}
&& Y_i   \sim    1 + Z_i + (x_i^0- \bxz ) + (M_i   - \bar M) + Z_i (x_i^0- \bxz ) + Z_i (M_i   - \bar M) 
\endy
over $i = \ot{N}$, respectively, for $\mds$.
These are effectively regression adjustments with covariates $x_i^\mim  = ( (x_i^0)^\T, M_i^\T)^\T$.

\begin{lemma}\label{lem:invar_mim}
$\hts^\mim(c) = \hts^\mim$ and $\hse_\md^\mim(c) =  \hses^\mim$ for $\mds$ for all $c\in\mathbb R^J$. 
\end{lemma}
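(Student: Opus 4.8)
The plan is to reduce everything to one algebraic identity. Writing $\diag(c)$ for the $J\times J$ diagonal matrix with diagonal entries $c_1,\ldots,c_J$, note that $x^\imp_{ij}(c_j)=(1-M_{ij})x_{ij}+M_{ij}c_j = x^0_{ij}+c_jM_{ij}$, so $x_i^\imp(c)=x_i^0+\diag(c)M_i$ and, after centering, $x_i^\imp(c)-\bar x^\imp(c)=(x_i^0-\bxz)+\diag(c)(M_i-\bar M)$, with the analogous identity $Z_i\{x_i^\imp(c)-\bar x^\imp(c)\}=Z_i(x_i^0-\bxz)+\diag(c)\,Z_i(M_i-\bar M)$ for the treatment interactions — the last holding because the indicators are centered by $\bar M$, which does not involve $c$. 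So I would first record this: replacing the imputation vector $0_J$ by $c$ amounts to adding, to the $j$th imputed-covariate regressor, the multiple $c_j$ of the $j$th missingness-indicator regressor, and likewise for the treatment interactions.

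Next I would observe that, for both $\md=\fisher$ and $\md=\lin$, this shows the design matrix of the regression with imputation $c$ is obtained from that with imputation $0$ by an invertible column operation that leaves the intercept column, the $Z_i$ column, and the missingness-indicator columns untouched. I would then extract two consequences: (i) the two design matrices have the same column space; and (ii) so do the submatrices consisting of all regressors other than the $Z_i$ column. By the \fwl, $\hts^\mim(c)$ equals the slope of the {\ols} fit of the residual of $Y_i$ on the residual of $Z_i$, both residuals taken after partialling out all regressors other than $Z_i$; by (ii) these residual vectors do not depend on $c$, whence $\hts^\mim(c)=\hts^\mim$, and by (i) the full-model residuals $\hep_i$ do not depend on $c$ either.

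For the robust standard error I would use the parallel fact that the Eicker--Huber--White standard error of the $Z_i$-coefficient equals $\big(\sumi \tilde Z_i^2\hep_i^2\big)^{1/2}\big/\sumi\tilde Z_i^2$, where $\tilde Z_i$ is the residual of $Z_i$ on the remaining regressors and $\hep_i$ the full-model residual (HC0 convention; any degrees-of-freedom or leverage-based correction is a function of the hat matrix of the full design, hence of its column space). Since $\tilde Z_i$ and $\hep_i$ are invariant, so is $\hse_\md^\mim(c)$, giving $\hse_\md^\mim(c)=\hses^\mim$. I would dispose of the bookkeeping caveat — some $M_{ij}$ identically $0$ for complete covariates, or coinciding across $j$ — by replacing $M_i$ throughout with a maximal linearly independent subvector of its entries, which changes none of the column spaces or fitted quantities above while the reparametrization identity still holds within that subvector.

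The one point to get right, and the only genuine obstacle, is that the claim is about the \emph{particular} coefficient of $Z_i$ and its robust standard error, not merely about the fitted values: equality of column spaces alone would only yield invariance of fitted values and residuals. The Frisch--Waugh--Lovell route settles this precisely because the column operation relating the two design matrices fixes the $Z_i$ column. If one prefers a direct computation, the equivalent route is to write the relation as $W_\md(c)=W_\md(0)\,T$ with $T$ unit lower-triangular and with its $Z_i$-indexed row and column equal to the corresponding coordinate vector; then $\hb_\md(c)=T^{-1}\hb_\md(0)$ and the sandwich variance estimator transforms as $T^{-1}(\,\cdot\,)T^{-\T}$, and the $Z_i$-entries agree because $T^{-1}$ inherits that coordinate row and column from $T$.
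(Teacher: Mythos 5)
Your proposal is correct, and it rests on the same key observation as the paper's proof: writing $x_i^\imp(c) = x_i^0 + D_c M_i$ with $D_c = \diag(c_1,\ldots,c_J)$, the design matrix under imputation $c$ equals the design matrix under imputation $0_J$ post-multiplied by an invertible matrix that acts as the identity on the intercept and $Z_i$ columns (and likewise for the centered, fully interacted specification). Where you diverge is in how you extract the conclusion. The paper invokes its lemma on the behavior of {\ols} under reparametrization --- if $X_2 = X_1\Gamma$ then $\hb_1 = \Gamma\hb_2$ and $\hat\Psi_1 = \Gamma\hat\Psi_2\Gamma^\T$ --- and reads off the $Z_i$-entry of the coefficient vector and of the robust covariance directly, since the transformation is block-diagonal with the identity on the $(1,Z_i)$ block. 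You instead go through the {\fwl} together with the residualized representation of the Eicker--Huber--White variance, arguing that the span of the non-$Z_i$ regressors, and hence the residualized treatment $\tilde Z_i$ and the full-model residuals $\hat\epsilon_i$, do not depend on $c$. Both are valid, and your closing paragraph in fact reproduces the paper's route as the ``direct computation.'' Your version makes explicit a point the paper leaves implicit --- that column-space invariance alone only controls fitted values, so one must track the specific $Z_i$ coordinate through the transformation --- and your remark that leverage- or degrees-of-freedom-based corrections to the sandwich depend only on the hat matrix covers all the usual HC variants. Your handling of degenerate indicators (identically zero or duplicated $M_{ij}$'s) by passing to a maximal linearly independent subvector matches the paper's convention that redundant regressors are dropped; the only detail worth stating is that $D_cM_i$ still lies in the span of the retained indicator columns, which holds because those columns span all of $\{M_{\cdot j}\}_{j=1}^{J}$, so the column operation remains well defined after the reduction.
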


\cite{cohen} hinted at the invariance of $\htf^\mim(c)$ to the choice of $c$. Lemma \ref{lem:invar_mim} formalizes the results for both $\htf^\mim(c)$ and $\htl^\mim(c)$ as well as their corresponding robust standard errors $\hse_\fisher^\mim(c)$ and $\hse_\lin^\mim(c)$.

\subsection{Missingness-pattern method}\label{sec:mp}

The missingness-indicator method factors in information in missingness by including $M_i$ as additional regressors in {\ols}. 
We propose the \emph{missingness-pattern method} that goes one step further  and performs one separate analysis for  each missingness pattern based on all available covariates. 
Let $\rho_{(m)} = N_{(m)}/ N$ denote the proportion of units with missingness pattern $m \in   \{0,1\}^J$. 
We use Examples \ref{ex1:mp} and \ref{ex2:mp} (continued) below with $J=1$ and $J=2$ to illustrate the basic idea, and then formalize the method for general $J$.

\setcounter{example}{0}
\begin{example}[continued]\label{ex1:mp}
Consider the case of $J = 1$ covariate and two missingness patterns.
We can fit one additive regression for each missingness pattern to obtain the coefficients of $Z_i$ as follows:
\begine[(i)]
\item  regress $Y_i$ on $(1, Z_i,  x_i)$ over $\{i: M_i = 0\}$ to obtain   $\htau_{\fisher, (0)}$; 
\item regress $Y_i$ on $(1, Z_i)$ over $\{i: M_i = 1\}$ to obtain   $\htau_{\fisher, (1)} =  \htau_{\neyman,(1)}$. 
\ende
The weighted average $\htau_{\fisher}^\mp  = \rho_{(0)} \htau_{\fisher, (0)} + \rho_{(1)} \htau_{\fisher, (1)}$  gives an estimator for $\tau$.
Analogously, we can also obtain $\htau_{\lin}^\mp$ by running one fully interacted regression for each missingness pattern. 
\end{example}

\begin{example}[continued]\label{ex2:mp}
Consider the case of $J=2$ covariates, $x_i = (x_{i1}, x_{i2})^\T$, and four missingness patterns.
We can fit one additive regression for each missingness pattern to obtain the coefficients of $Z_i$ as follows:
\begine[(i)]
\item regress $Y_i$ on $(1, Z_i,  x_{i1} , x_{i2})$ over $\{i: M_i = (0, 0)^\T\}$  to obtain  $\htau_{\fisher, (0,0)}$; 
\item regress $Y_i$ on $(1, Z_i, x_{i1})$ over $\{i: M_i = (0, 1)^\T\}$ to obtain   $\htau_{\fisher, (0, 1)}$; 
\item regress $Y_i$ on $(1, Z_i, x_{i2})$ over $\{i: M_i = (1,0)^\T\}$ to obtain  $\htau_{\fisher, (1,0)}$; 
\item regress $Y_i$ on $(1, Z_i)$ over $\{i: M_i = (1, 1)^\T\}$ to obtain   $\htau_{\fisher, (1, 1)} = \htau_{\neyman,(1,1)}$.
\ende
The weighted average $\htau_{\fisher}^\mp  = \rho_{(0,0)} \htau_{\fisher,(0,0)} + \rho_{(0,1)} \htau_{\fisher,(0,1)} + \rho_{(1,0)} \htau_{\fisher,(1,0)}+ \rho_{(1,1)} \htau_{\fisher, (1,1)}$ gives an estimator of $\tau$.  Analogously, we can also obtain $\htau_{\lin}^\mp$ by running one fully interacted regression for each missingness pattern. 
\end{example}

Extensions to general $J$ are immediate.
Let $\xmp_i   = (x_{ij})_{j: M_{ij} = 0}$ be the vector of available covariates for unit $i$.
In the above Example \ref{ex2:mp} (continued), we have 
(i) $\xmp_i  = x_i  = (x_{i1}, x_{i2})^\T$ for units with $M_i = (0, 0)^\T$; 
(ii) $\xmp_i  = x_{i1} $ for units with $M_i = (0, 1)^\T$;
(iii) $\xmp_i  = x_{i2} $ for units with $M_i = (1, 0)^\T$;
and 
(iv) $\xmp_i  = \emptyset$ for units with $M_i = (1, 1)^\T$. 
For units with missingness pattern $m$, the missingness-pattern method adjusts for the available covariates by fitting
\beginy
&&
Y_i \sim 1 + Z_i + \xmp_i, \label{eq:mp_f_m}\\
&& Y_i \sim 1 + Z_i + (\xmp_i - \bar x^\mp_{(m)}) + Z_i  (\xmp_i - \bar x^\mp_{(m)})\label{eq:mp_l_m} \qquad\
\endy
over $\{i: M_i = m\}$, where $\bar x^\mp_{(m)} = N_{(m)}^{-1}\sum_{i: M_i = m} \xmp_i$. 
Let $\htfm $ and $\htlm $ be the coefficients of $Z_i$ from \eqref{eq:mp_f_m} and \eqref{eq:mp_l_m}, respectively, with $\htau_{\fisher, (1_J)} = \htau_{\lin, (1_J)} = \htau_{\neyman, (1_J)}$ equaling the coefficient of $Z_i$ from $Y_i \sim 1+Z_i$ over $\{i: M_i = 1_J\}$ for $m = 1_J$. 
The weighted averages 
\beginy\label{eq:htau_mp}
\hts  ^\mp  = \sum_{m \in \mmp } \rho_{(m)} \htsm  \qquad (\mds) 
\endy
then give two covariate-adjusted estimators of  $\tau$. 

The missingness-pattern method differentiates between all $|\mathcal M|$ missingness patterns like the missingness-indicator method, yet does so by using $|\mathcal M|$ missingness-pattern-specific {\ols} fits.
It can be seen as a hybrid of the complete-case and complete-covariate analyses,  factoring in all available covariates without the need of imputation or augmentation.
With $\htau_{\md,(0_J)}$ coinciding with $\hts^\cc$ from the complete-case analysis, it can also be seen as an ensemble variant of $\hts^\cc$, averaging over estimators from not only the complete cases but also other missingness patterns as well. 

The idea of missingness-pattern-specific analysis dates back to \citet{wilks1932moments}, \citet{matthai1951estimation}, and \citet[Appendix B]{rr1984}, yet its use for analyzing experiments with missing covariates remains mostly unexploited to the best of our knowledge.
A key intuition is that the missingness pattern acts as a discrete pretreatment covariate, and thus allows for post-stratified estimators by averaging over estimators within missingness patterns. 
\citet{luke} demonstrated the asymptotic efficiency gain of post-stratification based on the simple stratum-specific differences in means without adjusting for additional covariates. The $\htl ^\mp$ in \eqref{eq:htau_mp} averages over regression-adjusted estimators within missingness patterns, and promises additional large-sample efficiency  over the missingness indicator method by allowing heterogeneous adjustments across different missingness patterns. 
We quantify the intuition in Section \ref{sec:asym}.

Despite the desired gain in large-sample efficiency, the missingness pattern method can be demanding on the missingness-pattern-specific sample sizes in finite samples even with a moderate $J$. Denote by $J_{(m)} = \sum_{j=1}^J (1-m_j)$ the number of available covariates under missingness pattern $m$. The pattern-specific additive estimator $\htfm $ is well defined  only if $N_{(m)} \geq J_{(m)} +2$; 
the pattern-specific fully interacted estimator $\htlm $ is well defined  only if  $\min\{ N_{(m), 0}, N_{(m), 1} \} \geq J_{(m)} + 1 $, where $N_{(m), z}$ denotes the number of units with missingness pattern $m$ that receive treatment $z \in \{ 0,1\}$. 
When some $\htsm $'s are not well-defined due to these sample size constraints, we can replace them by $\htnm$ as the difference in means within missingness pattern $m$,  and construct the final estimators by averaging over a mixture of adjusted and unadjusted pattern-specific estimators. 
Alternatively, we can collapse small similar missingness patterns and form regression estimators based on coarsened missingness patterns and imputed covariates. A related idea has been exploited by  \citet{pashley2021insights} for variance estimation in finely stratified experiments. 
Nevertheless, we  will focus on $\htau_{\fisher}^\mp $ and $\htau_{\lin}^\mp $ for simplicity and leave the more complex estimators to future work. 
In case some $\htsm$'s are not well-defined due to the sample size constraints, we recommend going back to the missingness-indicator method   to ensure finite-sample feasibility. 

Last but not least, 
performing $|\mm|$ missingness-pattern-specific regressions could be cumbersome in practice when $|\mm|$ is large. 
As it turns out, regression adjustment with $x_i^\imp(c), M_{i1}, \ldots, M_{iJ},$ and all their interactions recovers $\{\htf^\mp, (\hse_\fisher^\mp)^2\}$ and $\{\htl^\mp, (\hse_\lin^\mp)^2\}$ via one aggregated {\ols} fit each.  Section \ref{sec:asym} gives more details. 

\subsection{Summary of the covariate-adjusted regression estimators}

Sections \ref{sec:cc}--\ref{sec:mp} give a total of {\ntotest}  covariate-adjusted regression estimators, $\htss$, as the combinations of {\ntot}  missing-data strategies, $\dgsi$,  and two model specifications, $\mds$. Table \ref{tb:strategies} summarizes them. 
For notational simplicity, we suppress the suffix ``$(c)$" in $\{\hts^\imp(c), x^\imp_i(c)\}$   when using $\htss$ and $x_i^\dg$  to represent the union of the {\ntotest}  estimators and their corresponding covariate vectors.

Of interest is their respective validity and efficiency for inferring $\tau$.
We address this question in Section \ref{sec:asym}. 
A key observation is that $\emptyset \subseteq x_i^\ccov \subseteq x_i^\imp(c) \subseteq x_i^\mim(c)$  such that the model specifications under the complete-covariate analysis and single imputation can be seen as restricted variants of those under the missingness-indicator method. 
By Lemma \ref{lem:eff_a}, this elucidates the efficiency of $\htl^\mim$ over $\htl^\imp(c)$, $\htl^\ccov$, and $\htn$ if the $x_i^\dg$'s act as standard covariates satisfying Condition \ref{asym_basic}. We formalize the intuition in Section \ref{sec:asym}.

\begin{table}[t]\caption{\label{tb:strategies} 
 Ten estimators $\htss$ under {\ntot}  missing-covariate strategies,  $\dgsi$, and two model specifications, $\mds$.
The complete-case analysis uses units with $ C_i = 1$, and other methods uses all units.
}
\begin{center}
\begin{tabular}{|l|l|l|}\hline
$\htss$  & \multicolumn{1}{|c|}{missing-covariate strategy}    & \multicolumn{1}{c|}{covariates  for regressions}\\\hline
$\hts ^\cc $ & use complete cases and all covariates & $  x_i $     \\\hline
$\hts ^\cv $ & use all units and complete covariates   & $ x_i^\cv = (x_{ij})_{j \in \mj}$    \\\hline
$\hts ^\im (c) $ &impute the missing $x_{ij}$'s with  $c_j$;    & $  x_i^\imp(c)$ with\\
&  run regressions with all imputed covariates & $x^\im  _{ij} (c_j)= (1-M_{ij}) x_{ij} + M_{ij}c_j $ \\
\hline 
$\hts ^\mi $ & impute the missing $x_{ij}$'s with $0$;   & $x_i^\mim =( (x_i^0)^\T, M_i^\T)^\T$, where  \\
&  augment the regressions with $M_{ij}$'s   & $x_i^0 = x_{i}^\imp(0_J)$ \\  \hline
$\hts ^\mp$ & run missingness-pattern-specific {\ols}    & $\xmp_i   = (x_{ij})_{j: M_{ij} = 0} $
\\ \hline
  \end{tabular}
\end{center}
\end{table}

\section{Design-based theory}\label{sec:asym}

 We quantify in this section the design-based properties of the estimators in Table \ref{tb:strategies}. 
In particular, the regression-based covariate adjustment delivers not only point estimators but also their associated robust standard errors, denoted by $\hses^\dg$. 
 We focus on the validity of $(\htss, \hses^\dg)$ for the large-sample Wald-type inference of $\tau$, which concerns the construction of confidence intervals based on the consistency and asymptotic normality of $\htss  $ and the asymptotic conservativeness of $\hses^\dg$ for estimating the true standard error.  Although $\htss$ and $\hses^\dg$ are originally motivated by linear models, we will show that their theoretical guarantees hold under the design-based framework irrespective of whether the corresponding linear models are correctly specified or not.

We focus on individual strategies in Sections \ref{sec:cc_asym}--\ref{sec:mp_asym}, and then unify them under a hierarchy of model specifications with increasing complexity in Section \ref{sec:unify}. 
In a nutshell, we do not recommend  $\hts ^\cc \ (\mds)$ due to their inconsistency without a strong additional assumption. 
We recommend  $\htl^\mi $ in general due to its simplicity, invariance to $c$, and efficiency over $\htn$, $\htf^\mim$, $\hts ^\cv $, and $\hts ^\im (c)$ for $\mds$. 
When the missingness-pattern-specific sample sizes permit, we recommend $\htl^\mp$ due to its additional gain in asymptotic efficiency.

With a slight redundancy of notation, 
let $A_{ij} = 1 -M_{ij} $ indicate the availability of $x_{ij}$ with  $A_i = (A_{i1}, \ldots, A_{iJ})^\T = 1_J - M_i$ and $A_i \circ  x_i  = (A_{i1}  x_{i1}, \ldots, A_{iJ}  x_{iJ})^\T=x^0_i$.
We need the following regularity conditions for asymptotic analysis.

\begin{condition}\label{asym_1} 
Assume Condition \ref{cond:a}. 
As $N \to \infty$, 
(i) $e_z  $ has a limit in $  (0,1)$ for $z = 0,1$,  
(ii) the first two moments of $\{Y_i(z), x_i, M_i, C_i, C_i Y_i(z),  C_i x_i, A_i \circ  x_i : z= 0,1 \}_{i=1}^N $  have finite limits, with  $ \bar C = \meani C_i$ having a limit in $(0,1]$, and 
(iii) there exists a  $u_0 < \infty$ independent of $N$ such that $\meani \|x_i\|_4^4 \leq u_0$ and $\meani Y_i^4(z) \leq u_0$ for $z = 0,1$.
\end{condition}

\subsection{Complete-case analysis}\label{sec:cc_asym}

We derive in this section the asymptotic properties of $\htsc \ (\mds)$ from \eqref{eq:cc_fisher} and \eqref{eq:cc_lin}. 
The result suggests $\htsc$ is in general not consistent for estimating $\tau$ unless the average treatment effect of the complete cases equals that of the incomplete cases asymptotically. 
This is a strong assumption, and can be problematic whenever the missingness is correlated with the potential outcomes. A common example in randomized clinical trials is that the more severely ill patients are more likely to have missing pretreatment covariates.
We thus do not recommend the complete-case analysis.

For the $\nc  = \sumi C_i = N \bar C $  complete cases with $C_i = 1$, let  $S_{xx}^\cc$ be the  finite-population covariance of the completely observed  covariates, $(x_i)_{i: C_i = 1}$;  
let $\byc(0)$ and $\byc(1)$ be the averages of the potential outcomes, $\{Y_i(z)\}_{i:C_i = 1}$, for $z = 0,1$, respectively, and let $\tau^\cc   = \by^\cc (1) - \by^\cc (0)$ be the average treatment effect. 
For the $N^\uc = N - \nc $ incomplete cases with $C_i = 0$, we can similarly define $\by^\uc(1)$, $\by^\uc(0)$, and $\tau^\uc$. 
By definition, we have $\bar Y(z) = \bar C \byc(z) + (1-\bar C)\byu(z)$ and $\tau =  \bc\tz  +  (1-\bc)\tau^\uc $. 
Further define
$S_{C, \tau} = N^{-1}\sumi (C_i - \bar C) (\tau_i  - \tau)$ as the finite-population covariance of $(C_i, \tau_i)_{i=1}^N$.
Under Condition \ref{asym_1}, they all have finite limits. We will also use the same symbols 
to denote their respective limits when no confusion would arise.

Condition \ref{cond:cc} below gives an intuitive quantification of the representativeness of the complete cases from the finite-population perspective. It imposes a strong restriction on the missingness mechanism. 

\begin{condition}\label{cond:cc}
Assume Condition \ref{cond:a}. 
As $N \to \infty$, we have $\tau^\cc = \tau$,  with two equivalent conditions being (i) $ (1-\bar C)(\tz - \tau^\uc) = 0$
or (ii) $S_{C, \tau} = 0$. 
\end{condition}

 Let $v_{\md}^\cc $ and $ \stt{\cc}$ be the analogs of $v_\md$ and $S^2_{\tau, \md}$ in Lemma \ref{lem::basic-x} defined over $\{Y_i(0), Y_i(1), x_i \}_{i: C_i = 1}$ for $\md =  \fisher, \lin$.

\begin{proposition}\label{prop:cc_1}
Assume complete randomization, Condition \ref{asym_1}, and the limit of $S_{xx}^\cc$  is  positive definite. We have $\sqrt {N^\cc}(\hts^\cc - \tau^\cc ) \rs \mn(0, v_{\md}^\cc  )$ 
with  $v_{\lin}^\cc \leq v_{\fisher}^\cc $. In addition,  $N^\cc(\hses^\cc)^2 -  v_\md ^\cc= \stt{\cc} + \op$, where $\stt{\cc} \geq 0$.
The $\htsc$ is consistent for $\tau$ if and only if Condition \ref{cond:cc} holds. 
\end{proposition}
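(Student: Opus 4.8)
The plan is to reduce the distributional and variance claims to Lemma~\ref{lem::basic-x} applied to the finite population of complete cases, and then to read off the consistency dichotomy from the decomposition $\tau = \bar C\,\tz + (1-\bar C)\tau^\uc$. First I would use Condition~\ref{cond:a} (part of Condition~\ref{asym_1}) to note that $C_i$ is pretreatment, so the index set $\{i: C_i = 1\}$ is a fixed subset of $\{1,\dots,N\}$ of cardinality $\nc = N\bar C$, unaffected by $Z$. By construction the fits \eqref{eq:cc_fisher}--\eqref{eq:cc_lin} are exactly the additive and fully interacted regressions of Section~\ref{sec:hts} run on $\{Y_i(0),Y_i(1),x_i\}_{i:C_i=1}$ with treatment vector $(Z_i)_{i:C_i=1}$, so $\htsc$ and $\hses^\cc$ coincide with the Neyman/Fisher/Lin estimators and their robust standard errors over this subpopulation. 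The only wrinkle is that $(Z_i)_{i:C_i=1}$ is the restriction of a complete randomization, so the realized treated count $N_1^\cc = \sumic Z_i$ is random; it is hypergeometric with mean $\nc e_1$ and variance $O(N)$, whence $N_1^\cc/\nc = e_1 + O_P(N^{-1/2}) \toinp e_1 \in (0,1)$, and conditionally on $N_1^\cc$ the vector $(Z_i)_{i:C_i=1}$ is a complete randomization on the complete-case subpopulation with group sizes $(N_0^\cc, N_1^\cc)$.

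Next I would verify that this subpopulation satisfies the analogue of Condition~\ref{asym_basic}: (i) $N_z^\cc/\nc \to e_z \in (0,1)$ along typical assignment sequences; (ii) its first two finite-population moments converge, since e.g.\ $\ncinv\sumic x_i = \bar C^{-1}\ninv\sumi C_i x_i$ with $\bar C$ having a limit in $(0,1]$ and the moments of $\{C_iY_i(z),C_ix_i,\dots\}$ converging by Condition~\ref{asym_1}(ii), while $\sxxc$ has a positive-definite limit by hypothesis; and (iii) $\ncinv\sumic\|x_i\|_4^4 \leq \bar C^{-1}u_0$ and likewise for $Y_i(z)$, bounded because $\bar C$ is bounded away from $0$. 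Then, conditionally on $N_1^\cc$ and along a typical assignment sequence, Lemma~\ref{lem::basic-x} applied to this subpopulation gives $\sqrt{\nc}(\htsc-\tz)\rs\mn(0,v_\md^\cc)$ with $v_\lin^\cc \leq v_\fisher^\cc$, and $\nc(\hses^\cc)^2 - v_\md^\cc = \stt{\cc} + \op$ with $\stt{\cc}\geq 0$. Since the limiting quantities do not depend on $N_1^\cc$ and $N_1^\cc/\nc\to e_1$, I would lift these conditional statements to unconditional ones via the usual reduction to treatment-assignment sequences with convergent group fractions (and dominated convergence for the variance expansion); this gives the first two assertions of the proposition.

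For the consistency dichotomy, the central limit theorem just obtained gives $\htsc - \tz = O_P(\nc^{-1/2}) = \op$ (here $\nc\to\infty$ because $\bar C$ is bounded away from $0$), and $\tz$ and $\tau$ both converge, so $\htsc - \tau \toinp \lim(\tz - \tau)$. Hence $\htsc$ is consistent for $\tau$ iff $\lim(\tz-\tau)=0$; and the identity $S_{C,\tau} = \ninv\big(\nc\,\tz - \bar C N\tau\big) = \bar C(\tz-\tau) = \bar C(1-\bar C)(\tz-\tau^\uc)$ shows this is exactly the content of Condition~\ref{cond:cc}, including the equivalence of its three formulations (using $\lim\bar C \in (0,1]$; in the boundary case $\lim\bar C = 1$ one checks from the uniform fourth-moment bound that $(1-\bar C)\tau^\uc\to 0$, so $\tz - \tau \to 0$ and Condition~\ref{cond:cc} both hold automatically).

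The step I expect to be the main obstacle is the second one: because the complete-case design is not literally a complete randomization, care is needed both to show that the realized treatment fraction among complete cases concentrates at $e_1$ and to pass from the conditional conclusions of Lemma~\ref{lem::basic-x} to unconditional ones, while checking that Condition~\ref{asym_1} does supply all of its hypotheses for the complete-case subpopulation --- the division by $\bar C$ being where $\lim\bar C > 0$ is used. The remaining items ($v_\lin^\cc \leq v_\fisher^\cc$, $\stt{\cc}\geq 0$, and the dichotomy) are then essentially immediate.
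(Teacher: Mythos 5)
Your proof is correct, but it takes a genuinely different route from the paper's. You condition on the realized treated count $N_1^\cc$ among the complete cases, note that the restricted assignment is then a complete randomization on the fixed subpopulation $\{i:C_i=1\}$, verify that this subpopulation inherits Condition \ref{asym_basic} from Condition \ref{asym_1} (the divisions by $\bar C$ being licensed by $\lim \bar C>0$), apply Lemma \ref{lem::basic-x} conditionally, and then de-condition using $N_1^\cc/N^\cc \toinp e_1$; the comparison $v_\lin^\cc\le v_\fisher^\cc$ and the sign of $\stt{\cc}$ come for free from Lemma \ref{lem::basic-x}. The paper instead stays entirely unconditional: it linearizes $\htsc-\tz$ as a difference of sample means of adjusted outcomes, introduces pseudo potential outcomes $U_i(z)=\{\bar C C_iY_i(z)-\bcy(z)C_i\}-(\bar C C_ix_i-\bcx C_i)^\T\gsc$ defined over all $N$ units so that the full-population CLT (Lemma \ref{lem:joint_dist}) plus Slutsky and the delta method deliver the limit directly, and then re-derives $v_\lin^\cc\le v_\fisher^\cc$ via the Schur product theorem. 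Your route is shorter and parallels the post-stratification argument the paper itself invokes for the missingness-pattern method; its one delicate point is the de-conditioning step, which requires the conditional CLT to hold uniformly over treated counts in a shrinking neighborhood of $N^\cc e_1$ (a standard subsequence argument, but it should be stated rather than gestured at — dominated convergence is not the right tool for the variance-estimator claim, a plain subsequence argument for convergence in probability is). What the paper's pseudo-outcome device buys in exchange for its extra algebra is that it extends verbatim to the setting of Proposition \ref{prop:cc_asym}, where Condition \ref{cond:a} fails and $\{i:C_i=1\}$ is no longer a fixed set, so your conditioning step would break down there. The consistency dichotomy is handled identically in both proofs, via $S_{C,\tau}=\bar C(\tz-\tau)$ and $\tau=\bar C\tz+(1-\bar C)\tau^\uc$ (for which the second-moment bound already suffices in your boundary case $\lim\bar C=1$).
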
 

Proposition \ref{prop:cc_1} shows that $\hts^\cc$ is consistent for $\tau$ if and only if the complete cases are representative of the whole population in the sense of Condition \ref{cond:cc}. 
Under Condition \ref{cond:cc}, we can use $(\hts^\cc, \hses^\cc)$ for the large-sample Wald-type inference for $\tau$. 
This is intuitive and is coherent with the results under the classic Gauss--Markov model  \citep{jones}. 
The equivalent condition $S_{C,\tau} = 0$, on the other hand, allows for connections between Condition \ref{cond:cc} and the notion of missing completely at random (\mcar) from the super-population framework \citep{rubin1976inference}.
Inspired by the requirement of missingness being independent of the potential outcomes under {\mcar}, 
we can quantify the dependence of missingness on the potential outcomes  under the design-based framework by the finite-population covariance of  $\{M_i, Y_i(z)\}_{i=1}^N$, denoted by $S_{M, Y(z)} = (N-1)^{-1}\sumi (M_i - \bar M) \{Y_i(z) - \bar Y(z)\}$, for $z = 0,1$, and use the condition ``$S_{M, Y(z)} = 0$ for $z=0,1$" as the finite-population analog for the independence between $M_i$ and $Y_i(z)$ under {\mcar}.
In the case of $J = 1$, 
we have $S_{C, \tau} =  S_{M, Y(0)} - S_{M, Y(1)}$ such that the condition ``$S_{M, Y(z)} = 0$ for $z = 0,1$" ensures Condition \ref{cond:cc} and thus
the consistency of $\hts^\cc$.

\subsection{Complete-covariate analysis}\label{sec:ccov_1}

Recall $\xicv = (x_{ij})_{j \in \mj}$ as the vector of complete covariates used by the complete-covariate analysis. The finite-population covariance of $(x_i^\ccov)_{i=1}^N$, denoted by  $S_{xx}^\ccov$,  
has a finite limit under Condition \ref{asym_1} as long as $\mathcal J$ converges to a limit as $N$ tends to infinity. 
Let $v_{\md}^\cv $ and $\stt{\cv} $ be the analogs of $v_\md$ and $S^2_{\tau, \md}$ in Lemma \ref{lem::basic-x} defined over $\{Y_i(0), Y_i(1), x_i^\cv\}_{i=1}^N$ for $\md =  \fisher, \lin$. 

\begin{proposition}\label{prop:ccov_1}
Assume complete randomization, Condition \ref{asym_1},   and the limit of $S_{xx}^\ccov$ is  positive definite with $\mathcal J$ converging to a limit. We have $\sqrt N(\htscv - \tau ) \rs \mn(0, v_{\md}^\cv  )$
with  $v_{\lin}^\cv \leq v_{\fisher}^\cv   $ and  $v_{\lin}^\cv \leq v_\neyman$. 
In addition,  $N(\hses^\cv)^2 -  v_\md ^\cv  = \stt{\cv}  + \op$, where $\stt{\cv}  \geq 0$. 
\end{proposition}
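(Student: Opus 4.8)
The plan is to reduce the statement to Lemma \ref{lem::basic-x}. The key observation is that $\xicv = (x_{ij})_{j\in\mathcal J}$ is a vector of pretreatment covariates observed for every one of the $N$ units (by the very definition of $\mathcal J$), so the fits \eqref{eq:cv_fisher} and \eqref{eq:cv_lin} are nothing but the additive and fully interacted {\ols} fits of Lemma \ref{lem::basic-x} with the covariate vector $x_i$ replaced by $\xicv$. It therefore suffices to check that the hypotheses of Lemma \ref{lem::basic-x} hold for $\{Y_i(0), Y_i(1), \xicv\}_{i=1}^N$, and then invoke that lemma.

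First I would note that the assumption that $\mathcal J$ converges to a limit guarantees that the dimension of $\xicv$ is eventually constant, equal to $|\mathcal J|$, so that the limiting quantities below are well defined; in the finite-population framework the populations for different $N$ need not be nested, so this is a genuine requirement. Next I would verify Condition \ref{asym_basic} for $\{Y_i(0), Y_i(1), \xicv\}_{i=1}^N$. Part (i) on $e_z$ is assumed verbatim. For part (ii), the first two moments of $\{Y_i(0), Y_i(1), \xicv\}_{i=1}^N$ have finite limits because $\xicv$ is a subvector of $x_i$, whose moments converge by Condition \ref{asym_1}(ii), while positive definiteness of the limit of $\sxx^\cv$ is assumed directly in the proposition. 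For part (iii), $\meani\|\xicv\|_4^4 \le \meani\|x_i\|_4^4 \le u_0$ since $\xicv$ is a subvector, and the bound on $\meani Y_i^4(z)$ is inherited from Condition \ref{asym_1}(iii).

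With Condition \ref{asym_basic} in force for $\xicv$, Lemma \ref{lem::basic-x} applied to this covariate vector gives $\sqrtn(\htscv - \tau) \rs \mn(0, v_\md^\cv)$ for $\mds$, the ordering $v_\lin^\cv \le v_\fisher^\cv$, and $N(\hses^\cv)^2 - v_\md^\cv = \stt{\cv} + \op$ with $\stt{\cv} \ge 0$. The remaining inequality $v_\lin^\cv \le v_\neyman$ is also immediate: $v_\neyman$ is defined without reference to any covariates, hence is unchanged whether one uses $x_i$ or $\xicv$, and Lemma \ref{lem::basic-x} (equivalently Lemma \ref{lem:eff_a} with an empty regressor matrix, for which $\htl = \htn$) delivers $v_\lin^\cv \le v_\neyman$. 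Finally I would dispose of the degenerate case $\mathcal J = \emptyset$ separately: there \eqref{eq:cv_fisher} and \eqref{eq:cv_lin} reduce to $Y_i\sim 1+Z_i$, so $\htfcv = \htlcv = \htn$ and $\hses^\cv$ is the robust standard error from that fit, and the conclusion is exactly the $\md = \neyman$ part of Lemma \ref{lem::basic-x}.

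The proof carries essentially no new probabilistic content; the only point that requires care — the ``hard part,'' such as it is — is the dependence of $\mathcal J$, and hence of $\xicv$ and its dimension, on $N$, which the convergence assumption on $\mathcal J$ is designed to neutralize. Once that is handled, Proposition \ref{prop:ccov_1} is a direct corollary of Lemma \ref{lem::basic-x}.
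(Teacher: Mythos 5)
Your proposal is correct and follows the same route as the paper: the complete covariates are fully observed pretreatment covariates, so one verifies Condition \ref{asym_basic} for the finite population $\{Y_i(0), Y_i(1), \xicv\}_{i=1}^N$ and invokes Lemma \ref{lem::basic-x} directly, with the comparison $v_\lin^\cv \leq v_\neyman$ coming from Lemma \ref{lem:eff_a} applied to $\emptyset \subseteq x_i^\cv$. Your additional care about the dependence of $\mathcal J$ on $N$ and the degenerate case $\mathcal J = \emptyset$ is a sensible elaboration of details the paper leaves implicit.
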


Proposition \ref{prop:ccov_1} justifies the large-sample Wald-type inference based on $(\hts^\ccov, \hses^\ccov)$ regardless of whether Condition \ref{cond:cc} holds or not. 
This, together with the efficiency of $\htl^\ccov$ over $\htf^\ccov$ and the unadjusted $\htn$,   follows from applying Lemma \ref{lem::basic-x}  to the finite population of $\{Y_i(0), Y_i(1), x_i^\ccov\}_{i=1}^N$, and illustrates the advantage of including all units in the analysis even at the cost of discarding all information in the incomplete covariates. 
Importantly, all theoretical guarantees hold even when the missingness is related to the missing covariates and unobservable potential outcomes, a scenario analogous to missing not at random  under the super-population framework \citep{rubin1976inference}.  

\cite{schemper1990} referred to the complete-covariate analysis as ``an even less justifiable method" than the complete-case  analysis.  Whereas their comment could be valid for observational studies when incomplete covariates include some key confounders, we give the opposite recommendation for randomization experiments. 
Intuitively, randomization precludes the possibility of confounding by enforcing independence between the treatment assignment and pretreatment covariates, ensuring valid simple comparisons even without covariate adjustment.
The exclusion of the incomplete covariates thus does not affect the validity of the complete-covariate analysis while allowing for additional efficiency over $\htn$ as long as one complete covariate is prognostic. 
We consider this as the baseline strategy for benchmarking the alternative strategies.

\subsection{Single imputation}\label{sec:imp}

Under single imputation, a subtle point is that the imputed covariates $\xis(c)$ may depend on the treatment indicators via the choice of $c$ and are thus not necessarily true covariates in the strict sense. 
We focus on imputations $c = (c_j)_{j=1}^J$ with finite probability limits $\ci = (\cji)_{j=1}^J$: 
$$
\mathcal C = \{c \in \mathbb{R}^J:  \plim c_j = \cji <\infty\ (j=1,\ldots, J) \text{ under complete randomization and Condition \ref{asym_1}} \}.
$$
The constant imputation with $c_j = 0$ for $j = \ot{J}$  is a special case with $\ci = 0_J$. 
The unconditional sample mean imputation with  $c_j = \hx_j^\obs$  is  a special case with $\cji  = \lim_{N\rightarrow \infty} \sum_{i=1}^N A_{ij} x_{ij}/  \sum_{i=1}^N A_{ij}$.

Let  $\sxximp  (c)$ be the finite-population covariance  of    $\{x_i^\imp  (c)\}_{i=1}^N$. 
Condition \ref{asym_1} ensures that $\sxximp  (c)$ has a finite limit for all $c \in \mathcal {C}$.  
Let $v_{\md}^\imp(\ci) $ and $\stt{\imp}(\ci)$ be the analogs of $v_\md$ and $S^2_{\tau, \md}$ in Lemma \ref{lem::basic-x} defined over $\{Y_i(0), Y_i(1), x_i^\imp(\ci)\}_{i=1}^N$  for $ \md =  \fisher, \lin$. 

\begin{proposition}\label{prop:imp_1}
Assume complete randomization, Condition \ref{asym_1}, and $c\in \mathcal C$ with the limit of $\sxximp (c)$ being positive definite.   
We have $\sqrt N\{\htsim(c) - \tau \} \rs \mn\{0, v_{\md}^\im (\cinf)\}  $
 with 
$v_{\lin}^\im (\cinf) \leq v_{\fisher}^\im (\cinf)$ and $v_{\lin}^\im (\cinf) \leq v_{\lin}^\ccov \leq v_\neyman$. 
In addition, 
$N \{\hse _{\md}^\im (c)\}^2 - v_{\md}^\im (\cinf) = \stt{\imp}(\cinf)+ \op$, where $\stt{\imp}(\cinf)\geq 0$. 
\end{proposition}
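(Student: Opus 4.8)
The plan is to reduce everything to the case where the imputation vector $c$ equals its deterministic probability limit $\cinf$, at which point $x_i^\imp(\cinf)$ is a bona fide pretreatment covariate vector --- a fixed affine image of the fixed quantities $x_i^0 = A_i\circ x_i$ and $M_i$, which by Condition \ref{cond:a} are not touched by the treatment --- so that Lemma \ref{lem::basic-x} applies to the finite population $\{Y_i(0), Y_i(1), x_i^\imp(\cinf)\}_{i=1}^N$ verbatim; the efficiency orderings then fall out of Lemma \ref{lem:eff_a} and Proposition \ref{prop:ccov_1}.

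For the reduction, write $x_i^\imp(c) = x_i^0 + \diag(c)M_i$, so that every within-arm sample mean and sample moment (up to fourth order) of $(Y_i, Z_i, x_i^\imp(c))$ is a polynomial in the corresponding sample moments of $(Y_i, Z_i, x_i^0, M_i)$ with coefficients polynomial in $c$. Since $c - \cinf = \op$ and Condition \ref{asym_1}(ii)--(iii) bounds the relevant moments of $x_i^0$ and $M_i$, each such sample moment computed from $c$ differs from the one computed from $\cinf$ by $\op$; in particular $\sxximp(c) - \sxximp(\cinf) = \op$, so the limit of $\sxximp(\cinf)$ equals the assumed positive-definite limit of $\sxximp(c)$. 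Because $\htsim(c)$ and $N\{\hse_\md^\im(c)\}^2$ are continuous rational functions of these sample moments on the event --- of probability tending to one --- that the regression Gram matrix is invertible, this already gives $N\{\hse_\md^\im(c)\}^2 - N\{\hse_\md^\im(\cinf)\}^2 = \op$. For the point estimators I would use the exact identity at the heart of Lemma \ref{lem::basic-x}'s proof, $\htsim(c) = \htn - \hat g_\md(c)^\T\{\hat x^\imp(c)(1) - \hat x^\imp(c)(0)\}$ for $\mds$, where $\hat g_\md(c) = O_P(1)$ converges in probability to a deterministic limit unchanged if $c$ is replaced by $\cinf$. Since $\hat x^\imp(c)(1) - \hat x^\imp(c)(0) = \{\hat x^0(1) - \hat x^0(0)\} + \diag(c)\{\hat M(1) - \hat M(0)\}$, where $\hat x^0(z) = \meaniz x_i^0$ and $\hat M(z) = \meaniz M_i$ are arm-$z$ means of fixed finite-population vectors, complete randomization makes each of the two arm contrasts $O_P(N^{-1/2})$; hence the part of the contrast depending on $c$ rather than $\cinf$ is $\diag(c - \cinf)\{\hat M(1) - \hat M(0)\} = \op\cdot O_P(N^{-1/2}) = o_P(N^{-1/2})$, and combined with $\hat g_\md(c) - \hat g_\md(\cinf) = \op$ this yields $\htsim(c) - \htsim(\cinf) = o_P(N^{-1/2})$. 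It therefore suffices to prove the Proposition with $c$ replaced by $\cinf$.

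With $c = \cinf$ I would verify Condition \ref{asym_basic} for $\{Y_i(0), Y_i(1), x_i^\imp(\cinf)\}_{i=1}^N$: part (i) is inherited; part (ii) holds because each coordinate $(1 - M_{ij})x_{ij} + M_{ij}c_{j,\infty}$ is a fixed linear combination of $x_{ij}$, $A_{ij}x_{ij}$, and $M_{ij}$, whose first two finite-population moments converge by Condition \ref{asym_1}(ii), while the limit of $S^2_{x^\imp(\cinf)}$ is positive definite by the previous paragraph; part (iii) holds since $\|x_i^\imp(\cinf)\|_4^4 \le \|x_i\|_4^4 + \|\cinf\|_4^4$ pointwise. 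Lemma \ref{lem::basic-x} then gives $\sqrtn\{\htsim(\cinf) - \tau\} \rs \mn\{0, v_\md^\im(\cinf)\}$ for $\mds$ with $v_\lin^\im(\cinf) \le v_\fisher^\im(\cinf)$ and $v_\lin^\im(\cinf) \le v_\neyman$, and $N\{\hse_\md^\im(\cinf)\}^2 - v_\md^\im(\cinf) = \stt{\imp}(\cinf) + \op$ with $\stt{\imp}(\cinf) \ge 0$; the reduction carries all of this to $\htsim(c)$ and $\hse_\md^\im(c)$. For the remaining inequality $v_\lin^\im(\cinf) \le v_\lin^\ccov$, note that for $j \in \mj$ one has $M_{ij} = 0$ for every $i$, so $(1 - M_{ij})x_{ij} + M_{ij}c_{j,\infty} = x_{ij}$ and every column of the complete-covariate design matrix also appears as a column of the imputed design matrix; the resulting column-space containment gives $v_\lin^\im(\cinf) \le v_\lin^\ccov$ by Lemma \ref{lem:eff_a}, and $v_\lin^\ccov \le v_\neyman$ by Proposition \ref{prop:ccov_1} (when $\mj = \emptyset$ this reads $v_\lin^\ccov = v_\neyman$ and the chain collapses to $v_\lin^\im(\cinf) \le v_\neyman$, already established).

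The main obstacle is the reduction itself: legitimately replacing the data-dependent constant $c$ by its deterministic limit $\cinf$ without disturbing the first-order asymptotics. A plain continuity argument only delivers a $\op$ bound on $\htsim(c) - \htsim(\cinf)$, whereas the central limit statement needs $o_P(N^{-1/2})$; the upgrade relies on the design-based fact that under complete randomization the treatment-arm contrasts of the fixed vectors $x_i^0$ and $M_i$ are $O_P(N^{-1/2})$, so the discrepancy between the $c$- and $\cinf$-imputed estimators --- which is exactly a bounded factor times $\diag(c - \cinf)\{\hat M(1) - \hat M(0)\}$ --- is of strictly smaller order than $N^{-1/2}$ even though $c - \cinf$ is merely $\op$. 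Isolating that small piece is precisely why the decomposition $\htsim(c) = \htn - \hat g_\md(c)^\T\{\hat x^\imp(c)(1) - \hat x^\imp(c)(0)\}$, in which $\htn$ carries no dependence on $c$, does the essential work.
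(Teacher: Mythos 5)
Your proposal is correct and follows essentially the same route as the paper's proof: establish the result at the deterministic limit $\cinf$ by checking Condition \ref{asym_basic} for $\{Y_i(0),Y_i(1),x_i^\imp(\cinf)\}_{i=1}^N$ and invoking Lemma \ref{lem::basic-x}, then transfer to data-dependent $c$ by writing $\htsim(c)=\htn-\hgsim(c)^\T\{\hx^\im(1;c)-\hx^\im(0;c)\}$ and exploiting that the arm contrasts of $x_i^0$ and $M_i$ are $O_P(N^{-1/2})$ under complete randomization while $c-\cinf=\op$, so the discrepancy is $o_P(N^{-1/2})$. The efficiency chain via column-space containment (Lemma \ref{lem:eff_a}) and the treatment of the robust standard error likewise match the paper's argument.
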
 

Echoing the comments after Proposition \ref{prop:ccov_1}, Proposition \ref{prop:imp_1} 
justifies the large-sample Wald-type inference based on $\{\hts^\imp(c), \hses^\imp(c)\}$ for $\mds$ without  Condition \ref{cond:cc} or any restrictions on the dependence of $M_i$'s on $\{Y_i(0), Y_i(1), x_i\}_{i=1}^N$. 
Intuitively, the regularity conditions therein guarantee the imputed covariates act as standard covariates that satisfy Condition \ref{asym_basic} as $N$ tends to infinity, ensuring consistency, along with the  efficiency of $\htl^\imp(c)$ over $\htf^\imp(c)$, by Lemma \ref{lem::basic-x}. 
The efficiency of $\htl^\imp(c)$ over $\hts^\ccov \ (\mds)$ and $\htn$, on the other hand, follows from Lemma \ref{lem:eff_a} with $x_i^\imp(c) \supseteq x_i^\ccov \supseteq \emptyset$,  and illustrates the advantage of including all covariates in the analysis even after some basic imputation.

The true and estimated variances both depend on the choice of $c$. 
Computationally, we can minimize them over $c$ to obtain the optimal imputation. However, we do not go into details of this route because the imputation method is strictly dominated by the missingness-indicator method as shown in the next subsection.

\subsection{Missingness-indicator method}\label{sec:mim}
Recall $x_i^\mim  = ((x_i^0)^\T, M_i^\T)^\T$ as the covariates for forming regressions \eqref{eq:mi_f_0} and \eqref{eq:mi_l_0} under the missingness-indicator method. 
Let $\sxxmim$ be the finite-population covariance  of   $(x_i^\mim)_{i=1}^N$. 
It has a finite limit under Condition \ref{asym_1}. 
Let $v_{\md}^\mim $ and $\stt{\mim} $ be the analogs of $v_\md$ and $S^2_{\tau, \md}$ in Lemma \ref{lem::basic-x} defined over $\{Y_i(0), Y_i(1), x_i^\mim\}_{i=1}^N$ for $\md =  \fisher, \lin$. 

\begin{proposition}\label{prop:mim_1}
Assume complete randomization, Condition \ref{asym_1},  and the limit of $\sxxmim$ is  positive definite.  We have $\sqrt N(\hts^\mim - \tau ) \rs \mn(0, v_{\md}^\mim   ) $
 with  $v_{\lin}^\mim \leq v_\fisher^\mim$ and $v_\lin^\mim \leq v_\lin^\imp(\ci) \leq v_\lin^\ccov \leq v_\neyman$ for all $c \in \mathcal C$.
In addition, $N(\hses^\mim)^2 -  v_\md ^\mim   = S^\mim _{\tau,\md} + \op$, where $S^\mim _{\tau\tau,\md} \geq 0$. 
\end{proposition}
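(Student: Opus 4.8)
The plan is to observe that, under Condition~\ref{cond:a}, the augmented vector $x_i^\mim = ((x_i^0)^\T, M_i^\T)^\T$ is a genuine pretreatment covariate vector: $M_i = \tm_i(0) = \tm_i(1)$ is unaffected by $Z_i$, and $x_i^0 = A_i\circ x_i$ is a fixed function of $(x_i, M_i)$. Consequently the regressions \eqref{eq:mi_f_0}--\eqref{eq:mi_l_0} are exactly the additive and fully interacted {\ols} fits of Lemma~\ref{lem::basic-x} applied to the finite population $\{Y_i(0), Y_i(1), x_i^\mim\}_{i=1}^N$, with $\hts^\mim$ the coefficient of $Z_i$ and $\hses^\mim$ its robust standard error. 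So the proof reduces to checking the hypotheses of Lemma~\ref{lem::basic-x} for this population and then stitching in Lemma~\ref{lem:eff_a}, Proposition~\ref{prop:imp_1}, and Proposition~\ref{prop:ccov_1} for the efficiency chain.

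First I would verify Condition~\ref{asym_basic} for $\{Y_i(0), Y_i(1), x_i^\mim\}_{i=1}^N$ from Condition~\ref{asym_1}: part (i) is common; for part (ii), the finite-population first two moments of $x_i^\mim$ are built from those of $x_i$, $M_i$, and $A_i\circ x_i$, all assumed to converge in Condition~\ref{asym_1}, while the limit of $S_{xx}^\mim$ is positive definite by hypothesis (this last assumption also absorbs the bookkeeping of dropping the identically-zero and duplicated entries of $M_i$ discussed in Section~\ref{sec::mim}); for part (iii), $\meani\|x_i^\mim\|_4^4$ is bounded by a constant multiple of $\meani\|x_i\|_4^4 + J$ since the entries of $M_i$ lie in $\{0,1\}$ and $|x_{ij}^0|\le |x_{ij}|$, so the bound by $u_0$ carries over. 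Lemma~\ref{lem::basic-x} then delivers $\sqrtn(\hts^\mim - \tau) \rs \mn(0, v_\md^\mim)$ for $\md = \fisher, \lin$, the inequalities $v_\lin^\mim \le v_\fisher^\mim$ and $v_\lin^\mim \le v_\neyman$, and $N(\hses^\mim)^2 - v_\md^\mim = \stt{\mim} + \op$ with $\stt{\mim}\ge 0$.

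It remains to establish $v_\lin^\mim \le v_\lin^\imp(\ci) \le v_\lin^\ccov \le v_\neyman$ for all $c\in\mathcal C$. The last two inequalities are exactly the content of Propositions~\ref{prop:imp_1} and~\ref{prop:ccov_1}, so only $v_\lin^\mim \le v_\lin^\imp(\ci)$ is new. Here I would apply Lemma~\ref{lem:eff_a} with $a_i = x_i^\imp(\ci)$ and $b_i = x_i^\mim$: since $x_{ij}^\imp(\cji) = (1-M_{ij})x_{ij} + \cji M_{ij} = x_{ij}^0 + \cji M_{ij}$, every column of the design matrix $[1_N, X^\imp(\ci)]$ is a linear combination of columns of $[1_N, X^0, M]$, so the column space of the former sits inside that of the latter, and Lemma~\ref{lem:eff_a} gives that the asymptotic variance of $\htl^\imp(\ci)$ is at least that of $\htl^\mim$. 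Since Proposition~\ref{prop:imp_1} identifies the asymptotic variance of $\htlim(c)$ as $v_\lin^\imp(\ci)$, this yields the desired inequality.

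The part needing the most care --- rather than any single hard estimate --- is twofold: ensuring that the redundancy removal implicit in writing $M_i$ in \eqref{eq:mi_f_0}--\eqref{eq:mi_l_0} does not perturb the limiting variance (which is precisely why positive-definiteness of $\lim S_{xx}^\mim$ is imposed as a hypothesis rather than derived), and routing the single-imputation comparison through the deterministic limit $\ci$ so that Lemma~\ref{lem:eff_a} applies even though $c$ itself, being possibly treatment-dependent, is not a bona fide covariate.
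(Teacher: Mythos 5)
Your proposal is correct and follows essentially the same route as the paper: the paper proves Proposition \ref{prop:imp_1} by verifying Condition \ref{asym_basic} for the relevant augmented covariate population and invoking Lemma \ref{lem::basic-x}, then states that the proof of Proposition \ref{prop:mim_1} is almost identical with $x_i^\mim$ in place of $x_i^\imp(\ci)$, and it obtains the efficiency chain exactly as you do, via Lemma \ref{lem:eff_a} applied to the nested column spaces $x_i^\mim \supseteq x_i^\imp(c) \supseteq x_i^\ccov \supseteq \emptyset$. Your explicit fourth-moment check and the remark about routing the comparison through the deterministic limit $\ci$ are the same points the paper handles (the latter in step (ii) of its proof of Proposition \ref{prop:imp_1}), so no gap.
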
 

Similar to Propositions \ref{prop:ccov_1} and \ref{prop:imp_1}, 
Proposition \ref{prop:mim_1}  holds without
Condition \ref{cond:cc} or any restrictions on the dependence of $M_i$'s on $\{Y_i(0), Y_i(1), x_i\}_{i=1}^N$. 
This echoes the observation by \cite{review} under the super-population framework.
Complete randomization balances both covariates and missingness indicators across treatment groups, and ensures consistency of the regression adjustments based on them irrespective of the missingness mechanism.
The efficiency of $\htl^\mim$ over $\htl^\imp(c)$, $\htl^\ccov$, and $\htn$, on the other hand, follows from Lemma \ref{lem:eff_a} with $ x_i^\mim(c) \supseteq x_i^\imp(c)  \supseteq x_i^\ccov \supseteq \emptyset $, and highlights the  second advantage of including the missingness indicators in addition to the invariance to the imputed values.

The model-based literature  holds different opinions on the consistency of $\htf^\mim$.  \cite{jones} assumed
that the outcomes follow the classic Gauss--Markov model $Y_i = \mu + Z_i \beta + x_i \gamma + \ep_i$ with $\beta$ being the constant treatment effect and thus the model-based analog of $\tau$. He showed that $\htf^\mim$ is  biased for estimating $\beta$ even asymptotically. 
The discrepancy between  \cite{jones} and Proposition \ref{prop:mim_1} is due to the difference in assumptions on the data-generating process and  source of randomness. In particular, \cite{jones} assumed the Gauss--Markov model as the correct model whereas Proposition \ref{prop:mim_1}  requires no Gauss--Markov assumptions and holds even if the linear models are misspecified.
In addition,  \cite{jones} conditioned on the treatment indicators $Z_i$'s whereas Proposition \ref{prop:mim_1} averages over them.

\subsection{Missingness-pattern method}
\label{sec:mp_asym}

\subsubsection{Conditional properties under post-stratification}
Recall $\mmp  = \{m: N_{(m)} > 0\}$ as the set of missingness patterns present in the study population, with $\rho_{(m)} = N_{(m)}/N$ as the proportion of units with missingness pattern $m$. 
Let $\hse_{\md,(m)}$ be the robust standard error associated with $\htsm $ from the pattern-specific {\ols} under missingness pattern $m \in \mathcal M$, where $\mds$. 
As in the definition of $\htau_{\md,(1_J)}$, 
let $\hse _{\fisher,(1_J)} = \hse_{\lin,(1_J)} = \hse_{\neyman, (1_J)}$ be the robust standard error associated with $\htau_{\neyman, (1_J)}$ from $Y_i \sim 1+Z_i$ over  $\{i: M_i = 1_J\}$.
The weighted average 
\beginy \label{eq:hse_mp}
(\hse _\md ^\mp)^2  = \summ \rho_{(m)}^2 \hsesq_{\md, (m)}
\endy  affords an intuitive estimator of the sampling variance of $\htau_\md ^\mp    $  from \eqref{eq:htau_mp} for $\mds$.

Recall $N_{(m), z}$ as the number of units with missingness pattern $m$ that receive treatment $z \in \{0,1\}$. 
Conditioning on $\mathcal D = \{N_{(m), z}: m \in \mmp , \ z = 0, 1\}$ with $N_{(m), z} > 0$   for all $m \in \mmp $ and $z = 0,1$, we have $ | \mathcal{M} | $ independent completely randomized experiments, one within each missingness pattern \citep{luke}. Under regularity conditions within each missingness pattern, Lemma \ref{lem::basic-x} ensures the asymptotic normality of $\htsm $, the asymptotic efficiency of $\htau_{\lin,(m)}$ over $\htau_{\fisher, (m)}$, and the asymptotic conservativeness of $\hse_{\md,(m)}$ for estimating the true standard error of $\htsm$ for $m\in \mathcal{M}$. Consequently, $\htau_\md ^\mp$ is also asymptotically normal because it is a linear combination of the missingness-pattern-specific estimators, and moreover, 
$\htl^\mp$ is asymptotically more efficient than $\htf^\mp$ with $\hse _\md ^\mp$'s being asymptotically conservative for their respective true standard errors.

The above conditional theory for the missingness-pattern method is straightforward and elegant. A fair comparison with other methods, however, requires quantification of its  asymptotic behaviors without conditioning on $\mathcal D$. This is our goal for the next subsubsection.

\subsubsection{Unconditional properties via aggregate regression}

The unconditional theory becomes intuitive if we can rewrite $(\htau_\md ^\mp, \hse _\md ^\mp)$ as  outputs from one aggregate regression as hinted at by Section \ref{sec:mp}. 
 We formalize below the intuition for $(\htl^\mp, \hse_\lin^\mp)$ and relegate the analogous results for $(\htf^\mp,  \hse_\fisher^\mp)$ to the Supplementary Material.  
 
Let $u_i^\mp(c)$ be the covariate vector that includes $x_i^\imp(c)$, $M_{i1}, \ldots, M_{iJ}$, and all their interactions up to some adjustment for collinearity. 
A key observation is that $ u_i^\mp(c)$ includes $ x_i^\mim(c) = \{x_i^\imp(c), M_{i1}, \ldots, M_{iJ}\}$ as a subset.
We give the explicit forms of $u_i^\mp(c)$ for $J = 1, 2$ in Examples \ref{ex:u_mp_1} and \ref{ex:u_mp_2} (continued) below, and then state its utility for recovering $(\htl^\mp,  \hse_\lin^\mp)$ via one aggregate regression in Proposition \ref{prop:mp_agg}.

\setcounter{example}{0}
\begin{example}[continued] \label{ex:u_mp_1}
For $J = 1$ with $M_i = M_{i1}$ and $x_i^\imp(c) =c M_i  +x_i (1-M_i)$, we have 
\begina
u_i^\mp(c) = \{x_i^\imp(c), M_i, x_i^\mp(c) M_i\} = \{x_i^\imp(c), M_i\} 
\enda 
given $x_i^\imp(c) M_i = c M_i$ is collinear with $M_i$. This ensures $u_i^\mp(c) = x_i^\mim(c)$. 
\end{example}

\begin{example}[continued]\label{ex:u_mp_2}
For $J = 2$ with $M_i = (M_{i1}, M_{i2})^\T$ and $x_i^\imp(c) = (x_{i1}^\imp(c_1), x_{i2}^\imp(c_2))^\T$, we have
\beginy\label{eq:u_mp_2}
u_i^\mp(c) &=& \{ x_i^{\text{imp}}(c) , M_{i1}, M_{i2}, M_{i1} M_{i2},   x_i^{\text{imp}}(c)M_{i1},   x_i^{\text{imp}}(c)M_{i2},  x_i^{\text{imp}}(c) M_{i1} M_{i2}\}\nonumber\\
&=&  \{ x_i^{\text{imp}}(c) , M_{i1}, M_{i2}, M_{i1} M_{i2}, x_{i1}^{\text{imp}}(c_1)M_{i2},   x_{i2}^{\text{imp}}(c_2)M_{i1}\}
\endy
given $x_{ij}^\imp(c_j) M_{ij} = c_jM_{ij}$ is collinear with $M_{ij}$ for $j = 1, 2$ and $x_i^{\text{imp}}(c) M_{i1} M_{i2} = (c_1, c_2)^\T M_{i1}M_{i2}$ is  collinear with $M_{i1}M_{i2}$. 
This ensures $u_i^\mp(c)$ includes $x_i^\mim(c)  =(\{x_i^\imp(c)\}^\T, M_{i1}, M_{i2})^\T$ as a subset. 
\end{example}

\begin{proposition}\label{prop:mp_agg}
The fully interacted missingness-pattern estimators
$\htl^\mp$ and $\hse_\lin^\mp$ from \eqref{eq:htau_mp} and \eqref{eq:hse_mp} equal the coefficient of $Z_i$ and its associated robust standard error from 
 \beginy\label{eq:mp_agg}
Y_i \sim 1 + Z_i + \{u_i^\mp(c) - \bar u^\mp(c)\} + Z_i\{u_i^\mp(c) - \bar u^\mp(c)\}  
\endy
over $i = \ot{N} $. The result holds
for arbitrary $c \in \mathbb R^J$. 
\end{proposition}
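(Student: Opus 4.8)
The plan is to recognize \eqref{eq:mp_agg} as a single \ols\ fit whose design matrix, after deletion of redundant columns, decomposes as a direct sum with pairwise disjoint supports indexed by the missingness patterns, and then to read off the coefficient of $Z_i$ and its robust standard error from this block structure via the \fwl. The first step is to show that the column space $\mathcal W$ of the regressors $\{1,\,Z_i,\,u_i^\mp(c)-\bar u^\mp(c),\,Z_i(u_i^\mp(c)-\bar u^\mp(c))\}$ in $\mathbb R^N$ equals $\mathrm{span}\{1,\,Z_i,\,u_i^\mp(c),\,Z_iu_i^\mp(c)\}$ (centering is immaterial), and that this in turn equals $\bigoplus_{m\in\mathcal M}\mathcal W_{(m)}$ with
\[
\mathcal W_{(m)}=\mathrm{span}\bigl\{1(M_i=m)\,(1,\ Z_i,\ (\xmp_i)^\T,\ Z_i(\xmp_i)^\T)\bigr\},
\]
where $\xmp_i$ is empty when $m=1_J$, so that $\mathcal W_{(1_J)}=\mathrm{span}\{1(M_i=1_J),\,Z_i1(M_i=1_J)\}$, matching the convention $\htau_{\md,(1_J)}=\htau_{\neyman,(1_J)}$. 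This uses that $u_i^\mp(c)$ contains the products of the missingness indicators, which together with $1$ span $\{1(M_i=m):m\in\mathcal M\}$, and contains $x_{ij}^\imp(c_j)$ times these products, which (using $x_{ij}^\imp(c_j)M_{ij}=c_jM_{ij}$) reduce on the units of pattern $m$ to $x_{ij}1(M_i=m)$ for the coordinates $j$ with $m_j=0$, up to terms collinear with the indicators. Since the $\mathcal W_{(m)}$ have disjoint supports, the \ols\ projection onto $\mathcal W$ is the sum of the projections onto the $\mathcal W_{(m)}$; hence the fitted vector of \eqref{eq:mp_agg} is $\sum_m\hat Y_{(m)}$ with $\hat Y_{(m)}$ the fitted vector of the pattern-$m$ fit \eqref{eq:mp_l_m} (zero off pattern $m$), and the residual vector of \eqref{eq:mp_agg} agrees on each pattern with the pattern-specific residual $\hat e_{(m)}$.

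Next I would extract the coefficient of $Z_i$. Let $G$ be the span of all regressors in \eqref{eq:mp_agg} other than the $Z_i$ main effect. By the \fwl\ the coefficient of $Z_i$ is the unique scalar $\hat\tau$ with $\sum_m\hat Y_{(m)}-\hat\tau Z\in G$ (unique because $Z_i$ is not collinear with the remaining regressors, implicit in the coefficient being well defined). Expanding $\hat Y_{(m)}$ through the within-treatment form of the pattern-$m$ fit: the intercept and covariate-only pieces lie in $\mathrm{span}\{1,u_i^\mp(c)\}\subseteq G$; the piece $\htlm\,Z_i1(M_i=m)$ equals $\rho_{(m)}\htlm\,Z$ modulo $G$ because $1(M_i=m)-\rho_{(m)}\in\mathrm{span}\{u_i^\mp(c)-\bar u^\mp(c)\}$; and the $Z_i\times$covariate pieces vanish modulo $G$ because the within-pattern centering gives $\sum_{i:M_i=m}(\xmp_i-\bar x^\mp_{(m)})=0$, so each such piece is $Z_i$ times a mean-zero element of $\mathrm{span}\{1,u_i^\mp(c)\}$, hence $Z_i$ times an element of $\mathrm{span}\{u_i^\mp(c)-\bar u^\mp(c)\}\subseteq G$. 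Summing over $m$ gives $\hat\tau=\sum_m\rho_{(m)}\htlm=\htl^\mp$ from \eqref{eq:htau_mp}; since the whole computation is linear in the response, the two estimators also coincide as linear functionals of $(Y_1,\ldots,Y_N)$.

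For the robust standard error I would use that the coefficient of $Z_i$ in \eqref{eq:mp_agg} is a linear functional $\sum_i\ell_iY_i$, where $\ell$ is the residual of $Z_i$ on the remaining regressors normalized by its sum of squares, and that its Eicker--Huber--White variance estimate is the sandwich form $\sum_i\ell_i^2\hat e_i^2$ with $\hat e_i$ the residuals of \eqref{eq:mp_agg}; the same holds within pattern $m$ with weights $\ell_i^{(m)}$ and residuals $\hat e_{(m)i}$, so $\hse_{\lin,(m)}^2=\sum_{i:M_i=m}(\ell_i^{(m)})^2\hat e_{(m)i}^2$. Matching coefficients of $Y_i$ in the identity $\sum_i\ell_iY_i=\sum_m\rho_{(m)}\sum_{i:M_i=m}\ell_i^{(m)}Y_i$ established above forces $\ell_i=\rho_{(m)}\ell_i^{(m)}$ whenever $M_i=m$; combined with $\hat e_i=\hat e_{(m)i}$ on pattern $m$ from the first step, this yields $\sum_i\ell_i^2\hat e_i^2=\sum_m\rho_{(m)}^2\hse_{\lin,(m)}^2=(\hse_\lin^\mp)^2$ from \eqref{eq:hse_mp}. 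I expect the main obstacle to be the first step: verifying carefully, for general $J$ and after the automatic removal of collinear columns, that $u_i^\mp(c)$ together with $\{1,Z_i\}$ spans exactly $\bigoplus_m\mathcal W_{(m)}$, since the bookkeeping of which interactions survive collinearity and the degenerate pattern $m=1_J$ must be handled explicitly; once the block-diagonal structure is in place, the rest is a routine application of the \fwl.
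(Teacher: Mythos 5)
Your proof is correct, but it takes a genuinely different route from the paper's. The paper proceeds purely algebraically: it passes through a chain of non-degenerate linear reparametrizations of the design matrix (from $u_i^\mp(c)$ to $u_i^\mp(0_J)$, then to the $f_i$-parametrization, the pattern-indicator $t_i$-parametrization, and finally the pattern-centered one), tracks the coefficient of $Z_i$ through each change of basis via the invariance lemma for \ols, and closes the argument by verifying the explicit Kronecker-product identity $(1,(\bar u^\mp)^\T)=(\rho^\T,\rho^\T G^\T)\,\mathrm{diag}(\Phi^\T,I_J\otimes\Phi^\T)$. You instead argue geometrically: the column space of \eqref{eq:mp_agg} is an orthogonal direct sum of pattern-specific blocks with disjoint supports, so fitted values and residuals agree blockwise with those of \eqref{eq:mp_l_m}, and the coefficient of $Z_i$ is read off by reducing the fitted vector modulo the span $G$ of the remaining regressors, using that $1(M_i=m)-\rho_{(m)}$ and the within-pattern-centered covariate terms lie in $\mathrm{span}\{u_i^\mp(c)-\bar u^\mp(c)\}$. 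The paper's route buys explicit coefficient formulas under each parametrization (reused for the additive version and for the asymptotic analysis) and makes invariance to $c$ a one-line corollary; your route is shorter, avoids the Kronecker bookkeeping, and in fact treats the robust-standard-error identity more explicitly than the paper's displayed proof does, via the representation $\hat\tau=\sum_i\ell_iY_i$ with $\ell_i=\rho_{(m)}\ell_i^{(m)}$ and the blockwise agreement of residuals. Two small points to make explicit if you write this up: the uniqueness step needs $Z\notin G$ (guaranteed whenever every pattern-specific coefficient of $Z_i$ is well defined), and the exact equality of standard errors holds for the uncorrected (HC0) sandwich form, since finite-sample degrees-of-freedom corrections would differ between the aggregate and pattern-specific fits; the paper's statement carries the same implicit convention.
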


Refer to \eqref{eq:mp_agg} as the fully interacted aggregate specification for the missingness-pattern method. 
Proposition \ref{prop:mp_agg} shows $(\htl^\mp, \hse_\lin^\mp)$ as direct outputs of its {\ols} fit, with $u_i^\mp(c)$ as the effective covariate vector analogous to $x_i^\mim(c)$, $x_i^\imp(c)$, and $x_i^\ccov$. 
This ensures the equivalence of $\htl^\mp$ and $\htl^\mim$ when $J = 1$, and allows us to derive the unconditional asymptotic properties of $\htl^\mp$.

\begin{corollary}\label{cor:mp}
For $J = 1$, it follows from $u_i^\mp(c) = x_i^\mim(c)  $ that $\htl^\mp = \htl^\mim  $. 
\end{corollary}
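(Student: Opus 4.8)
The plan is to read the identity off directly from Proposition \ref{prop:mp_agg} together with the collinearity reduction recorded in Example \ref{ex:u_mp_1} (continued), and then to remove the dependence on $c$ via Lemma \ref{lem:invar_mim}. First I would specialize Proposition \ref{prop:mp_agg} to $J = 1$: it states that $\htl^\mp$ equals the coefficient of $Z_i$ from the fully interacted aggregate fit \eqref{eq:mp_agg} built on $u_i^\mp(c)$, and that this holds for every $c \in \mathbb R$.

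Next I would invoke Example \ref{ex:u_mp_1} (continued). When $J = 1$ the only interaction term in $u_i^\mp(c)$ is $x_i^\imp(c) M_i = c M_i$, a scalar multiple of $M_i$, so after the standard pruning of redundant regressors $u_i^\mp(c)$ collapses to $(x_i^\imp(c), M_i)^\T = x_i^\mim(c)$. Consequently the regressor set of \eqref{eq:mp_agg}, namely $\{1, Z_i, u_i^\mp(c) - \bar u^\mp(c), Z_i(u_i^\mp(c) - \bar u^\mp(c))\}$, is after pruning exactly the regressor set $\{1, Z_i, x_i^\mim(c) - \bar x^\mim(c), Z_i(x_i^\mim(c) - \bar x^\mim(c))\}$ of the fully interacted missingness-indicator fit \eqref{eq:mi_l} with imputation vector $c$. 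The two {\ols} fits therefore coincide, and in particular their coefficients of $Z_i$ agree: $\htl^\mp = \htl^\mim(c)$. Finally, Lemma \ref{lem:invar_mim} gives $\htl^\mim(c) = \htl^\mim$ for every $c$, which yields $\htl^\mp = \htl^\mim$; equivalently, one may simply set $c = 0_J$ so that $x_i^\mim(c) = x_i^\mim$ directly.

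There is essentially no analytic obstacle here: the substance has been front-loaded into Proposition \ref{prop:mp_agg} and Lemma \ref{lem:invar_mim}. The one point that deserves a sentence of care is the bookkeeping around the ``drop redundant regressors'' convention --- one should note that deleting collinear columns does not change the column space of the design matrix, hence leaves both the coefficient of $Z_i$ and the associated robust standard error unchanged, which is why the aggregate fit and the missingness-indicator fit return identical numerical outputs. If one also wants the companion identity $\hse_\lin^\mp = \hse_\lin^\mim$ for the standard errors, it follows from the very same identification of the two fits, now using the robust-standard-error claims in Proposition \ref{prop:mp_agg} and Lemma \ref{lem:invar_mim}.
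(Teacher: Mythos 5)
Your proof is correct and follows exactly the route the paper intends: Proposition \ref{prop:mp_agg} identifies $\htl^\mp$ with the coefficient of $Z_i$ from the aggregate fit on $u_i^\mp(c)$, Example \ref{ex:u_mp_1} (continued) collapses $u_i^\mp(c)$ to $x_i^\mim(c)$ when $J=1$, and either Lemma \ref{lem:invar_mim} or the choice $c=0_J$ removes the dependence on the imputation vector. Your added remark that pruning collinear columns preserves the column space, and hence the coefficient of $Z_i$ and its robust standard error, is a worthwhile clarification of the paper's implicit convention.
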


Let $u_i^\mp = u_i^\mp(0_J)$ be the value of $u_i^\mp(c)$ at $c = 0_J$. 
Let $v_\lin^\mp $ and $S^\mp_{\tau\tau,\lin} $ be the analogs of $v_\lin$ and $S^2_{\tau, \lin}$ in Lemma \ref{lem::basic-x} defined over $\{Y_i(0), Y_i(1), u_i^\mp\}_{i=1}^N$.

\begin{proposition}\label{prop:mp_unconditional}
Assume complete randomization and Condition \ref{asym_basic} for  $\{Y_i(0), Y_i(1), u^\mp_i\}_{i=1}^N $.
 We have 
$\sqrt N(\htl^\mp - \tau ) \rs \mn(0, v_\lin^\mp) .$
In addition,  $N(\hse^\mp_\lin)^2 -  v_\lin ^\mp  = S_{\tau\tau, \lin}^{\mp}+ \op$, where $S_{\tau\tau, \lin}^{\mp}\geq 0$.
\end{proposition}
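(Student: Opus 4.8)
The plan is to reduce the statement to a direct application of Lemma~\ref{lem::basic-x}, by recognizing $(\htl^\mp,\hse_\lin^\mp)$ as \citet{Lin13}'s estimator and its robust standard error for a single fully interacted regression with the composite covariate $u_i^\mp$. Concretely, I would first invoke Proposition~\ref{prop:mp_agg} at the particular choice $c = 0_J$: since $u_i^\mp = u_i^\mp(0_J)$, that proposition identifies $\htl^\mp$ with the coefficient of $Z_i$, and $\hse_\lin^\mp$ with its associated robust standard error, in the {\ols} fit $Y_i \sim 1 + Z_i + (u_i^\mp - \bar u^\mp) + Z_i(u_i^\mp - \bar u^\mp)$ over $i = \ot{N}$. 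Equivalently, $\htl^\mp$ is exactly $\htau_\lin$ computed from the finite population $\{Y_i(0), Y_i(1), u_i^\mp\}_{i=1}^N$ with $u_i^\mp$ in the role of $x_i$, and $\hse_\lin^\mp$ is the corresponding robust standard error.

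Once this identification is in place, the rest is immediate. By hypothesis Condition~\ref{asym_basic} holds for $\{Y_i(0), Y_i(1), u_i^\mp\}_{i=1}^N$, so every regularity requirement of Lemma~\ref{lem::basic-x} is met with $x_i$ replaced by $u_i^\mp$. Applying Lemma~\ref{lem::basic-x} (the $\md = \lin$ case) then gives $\sqrtn(\htl^\mp - \tau) \rs \mn(0, v_\lin^\mp)$ --- with $v_\lin^\mp$ and $S^\mp_{\tau\tau,\lin}$ being exactly the $u_i^\mp$-versions of $v_\lin$ and $S^2_{\tau,\lin}$ introduced just before the statement --- together with $N(\hse_\lin^\mp)^2 - v_\lin^\mp = S^\mp_{\tau\tau,\lin} + \op$ and $S^\mp_{\tau\tau,\lin} \geq 0$, which is precisely the claim. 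As a byproduct, for $J = 1$ one has $u_i^\mp = x_i^\mim$, recovering the coincidence $\htl^\mp = \htl^\mim$ of Corollary~\ref{cor:mp}.

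The substantive content therefore sits upstream, in Proposition~\ref{prop:mp_agg}, which I take as given: it is an exact algebraic (Frisch--Waugh--Lovell-type) identity showing that the pattern-wise weighted point estimator $\sum_{m\in\mmp}\rho_{(m)}\htlm$ in \eqref{eq:htau_mp} and the weighted variance estimator $\summ \rho_{(m)}^2\, \hsesq_{\lin,(m)}$ in \eqref{eq:hse_mp} are reproduced by the single aggregate fit \eqref{eq:mp_agg}, uniformly in $c$. Given that identity, the only points that need care here are bookkeeping ones: confirming that the collinearity pruning used to define $u_i^\mp$ is harmless for the assumed positive-definiteness of the limiting covariance of $u_i^\mp$ (it is, since Condition~\ref{asym_basic} is imposed directly on the pruned vector), and that the robust standard error delivered by \eqref{eq:mp_agg} is the very one appearing in Lemma~\ref{lem::basic-x}. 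A self-contained alternative would condition on $\mathcal D = \{N_{(m),z}\}$, apply Lemma~\ref{lem::basic-x} within each of the $|\mmp|$ independent pattern-specific experiments, and then average over $\mathcal D$ via a law-of-total-variance argument controlling the randomness in the $N_{(m),z}$; I would avoid this more laborious route in favor of the aggregate-regression one, which the surrounding text has already set up.
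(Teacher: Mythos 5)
Your proposal is correct and follows essentially the same route as the paper, which states that Proposition \ref{prop:mp_unconditional} is a direct consequence of Proposition \ref{prop:mp_agg} (identifying $\htl^\mp$ and $\hse_\lin^\mp$ as the outputs of the single aggregate fully interacted {\ols} fit with covariate vector $u_i^\mp$) combined with Lemma \ref{lem::basic-x} applied to the finite population $\{Y_i(0), Y_i(1), u_i^\mp\}_{i=1}^N$. Your additional remarks on the collinearity pruning and the rejected conditional-on-$\mathcal D$ alternative are consistent with the paper's discussion and do not change the argument.
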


Proposition \ref{prop:mp_unconditional} is a direct consequence of Lemma \ref{lem::basic-x} and Proposition \ref{prop:mp_agg}, and justifies the large-sample Wald-type inference based on $(\htl^\mp, \hse_\lin^\mp)$ irrespective of the missingness mechanism. 
The asymptotic efficiency of saturated model over its restricted variants ensures the asymptotic efficiency of $\htl^\mp$ over $\htf^\mp$. 
Heuristically, the unconditional variance $\var_\infty(\hts^\mp)$ is close to $\var_\infty(\hts^\mp \mid \mathcal{D})$ up to some higher order terms \citep{HoltSmith, luke}.
Therefore, the asymptotic efficiency of $\htl^\mp$ over $\htf^\mp$ follows from $ \var_\infty(\htl^\mp \mid \mathcal{D}) \leq  \var_\infty(\htf^\mp \mid \mathcal{D})$ by established results under post-stratification. 

%

Proposition \ref{prop:eff_mp}, on the other hand,  gives the efficiency hierarchy between  $\htl^\dg$'s for the {\ncss}  consistent strategies, $\dg \in\{ \ccov, \imp, \mim, \mp\}$.

\begin{proposition}\label{prop:eff_mp}
The asymptotic variances of $ \htl^\dg  $ satisfy  
$
v_\lin^\mp \leq  v_\lin^\mim \leq v_\lin^\imp(c_\infty) \leq v_\lin^\cv \leq v_\neyman. 
$
\end{proposition}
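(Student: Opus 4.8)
The plan is to reduce the whole inequality chain to four applications of Lemma~\ref{lem:eff_a} by exhibiting a nested family of column spaces for the covariate matrices that drive the five estimators. Recall that, in the sense of Lemma~\ref{lem::basic-x}, $\htn$ is Lin's estimator with the empty covariate vector, $\htl^\cv$ is Lin's estimator with $x_i^\cv$, $\htl^\imp(c)$ with $x_i^\imp(c)$, $\htl^\mim$ with $x_i^\mim$, and, by Proposition~\ref{prop:mp_agg}, $\htl^\mp$ is Lin's estimator with $u_i^\mp(c)$; their asymptotic variances, identified in Propositions~\ref{prop:ccov_1}, \ref{prop:imp_1}, \ref{prop:mim_1}, and~\ref{prop:mp_unconditional}, are $v_\neyman$, $v_\lin^\cv$, $v_\lin^\imp(c_\infty)$, $v_\lin^\mim$, and $v_\lin^\mp$, where for single imputation the configuration attached to the asymptotic variance is the deterministic limiting one, $x_i^\imp(c_\infty)$ with $j$th entry $(1-M_{ij})x_{ij}+M_{ij}c_{j,\infty}$.

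First I would record the four inclusions, taken at the common imputation level $c=0_J$ where it matters. (i) For $j\in\mathcal J$ the covariate $x_{ij}$ is never missing, so $x_{ij}^\imp(c_{j,\infty})=x_{ij}$ and $x_i^\cv$ is literally a subvector of $x_i^\imp(c_\infty)$; hence the column space of the $x^\cv$ design matrix sits inside that of the $x^\imp(c_\infty)$ design matrix. (ii) Because $x_{ij}^\imp(c_{j,\infty})=x_{ij}^0+c_{j,\infty}M_{ij}$, each column of the $x^\imp(c_\infty)$ design matrix is a fixed linear combination of columns of the $x^\mim=((x_i^0)^\T,M_i^\T)^\T$ design matrix, giving the next inclusion. (iii) As noted in Section~\ref{sec:mp_asym}, $x_i^\mim(c)$ is a subvector of $u_i^\mp(c)$, so the column space of the $x^\mim$ design matrix lies in that of the $u^\mp$ design matrix. (iv) The empty design matrix has column space $\{0\}$, contained in that of the $x^\cv$ design matrix. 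Applying Lemma~\ref{lem:eff_a} across these four links, under the union of the regularity conditions of Propositions~\ref{prop:ccov_1}, \ref{prop:imp_1}, \ref{prop:mim_1}, and~\ref{prop:mp_unconditional}, which supply the positive-definiteness that the lemma requires, delivers $v_\lin^\mp\le v_\lin^\mim\le v_\lin^\imp(c_\infty)\le v_\lin^\cv\le v_\neyman$.

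No genuinely hard step arises; the content is bookkeeping, and the two places to be careful are both minor. One is that the single-imputation estimator actually used may employ a data-dependent $c$, whereas the column-space comparison operates on the deterministic configuration $x_i^\imp(c_\infty)$; this is legitimate precisely because Proposition~\ref{prop:imp_1} has already pinned $v_\lin^\imp(c_\infty)$, the variance functional of the finite population with covariates $x_i^\imp(c_\infty)$, as the asymptotic variance of $\htl^\imp(c)$, so the inequalities are genuinely among the asymptotic variances attached to the nested deterministic designs. The other is collinearity among the $M_{ij}$'s and their interactions (e.g.\ $M_{ij}\equiv 0$ for complete $j$, or $M_{ij}\equiv M_{ij'}$): since Lemma~\ref{lem:eff_a} is stated in terms of column spaces rather than particular matrices, deleting redundant columns changes nothing, so the inclusions (i)--(iv) hold verbatim for the reduced full-rank designs that the regressions fit.
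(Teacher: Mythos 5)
Your proposal is correct and matches the paper's argument: the paper proves Proposition \ref{prop:eff_mp} exactly by the nesting $u_i^\mp(c) \supseteq x_i^\mim(c) \supseteq x_i^\imp(c) \supseteq x_i^\ccov \supseteq \emptyset$ combined with Lemma \ref{lem:eff_a}. Your additional care about the data-dependent imputation $c$ versus the deterministic limit $c_\infty$ (already settled by Proposition \ref{prop:imp_1}) and about collinearity being immaterial at the level of column spaces is a faithful elaboration of the same route, not a different one.
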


Proposition \ref{prop:eff_mp} follows from $u_i^\mp(c) \supseteq x_i^\mim(c)  \supseteq x_i^\imp(c) \supseteq x_i^\ccov \supseteq \emptyset$ and, together with the efficiency of $\htl^\dg$ over $\htf^\dg$ for each individual strategy, ensures the asymptotic efficiency of $\htl^\mp$ among all {\ncssest} consistent estimators in Table \ref{tb:strategies}. 
Compare the definitions of $u_i^\mp(c)$ with $x_i^\mim(c)$ to see that $u_i^\mp(c)$ includes interaction terms like $x_i^\imp(c) M_{ij}$, $x_i^\imp(c) M_{ij}M_{ij'}$, etc. that are not in $x_i^\mim(c)$.  
Intuitively, this suggests the advantage of $\htl^\mp$ over $\htl^\mim$ whenever the covariates interact with the missingness pattern in affecting the treatment effect.  

\subsection{Summary and a hierarchy of model specifications}\label{sec:unify}

Under Condition \ref{cond:a}, Sections \ref{sec:ccov_1}--\ref{sec:mp_asym} establish the validity of $(\htss, \hses^\dg)$ for $\dg \in\{ \ccov, \imp, \mim, \mp\}$ and $\mds$ regardless of the relation between $M_i$ and $\{Y_i(0), Y_i(1), x_i\}$, the correctness of the linear models, and the choice of the imputed values.
The results, though much more general than one might expect, 
are no surprise but a direct implication of Lemma \ref{lem::basic-x}.
Consistency as such is a rather weak criterion for evaluating the performance of regression estimators, rendering the possible inconsistency under the complete-case analysis  all the more undesirable.

The intuition from Lemma \ref{lem:eff_a} further allows to us to quantify the asymptotic efficiency between the {\ncss}  consistent methods. 
In particular, Proposition \ref{prop:eff_mp} gives the efficiency hierarchy between  $\htl^\dg$'s for the {\ncss}  consistent strategies, $\dg \in\{ \ccov, \imp, \mim, \mp\}$. This, together with the efficiency of $\htl^\dg$ over $\htf^\dg$ for each individual strategy, ensures the asymptotic efficiency of $\htl^\mp$ among all {\ncssest} consistent estimators in Table \ref{tb:strategies}.

This concludes our discussion on the design-based properties of the {\ntotest} estimators in Table \ref{tb:strategies}. 
The $\htl^\mp$ under the missingness-pattern method effectively uses a fully interacted {\ols} with covariates including $x_i^\imp(c), M_{i1}, \ldots, M_{iJ} $, and all their interactions, 
and ensures asymptotic efficiency at the cost of being the most demanding on pattern-specific sample sizes.
The missingness-indicator method simplifies the specification by excluding interactions between $\{x_i^\imp(c), M_{i1}, \ldots, M_{iJ} \}$. 
The single imputation simplifies the specification by discarding all terms involving the missingness indicators.  
The complete-covariate analysis simplifies the specification by further discarding all dimensions in $x_i^\imp(c)$ that involve imputed covariates. 
This unifies the complete-covariate analysis, the single imputation, and the missingness-indicator method as various restricted variants of the missingness-pattern method.

\section{Simulation}\label{sec:simu}
We now turn to simulation to illustrate the finite-sample performance of the proposed strategies.
Consider a treatment-control experiment with $(N_0, N_1) = (0.8 N, 0.2 N)$. 
For each $i$, independently draw a latent indicator, $\xi_i \sim$ Bernoulli$(0.2)$,  to divide the units into two latent classes, the ``severely ill" group with $\xi_i = 1$ and the ``less ill" group with $\xi_i = 0$; independently draw a $J=3$ dimensional covariate vector $x_i =  (x_{i1}, x_{i2}, x_{i3})^\T   \sim \mn( \xi_i 1_{J}, I_J)$. 
Assume Condition \ref{cond:a} with covariate 1 being the only complete covariate. 
We 
generate $M_i = (M_{i1}, M_{i2}, M_{i3})^\T$  independently for $i = \ot{N}$ as
$M_{i1} = 1$ and $ M_{ij}  \sim \textup{Binomial}\{0.1 \xi_i + 0.05(1-\xi_i)\}$ for $j = 2, 3$, and 
consider three scenarios for generating the potential outcomes   to highlight different aspects of the theoretical results.  

Scenario (i) sets $N = 500$ and generates the potential outcomes as independent normals as $Y_i(z)  \sim \mathcal{N}( 5\xi_i + 2 x_i^\T \gamma_{z | \xi_i}, 1)$ with 
$(\gamma_{1|1}, \gamma_{0|1}) =(1_J, -1_J) $  and $(\gamma_{1|0}, \gamma_{0|0}) =(0.5 \cdot 1_J, - 0.5 \cdot 1_J) $. 
The data-generating process ensures that the severely ill group has both a higher chance of missing covariates  and on average greater values of covariates and potential outcomes. 
This exemplifies the case where missingness is correlated with covariates and potential outcomes and in general leads to unequal $\tau^\cc $ and $\tau^\uc $.  
For illustration simplicity, we center the $\{Y_i(z)\}_{i=1}^N$ for $z = 0,1$, respectively, to have $\tau = 0$. 
The true subgroup average causal effects among the complete and incomplete cases equal $\tau^\cc  =-0.402 $ and $\tone = 1.393$ based on the simulated data.
We use this scenario to illustrate the inconsistency of complete-case analysis when $\tz \neq \tone$. 

Fix $\{Y_i(0), Y_i(1), M_{i}, x_i\}_{i=1}^N$ in the simulation. We draw a random permutation of $N_1$ $1$'s and $N_0$ $0$'s to obtain the completely randomized assignment, and then use the resulting observed outcomes to compute the estimators in Table \ref{tb:strategies}.
We set the imputation values at $c = 0_J$ for single imputation. 
The results under the unconditional sample mean imputation with $c_j = \hat x_j^\obs$ are similar and thus omitted. 
We do not include the unadjusted $\htn$ from the simple regression $Y_i \sim 1+Z_i$  due to space limit. Its inferiority to the fully interacted complete-covariate analysis  is an established result. 
Figure \ref{fig}(i) shows the distributions of the ten estimators over 1,000 independent assignments under scenario (i).
The complete-case analysis yields biased inferences whereas all the other {\ncss}  methods are consistent. 
The efficiency of the fully interacted regressions over their respective additive counterparts is coherent across different strategies except the missingness-pattern method. 
The long tails of $\htl^\mp$ are not surprising but the consequence of small sample sizes under a subset of missingness patterns.

Scenario (ii) inherits most settings from scenario (i), yet generates the potential outcomes as $Y_i(z) \sim \mathcal{N}\{\mu_i(z), 1\}$, where $\mu_i(z) = 5\xi_i + x_i^\T \gamma_{z | \xi_i} + 2 M_{i}^\T 1_J$,  prior to centering, rendering   $M_i$ an important predictor of the potential and observed outcomes. 
Figure \ref{fig}(ii) shows the distributions of the resulting estimators over 1,000 independent assignments.
The improvement of $\htl^\mim$ under the missingness-indicator method  over $\htl^\imp(c)$ under single imputation in terms of efficiency is visible, illustrating the benefit of augmenting the regression analysis with information in $M_i$'s.

The benefits of missingness-pattern method, on the other hand, manifest in large samples when the interactions between $x_i$ and $(M_{i2}, M_{i3})$ are non-negligible.
Scenario (iii) inherits most settings from scenario (i), yet generates the potential outcomes as $Y_i(z) \sim \mathcal{N}\{\mu_i(z), 1\}$, where $\mu_i(z) = 5\xi_i + x_i^\T \gamma_{z | \xi_i} +  M_{i}^\T 1_J + M_{i2}M_{i3} + 5M_{i2} \sum_{i=1}^3 x_{ij}$,  prior to centering for $N = 10,000$ units. 
The effect of covariates and missingness indicators on the potential outcomes is no longer additive but involves interaction terms like $M_{i2}M_{i3}$ and $5M_{i2} \sum_{i=1}^3 x_{ij}$.  
 Figure \ref{fig}(iii) shows the distributions of  the resulting estimators over 1,000 independent assignments. 
The improvement of $\htl^\mp$ over $\htl^\mim$ in terms of efficiency is visible.

\begin{figure}\caption{\label{fig}Violin plots of $\htss$'s over 1,000 independent assignments under scenarios (i)--(iii). The estimators are labeled as ``$\dg.\md$" along the x-axis. The true $\tau$ is 0.}
\begin{center}
(i) Scenario (i) with $N = 500$
\includegraphics[width=.8\textwidth]{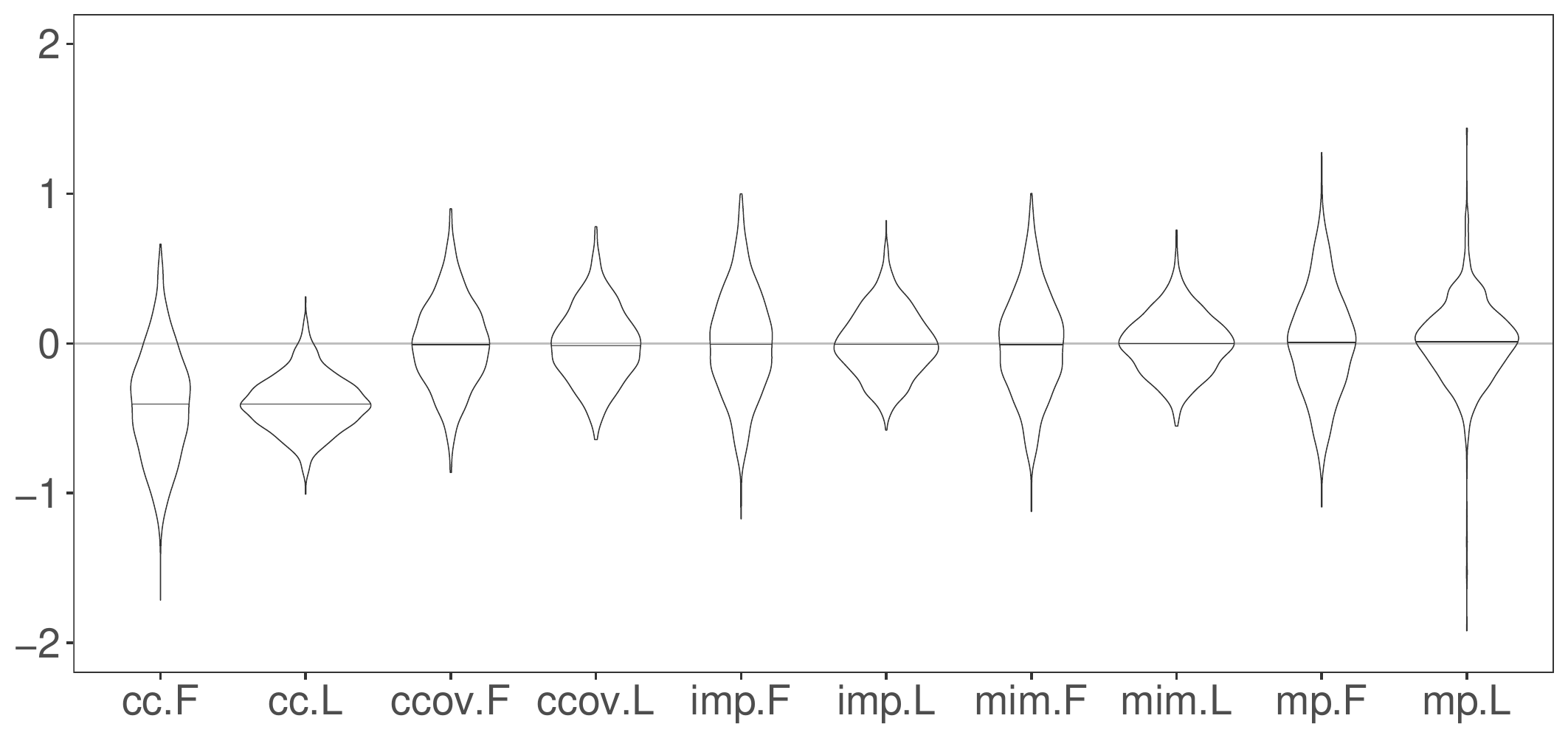}
\end{center}
\begin{center}
(ii) Scenario (ii) with $N = 500$
\includegraphics[width=.8\textwidth]{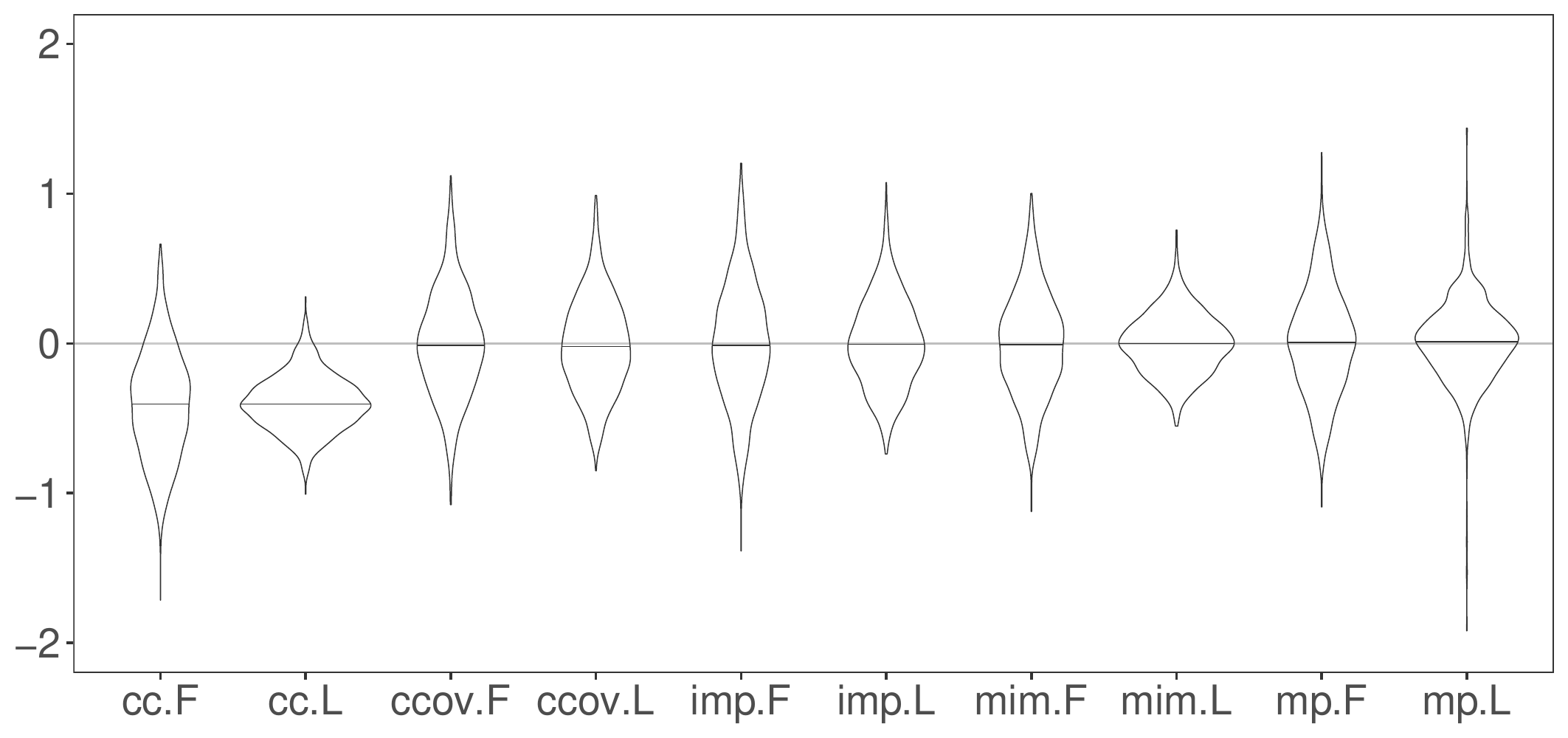}
\end{center}
\begin{center}
(iii) Scenario (iii) with $N = 10,000$
\includegraphics[width=.8\textwidth]{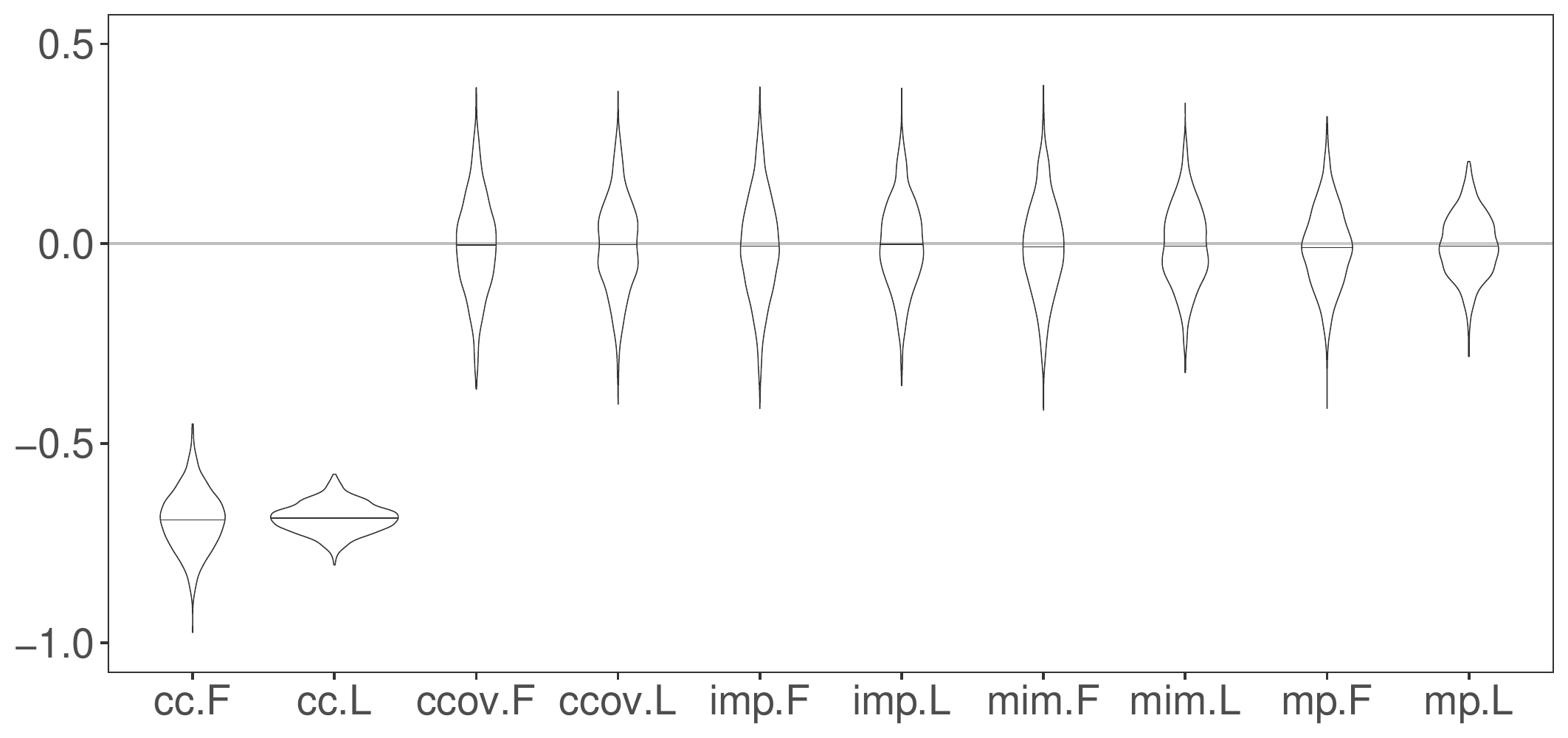}
\end{center}
\end{figure}

\section{Application}\label{sec:application}

\cite{duflo} conducted a randomized field experiment in Western Kenya to study the effect of a time-limited utility cost reduction on fertilizer use.
The experiment took place over two seasons from 2003 to 2005 after a series of small-scale pilot programs. 
We use their data from the first season to illustrate the methods for handling missing covariates. 

The first season began after the 2003 short rain harvest to facilitate fertilizer purchase for the 2004 long rain season.
The treatment was implemented in the form of free delivery of fertilizer as a time-limited reduction in the cost to acquiring fertilizer. 
One major outcome of interest is the fertilizer use in the first season after long rains 2004. 
Farmers were randomly assigned to receive either the treatment  or control. 
A total of $N=877$ participant farmers were tracked for a follow-up usage survey, among which $N_1 = 204$ received the treatment and the rest $N_0 = N - N_1 = 673$ received control.
They constitute the study population of the original analysis reported by \cite{duflo}. 

We follow \cite{duflo} to consider $J = 27$ covariates, including  educational attainment, previous fertilizer usage, gender, income, whether the farmer's home has mud walls, a mud floor, or a thatch roof, and whether the farmer has received a starter kit in the past. 
A total of $N^\cc = 716$ farmers have all covariates observed, accounting for $81.64\%$ of the study population. 
Missingness of covariates happens in 7 out of the $J = 27$ covariates with a total of 9 missing patterns summarized in Table \ref{tb:mp}.
\begin{table}[t]
\caption{\label{tb:mp}Counts of the nine missingness patterns for the incomplete covariates}
\centering
\begin{tabular}{|c|r|}
  \hline
 missingness pattern & count\\
  \hline
0000000 & 716 \\ 
  0000100 &  59 \\ 
  0001000 &   1 \\ 
  0010011 &   2 \\ 
  0100000 &  19 \\ 
  0100100 &   1 \\ 
  1000000 &   1 \\ 
  1011111 &  71 \\ 
  1111111 &   7 \\ 
   \hline
\end{tabular}
\end{table}

Table \ref{tb:application} summarizes the results from our re-analysis of the data. 
We exclude the missingness-pattern method from the comparison due to the small sizes of 4 missingness-pattern strata with $N_{(m)} < 5$. 
The  remaining four strategies return coherent results about a significant and fairly sizable effect of cost reduction on  fertilizer use. 
The complete-case analysis with \cite{Fisher35}'s specification corresponds to the original analysis in \cite{duflo}, and yields the largest point estimate, 0.144, overall. This, together with the also large result from \cite{Lin13}'s model, 0.135, suggests the possibility of $\tau^\cc > \tau$ in the study population.  
The complete-covariate analysis, on the other hand, yields the most conservative $p$-values, which is coherent with our theory.

\begin{table}[t] \caption{\label{tb:application}Re-analysis of the data from \cite{duflo}.
For comparison, $\htn = 0.112$ and $\hse_\neyman = 0.039$ with  $p$-value $ = 0.004$. We exclude the missingness-pattern method.
}
\begin{center}
\begin{tabular}{|c|r|r|r|r|r|r|}
  \hline
   &\multicolumn{3}{c|}{ $\hat\tau_\fisher^\dagger$ } & \multicolumn{3}{c|}{  $\hat\tau_\lin^\dagger $ }\\\hline
strategy & estimate & robust s.e. & $p$-value &  estimate & robust s.e.  & $p$-value  \\ 
  \hline
cc & 0.144 & 0.04 & 0  & 0.135 & 0.041 & 0.001  \\ 
  ccov & 0.106 & 0.038 & 0.006  & 0.096 & 0.038 & 0.012   \\ 
  imp & 0.129 & 0.037 & 0   & 0.123 & 0.037 & 0.001  \\ 
  mim & 0.132 & 0.037 & 0  & 0.119 & 0.037 & 0.001 \\ 
   \hline
\end{tabular}
\end{center}
\end{table}

\section{Extensions}\label{sec:ext}

\subsection{Other regression specifications}

Recall the hierarchy of model parsimony from Section \ref{sec:unify}.
The missingness-pattern method gives the most saturated model under the fully interacted aggregate specification \eqref{eq:mp_agg}, and has the complete-covariate analysis, single imputation, and the missingness-indicator method all as its restricted variants.
An immediate implication is that any subset of $u_i^\mp(c)$, up to a non-degenerate linear transformation, affords a valid covariate vector to form the additive and fully interacted regressions when simplifications are needed. 
This implies a whole spectrum of possible restrictions on \eqref{eq:mp_agg} depending on the nature of the data. We illustrate below two types of restrictions of possible theoretical and practical interests. 

The first type of restrictions are large-sample inference oriented, and focus on trade-offs between the missingness-indicator method and the missingness pattern method when the former alone is inadequate for attaining the desired asymptotic efficiency. 
In cases where evidence suggests the covariates and missingness indicators affect the treatment effects interactively, an intuitive trade-off between the missingness-indicator method and the missingness-pattern method would be 
to use $x_i^0$, $M_i$, and all second-order interactions between $\{x_i^0, M_{ij}: j = \ot{J}\}$, namely terms like $x_i^0 M_{ij}$ and $M_{ij}M_{ij'}$ for all $j \neq j' \in \overline{\mathcal J}$, to form the covariate vector. The resulting estimator has higher asymptotic efficiency than $\htl^\mim$ under the fully interacted specification. 
 
The second type of restrictions are finite-sample inference oriented, and focus on trade-offs between the missingness-indicator method and single imputation when the former is subject to considerable finite-sample bias.
In particular, the missingness-indicator method involves $(4J + 2)$ coefficients under the fully interacted specification \eqref{eq:mi_l}, and can be demanding on sample sizes when $J$ is large. 
In cases where this results in large finite-sample variations, an alternative is to  include only a subset of $M_i$ with the highest partial $R^2$ in forming the covariate vector.
Alternatively, we could define $J_i = \sum_{j=1}^J (1-M_{ij})$ as the count of observed entries for unit $i$, and form  the additive and fully interacted regressions based on the {\it missingness-count} covariate $x_i^\textup{mc}(c) = (\{x_i^\imp(c)\}^\T, J_i)^\T \in \mathbb R ^{J+1}$. This idea of augmenting the covariates with $J_i$ was first proposed by \cite{rummel} in the context of factor analysis, and can be easily adapted for the current setting of multiple regression model; see \cite{anderson} for a review. 
In addition, recall $C_i$ as the complete-case indicator for unit $i$. We could also augment $x_i^\imp(c)$ with $C_i$
 and form the additive and fully interacted regressions using  $x_i^\cim(c) = (\{x_i^\imp(c)\}^\T, C_i)^\T$.
This gives us three ways to include the missingness information with less covariates than $x_i^\mim(c)$. 
A caveat is that the minimum specification that ensures invariance to $c$ must include all missingness indicators in {\ols}. The three restricted variants thus improve the finite-sample performances at the cost of losing the numerical invariance.

Eventually, the choice of restrictions can be based on the data, and thus necessarily involves a model-selection step. Analyzing this type of procedure is  non-trivial for design-based inference. \citet{bloniarz2016lasso} started the literature by augmenting \citet{Lin13}'s method with a lasso step for variable selection. We can also augment the fully-interacted   {\ols} for the missingness-pattern method with a lasso step. However, deriving its theoretical properties is a challenging research question. We leave the detailed theory to future work.

\subsection{Cluster randomization}\label{sec:cluster}

Consider $N$ units nested in $I$ clusters of sizes $n_i \  (i = \ot{I}; \ \sumii n_i  = N)$.
Cluster randomization randomly assigns $I_1$ clusters to receive the treatment and the rest $I_0 = I - I_1$ clusters to receive the control.
We use $i$ to index the clusters as the randomization units and, with a slight abuse of notation, use $I$ to denote the cluster number when there is no confusion with the identity matrix.

Let $\{Y_{i l}(z): z= 0,1\}$ be the  potential outcomes for the $l$th unit in cluster $i$, also referred to as unit $il \ (i=\ot{I};\ l = \ot{n_i} )$. 
 The finite-population average treatment effect equals
$
\tau  = N^{-1}  \sumii \suml  \{ Y_{i l}(1) - Y_{i l}(0) \}$. 
Let $Z_i $ be the treatment level received by cluster $i$.  
The observed outcome for unit $il $ is $Y_{i l}=Z_i  Y_{i l}(1)+(1- Z_i  )Y_{i l}(0)$. 
We further observe a $J$-dimensional covariate vector $x_{i l} = (x_{il ,1}, \ldots, x_{il ,J})^\T$ for each unit with $\bar x = N^{-1}\sum_{i=1}^I \sum_{l=1}^{n_i} x_{il}$.
Regressions 
\beginy
\label{eq:cr_unit-fisher} 
&&Y_{il} \sim 1 + Z_i + x_{il},\\ 
\label{eq:cr_unit} 
&&Y_{il} \sim 1 + Z_i + (x_{il} - \bar x) +  Z_i (x_{il} - \bar x)
\endy
over $i = \ot{I}; \ l = \ot{n_i}$ 
afford two intuitive specifications to estimate $\tau$ as the coefficients of $Z_i$. 
In fact, they are identical to the specifications under complete randomization with the individual-level treatment indicators, denoted by $Z_{il}$ for unit $il$, satisfying $Z_{il} = Z_i$ under cluster randomization.
\cite{DS} showed the validity of the resulting regression estimators  and their associated cluster-robust standard errors  for large-sample Wald-type inferences.  

In the case where the $x_{il}$'s are only partially observed, 
all {\ntot}  strategies for handling missingness under complete randomization extend to the current setting with no need of modification. 
Assume the missingness is unaffected by the treatment assignment. 
We can derive results in parallel with Propositions \ref{prop:cc_1}--\ref{prop:eff_mp} by assuming the corresponding regularity conditions for cluster randomization.
In particular, let $M_{il}$ be the vector of missingness indicators  for unit $il$, and  $x_{il}^0$ 
be the imputed covariates with all missing values filled with 0. Replacing $x_{il}$ by $x_{il}^\mim = ((x^0_{il})^\T, M_{il}^\T)^\T$ in   \eqref{eq:cr_unit}   yields \cite{Lin13}'s estimator under the missingness-indicator method.

Whereas the above approach works for both complete and cluster randomizations with identical regression specifications, the peculiarity of cluster randomization allows us to also form regressions based on cluster total data. 
In particular, let $\bar n = N/I$ be the average cluster size, and let  $\tilde{Y}_{i\cdot}(z)=\bar n^{-1} \suml  Y_{i l}(z) $,  $\tilde{Y}_{i\cdot}= \bar n^{-1} \suml Y_{i l}  $, and $\tilde{x}_{i\cdot}=  \bar n^{-1}\suml  x_{i l} $ be the cluster totals of potential outcomes, observed outcomes, and covariates scaled by $\bar n$. 
Then $\tilde Y_{i\cdot} = Z_i \tilde Y_{i\cdot} (1)+ (1-Z_i) \tilde Y_{i\cdot}(0)$ gives the observed analog of $\tilde Y_{i\cdot}(z)$.
This, together with 
$
\tau  
= I^{-1} \sumii 
\{\tilde{Y}_{i\cdot}(1)-\tilde{Y}_{i\cdot}(0)\},
$
ensures that  $( Z_i ,  \tilde{Y}_{i\cdot})_{i = 1}^I$ is equivalent to the observed data from a complete randomization with potential outcomes $\{ \tilde Y_{i\cdot}(0), \tilde Y_{i\cdot}(1)\}_{i=1}^I$ and average treatment effect $\tau$  \citep{MiddletonCl15, DingCLT}. 
The coefficient of $Z_i$ from the cluster-level regression $\tilde{Y}_{i\cdot} \sim 1 + Z_i$ coincides with the difference in means of the $\tilde Y_{i\cdot}$'s and affords an unbiased estimator of $\tau$. \cite{DS} showed that applying \citet{Lin13}'s estimator with scaled cluster totals further improves efficiency, and more importantly, asymptotically dominates the estimators from individual-level regressions \eqref{eq:cr_unit-fisher} and \eqref{eq:cr_unit}. With missing covariates, we first impute all missing covariates with zero, denoted by $x^0_{il}$ for unit $il$, and define the cluster-level covariate vector as  $u_i^\mim = (n_i, (\tilde x^0_{i\cdot})^\T, \tilde M_{i\cdot}^\T)^\T$ where $\tilde M_{i\cdot} = \bar n^{-1}\suml  M_{il}$ and $\tilde x^0_{i\cdot} = \bar n^{-1}\suml  x_{il}^0$. 
Let $\bar u^\mim =  I^{-1} \sum_{i=1}^I  u_{i}^\mim$ be the average of $u_i^\mim$'s over the $I$ clusters. Extending \cite{DS} to the case with missing covariates yields regression
$$
\tilde{Y}_{i\cdot} \sim 1 + Z_i + (u_i^\mim - \bar u^\mim) + Z_i(u_i^\mim - \bar u^\mim)   
$$ 
over $i = \ot{I}$.  The resulting estimator ensures higher asymptotic efficiency than that from \eqref{eq:cr_unit}. It is our final recommendation.

%
%

\subsection{Stratified randomization}
Given a study population nested in $K$ strata, indexed by $k = \ot{K}$, stratified randomization conducts an independent complete randomization in each stratum \citep{luke, liu2020regression}. Let $\omega_{[k]} $ be the proportion of units and $\tauj$ be the average treatment effect within stratum $k$ $(k = \ot{K} )$. The finite-population average treatment effect equals  $\tau = \sumk \omega_{[k]} \tauj $. In the case where the covariates are partially observed, we can form $\hat{\tau}^\dg_{\md[k]}$ and  $\hse_{\md[k]}^\dg$ as the basic estimator and robust standard error within each stratum $k$ for $\dgsi$ and $\mds$, and use their respective weighted averages, namely  $\htss = \sumk   \omega_{[k]}\htau_{\md[k]}^\dg$ and $(\hse^\dg_\md)^2 = \sumk  \omega_{[k]}^2 (\hse^\dg_{\md[k]})^2$, 
as our point estimator and the corresponding squared robust standard error. 
We can derive results in parallel with Propositions \ref{prop:cc_1}--\ref{prop:eff_mp} by assuming the corresponding regularity conditions hold within all strata.

\subsection{Missingness that depends on the treatment assignment}\label{sec:ext_no condition 4}

The discussion so far requires Condition \ref{cond:a} with the missingness unaffected by the treatment assignment, that is, $M_i(0) = M_i(1) = M_i$ for all $i = \ot{N}$.  The resulting missingness indicators are effectively fully observed covariates.
Without Condition \ref{cond:a}, 
$M_{i} = M_{i}(0) + Z_i\{M_{i}(1) - M_{i}(0)\}$ takes different values under different realized values of $Z_i$.
A direct implication is that the vectors of $\xidg$'s that we use to form the additive and fully interacted regressions for $\dgsi$ can no longer be seen as standard covariates unaffected by the treatment even asymptotically, imposing additional complications for quantifying the design-based properties of the resulting estimators.  

As it turns out, among the   estimators in Table \ref{tb:strategies}, only $\hts^\ccov \ (\mds)$ from the complete-covariate analysis remain consistent  without Condition \ref{cond:a}. 
It is thus our recommendation in the absence of Condition \ref{cond:a}. 
Recall from Proposition \ref{prop:eff_mp} that  under Condition \ref{cond:a}, $\htl^\ccov$ has the largest asymptotic variance among the consistent $\htl^\dg$'s. 
Its consistency in the absence of Condition \ref{cond:a} thus gives an analog of the bias-variance trade-off in terms of the asymptotic biases and variances. 
We formalize the results in the Supplementary Material.

\section{Discussion}\label{sec::discussion}

We proposed to use \citet{Lin13}'s model to adjust for missing covariates in randomized experiments by imputing missing covariates with zeros and augmenting the covariates with missingness indicators. When the treatment does not affect the missingness indicators, this missingness-indicator method is consistent for the average treatment effect and more efficient than the unadjusted estimator, the complete-covariate analysis, and the estimators based on imputed covariates alone. It can be conveniently implemented via {\ols}. We also proposed the missingness-pattern method as a modification to reap additional asymptotic efficiency.

We focused on constructing large-sample Wald-type confidence intervals based on consistent point estimators and conservative standard errors. Building upon these results, it is immediate to extend \citet{ZDfrt} to construct robust Fisher randomization tests adjusting for covariates subject to missingness. In particular, using a studentized statistic based on any consistent estimator and the associated conservative standard error in the Fisher randomization test yields a $p$-value that is finite-sample exact under the strong null hypothesis $\tau_i=0$ for all $i$ and asymptotically conservative under the weak null hypothesis $\tau = 0$. By duality, this also gives a confidence interval by inverting a sequence of Fisher randomization tests.

As an alternative to regression, weighting based on the propensity score is another simple yet powerful method to improve efficiency in randomized experiments. \citet{shen2014inverse} and \citet{zeng2021propensity}  have shown that regression and weighting are equivalent asymptotically. With missing covariates, one option is to use the generalized propensity score \citep{rr1984} to construct weighting estimators. We conjecture that the equivalence between regression and weighting also holds even with missing covariates, but leave the theoretical analysis to future work.


\bibliographystyle{plainnat}
\bibliography{refs_missingCov}

\newpage
\setcounter{page}{1}

\onehalfspacing

\setcounter{equation}{0}
\setcounter{section}{0}
\setcounter{figure}{0}
\setcounter{example}{0}
\setcounter{proposition}{0}
\setcounter{corollary}{0}
\setcounter{theorem}{0}
\setcounter{table}{0}
\setcounter{condition}{0}
\setcounter{lemma}{0}
\setcounter{remark}{0}

\renewcommand {\theproposition} {S\arabic{proposition}}
\renewcommand {\theexample} {S\arabic{example}}
\renewcommand {\thefigure} {S\arabic{figure}}
\renewcommand {\thetable} {S\arabic{table}}
\renewcommand {\theequation} {S\arabic{equation}}
\renewcommand {\thelemma} {S\arabic{lemma}}
\renewcommand {\thesection} {S\arabic{section}}
\renewcommand {\thetheorem} {S\arabic{theorem}}
\renewcommand {\thecorollary} {S\arabic{corollary}}
\renewcommand {\thecondition} {S\arabic{condition}}
\renewcommand {\thepage} {S\arabic{page}}

\begin{center}
\bf \Large 
Supplementary Material  
\end{center}

Section \ref{sec:ext_no4_app} gives the more general results under possible violations of Condition \ref{cond:a}. 


Section \ref{sec:no_missing_app} reviews the key lemmas. 

Sections  \ref{sec:cc_app}--\ref{sec:mp_aggregate_app} give the proofs of the results in the main text and Section \ref{sec:ext_no4_app} under individual strategies. The proofs for the complete-covariate analysis are short and given in the text. 
We verify the general results without Condition \ref{cond:a} unless specified otherwise.

\section*{Notation}
 
Consider an experiment with two treatment levels, $\mt = \{0,1\}$, and a finite population of size $N$. Given $\{u_i(z), v_i(z): z\in\mt; \ i = \ot{N}\}$, where $u_i(z)$ and $v_i(z)$ are two arbitrary potential outcome vectors for unit $i$, let $\bar u(z) = \meani u_i(z)$, $\bar v(z) = \meani v_i(z)$, and $S_{u(z), v(z')} = \meani \{u_i(z)-\bar u(z)\} \{v_i(z')-\bar v(z')\}^\T$ be the finite-population means and covariances, respectively, for $z, z'\in\mt$. 
We use $N$ instead of $(N-1)$ as the divisors for the covariances; the simplification does not affect the validity of the proofs with $(N-1)/ N = 1 + o(1)$  as $N\to\infty$. 
Abbreviate $u_i(z)$, $\bar u(z)$, and $S_{u(z), v(z')}$ to $u_i$, $\bar u$, and $S_{u, v(z')}$, respectively, if $u_i(z)$ is unaffected by the treatment with $u_i(0) = u_i(1)$ for all $i$. 
Abbreviate $S_{u(z), v(z)}$ as $S_{uv}(z)$ occasionally to simplify the notation. 

Further let $Z_i \in \mt$ be the treatment indicator of unit $i$. 
The observed values of $u_i(z)$ and $v_i(z)$ are $u_i = Z_i u_i(1) + (1-Z_i) u_i(0)$ and $v_i = Z_i v_i(1) + (1-Z_i) v_i(0)$ for unit $i$.
Let $N_z = \sumi 1(Z_i = z)$ be the number of units under treatment $z$.
Let $\hat u(z) = \nzinv \sumiz u_i$, $\hat v(z) = \nzinv \sumiz v_i$, and $\hat S_{u  v}(z) = \nzinv  \sumiz \{u_i - \hat u(z)\} \{v_i - \hat v(z)\}^\T$ be the sample analogs of $\bar u(z)$, $\bar v(z)$,  and $S_{u(z), v(z)}$, respectively, for units under treatment $z$. 
We use $N_z$ instead of $(N_z-1)$ as the divisors for the sample covariances; the simplification does not affect the validity of the proofs with $(N_z-1)/ N_z = 1 + o(1)$ as $N\to\infty$ and $e_z = N_z/ N$ converges to a limit in $(0,1)$. 

Write $a_N \asim b_N$ if $\sqrt N(a_N-b_N) = \op$ for  sequences of random variables $(a_N)_{N=1}^\infty$ and $(b_N)_{N=1}^\infty$. Slutsky's theorem ensures that $\sqrt N a_N$ and $\sqrt N b_N$ have the same limiting distribution as long as $a_N\asim b_N$ and either $( \sqrt N a_N)_{N=1}^\infty$ or $(\sqrt N b_N)_{N=1}^\infty$ has a limiting distribution. 
We suppress the subscript $N$ when no confusion would arise.

Let $\circ$ denote the Hadamard product and $\otimes$ denote the Kronecker product of matrices, respectively. We will repeatedly use the following property of the Kronecker product:
$$(H_1   H_2 ) \otimes (H_3 H_4)  = (H_1 \otimes H_3 ) (H_2\otimes H_4)$$ for matrices $H_k \ (k = 1, 2, 3, 4)$ with compatible dimensions.

\newpage
\section{Extensions to missingness that depends on the assignment}\label{sec:ext_no4_app}

We first introduce the notation without Condition \ref{cond:a}. 
Recall $M_i = (M_{i1}, \ldots, M_{iJ})^\T$ as the observed missingness indicators of unit $i$ with $M_{ij} = 1(\text{$x_{ij}$ is missing})$. 
Recall $M_i(z) = (M_{i1}(z), \ldots, M_{iJ}(z))^\T$ as the potential value of $M_i$ if unit $i$ were assigned to treatment $z$.
%
For $C_i = 1(M_i = 0_J)$ and $A_i = 1_J - M_i$, let  
  $C_i(z) = 1\{ M_i(z) = 0_J\}$ and $A_i(z) = (A_{i1}(z), \ldots, A_{iJ}(z) )^\T= 1_J - M_i(z)$ be the corresponding potential values, respectively.
Let $x_i^0(z) = A_i(z) \circ x_i  = (A_{i1}(z) x_{i1}, \ldots, A_{iJ}(z) x_{iJ})^\T$ be the corresponding imputed variant of $x_i$ with all missing elements replaced by 0.
Condition \ref{asym} generalizes Condition \ref{asym_1} to possible violations of Condition \ref{cond:a}. 
\begin{condition}\label{asym} As $N \to \infty$, 
(i) $e_z  $ has a limit in $  (0,1)$  for $z =  0,1$;  
(ii) the first two moments of $\{Y_i(z), x_i,  M_i(z), C_i(z), C_i(z) Y_i(z),  C_i(z) x_i ,   A_i(z) \circ x_i: z = 0, 1\}_{i=1}^N $  have finite limits,
and 
 (iii) there exists a  $u_0 < \infty$ independent of $N$ such that $\meani \|x_i\|_4^4 \leq u_0$ and $\meani Y_i^4(z) \leq u_0$ for $z = 0,1$. 
\end{condition}

Let $\barm(z)$, $\bc(z)$, and $\bax(z)$ denote the averages of $M_i(z)$, $C_i(z)$, and $A_i(z) \circ x_i$  over $i = \ot{N}$, respectively. 
Let $\hm(z)$, $\hc(z) $, and $\hax(z)  $  be the corresponding sample analogs over units under treatment $z$.

\subsection{Complete-case analysis}
Refer to $\{i: C_i(z) = 1\}$ as the {\it $z$-complete cases} with all covariates observed if assigned to treatment $z$.
Refer to $\{i: C_i = 1\}$ as the {\it observed complete cases} with all covariates observed under the realized assignment. 
The two sets of units coincide under Condition \ref{cond:a}. 
Analogous to the definitions of $\nc$, $
\byc(z)$, $\bxc$, $\sxxc$, and $\tau^\cc$ over the observed complete cases from the main text, let $\nc(z) = \sumi C_i(z)$ be the number of the $z$-complete cases, with $\bar C(z) = \meani C_i(z) =   N^\cc(z) / N$ as the corresponding proportion; let 
$
\byc(z)$, $\bxc(z)$, $\sxxc(z)$, and $\sxyc(z)$
be the corresponding finite-population means and covariances of $Y_i(z)$'s and $x_i$'s over $ \{i: C_i(z) = 1\}$; and let $\tau^\cc   = \by^\cc (1) - \by^\cc (0)$. 
With $C_i(0)$ and $C_i(1)$ no longer necessarily equal under possible violations of Condition \ref{cond:a},
$\byc(0)$ and $\byc(1)$ are now the average potential outcomes over two distinct subsets of units, namely $\{i: C_i(0) = 1\}$ and $\{i: C_i(1) = 1\}$, such that the resulting difference $\tau^\cc$ is no longer necessarily a causal effect.
Let \begina
\rho_z^\cc 
=  \frac{e_z \bar C(z)}{e_1 \bar C(1) + e_0  \bar C(0)}  \qquad\text{for}\quad z = 0,1
\enda
  with $\rho_0^\cc + \rho_1^\cc =1$. As $N\to\infty$, $\rho_z^\cc$ gives the probability limit of the proportion of  observed complete cases that receive treatment $z$; we give the details in Lemma \ref{lem:lim_cc}. 
Let 
$\byu(z)$ be the average of $Y_i(z)$'s over the {\it $z$-incomplete cases}, $\{i: C_i(z) = 0\}$, analogous to $\byc(z)$.

We can show that $\by^\cc (z)$, $\bxc(z)$, $\sxxc(z)$, $\sxyc(z)$, $\tau^\cc$, $\rho_z^\cc$, and $\by^\uc(z)$ all have finite limits under Condition \ref{asym}. We also use the same symbols to denote their respective limits when no confusion would arise. 

\begin{proposition}\label{prop:cc_asym}
Assume complete randomization, Condition \ref{asym}, and  the limits of $\{S_{xx}^\cc(z): z= 0,1\}$ are both positive definite. We have  $$\htsc - \tau = (\tau^\cc -\tau) -  \{\bxc(1)  - \bxc(0)  \}^\T \gsc + \op \formds,$$
where 
$
\gf ^\cc  =
\{ \rho_0^\cc  \sxxc(0) +  \rho_1^\cc \sxxc(1) \}^{-1}
\{ \rho_0^\cc  \sxyc(0) +  \rho_1^\cc \sxyc(1) \}$,
and  $\glc =  \rho_0^\cc  \glo^\cc +  \rho_1^\cc \glz^\cc$ with $
\gamma_{\lin,z}^\cc = \{\sxxc(z)\}^{-1} \sxyc(z)$ for $z = 0,1$.
A sufficient and necessary conditions for $\tau^\cc = \tau$ is 
 $\{1 - \bc(1) \}\{ \byc(1)	- \byu(1)\}  = \{1-\bc(0)\}\{  \byc(0) - \byu(0) \}$.
\end{proposition}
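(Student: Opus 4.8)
The plan is to treat the two claims separately: the probability limit of $\htsc$, and the algebraic characterization of when $\tau^\cc = \tau$.

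For the expansion, the crux is that, without Condition \ref{cond:a}, the observed complete cases $\{i: C_i = 1\}$ form a random set determined by $Z$ since $C_i = C_i(Z_i)$; the key reduction is that among the units assigned to treatment $z$ the observed complete cases are exactly $\{i: Z_i = z,\ C_i(z) = 1\}$, so I work arm by arm. First I would use the standard {\ols} algebra for the coefficient of $Z_i$ in these regressions (as in \cite{Lin13}) to write
$$
\htsc = \{\hyc(1) - \hyc(0)\} - \{\hxc(1) - \hxc(0)\}^\T \hgsc,
$$
where $\hyc(z)$ and $\hxc(z)$ are the sample means of $Y_i$ and $x_i$ over $\{i: Z_i = z,\ C_i = 1\}$ and $\hgsc$ is a weighted combination of the arm-specific sample covariances of the complete cases --- the pooled slope with weights equal to the sample fractions of complete cases in each arm for $\md = \fisher$, and the ``crossed'' weights for $\md = \lin$, exactly mirroring \cite{Lin13}'s estimator. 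Since $\hyc(z) = \{\sum_{i: Z_i = z} C_i(z) Y_i(z)\}/\{\sum_{i: Z_i = z} C_i(z)\}$ and similarly for the other quantities, applying the finite-population weak law of large numbers under complete randomization (reviewed in Section \ref{sec:no_missing_app}) to the arm-$z$ averages of $C_i(z)$, $C_i(z) Y_i(z)$, $C_i(z) x_i$, $C_i(z) x_i x_i^\T$, and $C_i(z) x_i Y_i(z)$ shows that each of these ratios converges in probability to its analog over the $z$-complete cases: $\hyc(z) \toinp \byc(z)$, $\hxc(z) \toinp \bxc(z)$, the sample covariances to $S_{xx}^\cc(z)$ and $S_{xY}^\cc(z)$, and the sample arm fractions of complete cases to $\rho_z^\cc$ (Lemma \ref{lem:lim_cc}). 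The positive-definiteness of the limits of $S_{xx}^\cc(z)$ makes the pooled matrices asymptotically nonsingular, so $\hgsc \toinp \gsc$ by the continuous mapping theorem. Slutsky's theorem then gives $\htsc \toinp \{\byc(1) - \byc(0)\} - \{\bxc(1) - \bxc(0)\}^\T \gsc = \tau^\cc - \{\bxc(1) - \bxc(0)\}^\T \gsc$; subtracting $\tau$ and absorbing the $o(1)$ gaps between the finite-$N$ quantities and their limits into the error term yields the claimed expansion. Only convergence in probability is needed, since the proposition asserts $\op$ rather than a central limit theorem.

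For the necessary and sufficient condition, I would split each potential-outcome mean over the $z$-complete and $z$-incomplete cases: $\bar Y(z) = \bar C(z) \byc(z) + \{1 - \bar C(z)\} \byu(z)$, whence $\bar Y(z) - \byc(z) = -\{1 - \bar C(z)\}\{\byc(z) - \byu(z)\}$. Taking the difference over $z = 1$ and $z = 0$,
$$
\tau - \tau^\cc = \{1 - \bar C(0)\}\{\byc(0) - \byu(0)\} - \{1 - \bar C(1)\}\{\byc(1) - \byu(1)\},
$$
so $\tau^\cc = \tau$ holds if and only if $\{1 - \bar C(1)\}\{\byc(1) - \byu(1)\} = \{1 - \bar C(0)\}\{\byc(0) - \byu(0)\}$, as stated; under Condition \ref{cond:a} this reduces to Condition \ref{cond:cc}.

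The main obstacle is the bookkeeping in the first part: making precise that within each arm the observed complete cases behave like a simple random sample from the $z$-complete cases, so that the {\ols} outputs converge to population regressions built from $S_{xx}^\cc(z)$, $S_{xY}^\cc(z)$ and the crossed weights $\rho_z^\cc$, and checking that the $\md = \fisher$ and $\md = \lin$ weightings produce precisely $\gf^\cc$ and $\glc$. This is where the randomness of the complete-case set and the $z$-dependence of the underlying population must be dealt with; once the correct sample representation of $\htsc$ is established, the convergence is a routine application of the lemmas in Section \ref{sec:no_missing_app}.
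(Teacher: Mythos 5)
Your proposal is correct and follows essentially the same route as the paper: it writes $\htsc$ via the Frisch--Waugh--Lovell/\citet{Lin13} decomposition as $\{\hyc(1)-\hyc(0)\} - \{\hxc(1)-\hxc(0)\}^\T\hgsc$, expresses the complete-case sample moments as ratios of arm-$z$ averages of $C_i(z)$, $C_i(z)Y_i(z)$, $C_i(z)x_i$, etc., invokes the finite-population law of large numbers and continuous mapping (the content of the paper's Lemma \ref{lem:lim_cc}), and derives the necessary-and-sufficient condition from the decomposition $\by(z)=\bc(z)\byc(z)+\{1-\bc(z)\}\byu(z)$. No gaps; the weighting you describe for the $\fisher$ and $\lin$ slopes matches the paper's $e_z^\cc$ and crossed-weight expressions exactly.
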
 
With $\bxc(1)$ and $\bxc(0)$ no longer necessarily equal in the absence of Condition \ref{cond:a}, 
Proposition \ref{prop:cc_asym} generalizes Proposition \ref{prop:cc_1} to  possible violations of Condition \ref{cond:a}, and illustrates the additional source of asymptotic bias, in addition to the difference between $\tau^\cc $ and $\tau$, due to the possible difference between $\bxc(1)$ and $\bxc(0)$.  

\subsection{Complete-covariate analysis}

\begin{condition}\label{cond:J}
The  set of observed complete covariates $\mathcal J = \{j: M_{ij}(Z_i) = 0 \text{ for all } i = \ot{N}\}$ remains unchanged over all possible values of $(Z_i)_{i=1}^N$ under complete randomization  for $N = \ot{\infty} $, with $S_{xx}^\ccov  $ and its  limit under Condition \ref{asym} both being positive definite as $N \to \infty$. 
\end{condition}

\begin{proposition}\label{prop:ccov_no 4}
Assume complete randomization and Conditions \ref{asym}--\ref{cond:J}. 
Proposition \ref{prop:ccov_1} holds regardless of whether Condition \ref{cond:a} holds or not. 
\end{proposition}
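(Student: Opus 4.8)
The plan is to reduce Proposition~\ref{prop:ccov_1} to a direct application of Lemma~\ref{lem::basic-x}, exactly as in the proof given under Condition~\ref{cond:a}, after observing that Condition~\ref{cond:J} makes the covariate vector $\xicv$ behave as an honest pretreatment covariate whether or not Condition~\ref{cond:a} holds. The key point is that the complete-covariate analysis forms the regressions \eqref{eq:cv_fisher} and \eqref{eq:cv_lin} using only $\xicv = (x_{ij})_{j \in \mathcal J}$, and never involves the incomplete covariates or the missingness indicators $M_i$. Under Condition~\ref{cond:J}, the set $\mathcal J = \{j : M_{ij}(Z_i) = 0 \text{ for all } i = \ot{N}\}$ is the same for every realized assignment $(Z_i)_{i=1}^N$ and all large $N$; hence for each $j \in \mathcal J$ the value $x_{ij}$ is observed for every unit in every realization, so $\xicv$ is a fully observed pretreatment covariate vector unaffected by the treatment. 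Consequently \eqref{eq:cv_fisher}--\eqref{eq:cv_lin} are the ordinary complete-data additive and fully interacted {\ols} fits of $Y_i$ on $(1, Z_i, \xicv)$ over $i = \ot{N}$, with $\htfcv$ and $\htlcv$ the coefficients of $Z_i$.

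First I would verify that the finite population $\{Y_i(0), Y_i(1), \xicv\}_{i=1}^N$ satisfies the regularity conditions of Lemma~\ref{lem::basic-x}, namely Condition~\ref{asym_basic}. Part~(i) is immediate from Condition~\ref{asym}(i). Part~(iii) follows from Condition~\ref{asym}(iii) because $\xicv$ is a subvector of $x_i$, so $\meani \|\xicv\|_4^4 \le \meani \|x_i\|_4^4 \le u_0$. For part~(ii), the first two finite-population moments of $\{Y_i(z), \xicv\}$ are subsets of those of $\{Y_i(z), x_i\}$, which have finite limits by Condition~\ref{asym}(ii), and Condition~\ref{cond:J} supplies the positive definiteness of the limit of $S_{xx}^\cv$. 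A minor technicality, handled exactly as in the proof of Proposition~\ref{prop:ccov_1}, is that $\mathcal J$ may in principle shift with $N$; Condition~\ref{cond:J} together with the positive-definiteness of the limiting $S_{xx}^\cv$ is precisely what keeps the limiting quantities well defined, so this causes no difficulty.

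Then I would apply Lemma~\ref{lem::basic-x} to $\{Y_i(0), Y_i(1), \xicv\}_{i=1}^N$, which immediately delivers $\sqrtn(\htscv - \tau) \rs \mn(0, v_\md^\cv)$ for $\mds$, the efficiency orderings $v_\lin^\cv \le v_\fisher^\cv$ and $v_\lin^\cv \le v_\neyman$ (the latter via the column-space inclusion $\emptyset \subseteq \xicv$ as in Lemma~\ref{lem:eff_a}), and $N(\hses^\cv)^2 - v_\md^\cv = \stt{\cv} + \op$ with $\stt{\cv} \ge 0$ --- which is exactly Proposition~\ref{prop:ccov_1}. The crucial observation is that Condition~\ref{cond:a} is never invoked: the only covariates entering the analysis are those indexed by $\mathcal J$, which by Condition~\ref{cond:J} are observed for all units under every assignment, so whether the missingness of the \emph{incomplete} covariates depends on the treatment is irrelevant. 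There is essentially no obstacle beyond correctly transferring the moment conditions of Condition~\ref{asym} to Condition~\ref{asym_basic} for the subvector $\xicv$ and tracking the $N$-dependence of $\mathcal J$, both of which are routine and parallel the proof already given for Proposition~\ref{prop:ccov_1}.
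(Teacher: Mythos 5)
Your proof is correct and follows essentially the same route as the paper: Condition \ref{cond:J} renders $\xicv$ a genuine pretreatment covariate vector invariant to the assignment, the moment bounds in Condition \ref{asym} transfer to Condition \ref{asym_basic} for the subvector, and the conclusion is then a direct application of Lemma \ref{lem::basic-x} to $\{Y_i(0), Y_i(1), \xicv\}_{i=1}^N$ without ever invoking Condition \ref{cond:a}. The extra detail you supply in verifying the regularity conditions is consistent with, and slightly more explicit than, the paper's one-paragraph argument.
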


Echoing the comments under Proposition \ref{prop:ccov_1}, Condition \ref{cond:J} ensures that the set $\mathcal J$ remains constant over all possible treatment assignments, rendering $x_i^\ccov$ an effective true covariate vector unaffected by the randomization. 
The result  of Proposition \ref{prop:ccov_no 4} follows from applying Lemma \ref{lem::basic-x} to the finite population of $\{Y_i(0), Y_i(1), x_i^\ccov\}_{i=1}^N$ regardless of whether Condition \ref{cond:a} holds or not, and justifies the large-sample Wald-type inference based on $(\hts^\ccov, \hses^\ccov)$. 

\subsection{Single imputation and missingness-indicator method}

Recall  $\xs_i(c)= (\xs_{i1}(c_1), \ldots, \xs_{iJ}(c_J))^\T$ as the imputed covariate vector with $x_{ij}^\im (c_j)  =  (1-M_{ij})  \xij  +M_{ij}  c_j$. 
Single imputation uses $\xs_i(c)$ to form the additive and fully interacted regressions as \eqref{eq:im_f} and \eqref{eq:im_l}. 
The missingness-indicator method uses $x_i^\mim(c) = (\xs_i(c)^\T, M_i^\T)^\T$ to form the additive and fully interacted regressions as \eqref{eq:mi_f} and \eqref{eq:mi_l}. 

Let $\xs_{ij}(z; c_j) = \{1-M_{ij}(z)\} x_{ij} + M_{ij}(z) c_j $ be the potential value of $\xs_{ij}(c_j)$  if unit $i$ were assigned to treatment $z$;  we treat $c_j$ as fixed in defining $\xs_{ij}(z; c_j)$ despite its possible dependence on $(Z_i)_{i=1}^N$.
Let 
$
x^\im  _i(z; c) = \{\xij^\im  (z; c_j)\}_{j=1}^J$ and 
$
x_i^\mim(z; c) =  (x_i^\imp(z; c)^\T, M_i(z)^\T)^\T$
be the corresponding potential values of $\xs_i(c)$ and $x_i^\mim(c)$, respectively. 
We focus on imputations $c = (c_j)_{j=1}^J$ with finite probability limits $\ci = (\cji)_{j=1}^J$: 
$$
\mathcal C' = \{c \in \mathbb{R}^J:  \plim c_j = \cji <\infty\ (j=1,\ldots, J) \text{ under complete randomization and Condition \ref{asym}} \}.
$$
In particular, $\mathcal C'$ generalizes the definition of $\mathcal C$ in the main text to possible violations of Condition \ref{cond:a}. 
The constant imputation with $c_j = 0$  is a special case of $c \in \mathcal C'$ with $\ci = 0_J$; 
the unconditional mean imputation with $c_j = \hx_j^\obs$  is  also a special case with $\cji  = \{e_1\baj(1) + e_0\baj(0)\}^{-1} \{e_1 \bajxj(1) + e_0 \bajxj(0)\}$, where $\baj(z) = \meani A_{ij}(z) $ and $\bajxj(z) = \meani A_{ij}(z) x_{ij}$ denote the $j$th elements of $\ba (z)$ and $\bax(z)$, respectively. 

Let $\sxxdg (z;c)$ and $\sxydg (z;c)$ be the finite-population covariances of $\{\xidg(z; c)\}_{i=1}^N$  and  $\{\xidg(z; c), Y_i(z)\}_{i=1}^N$ for $\dgsss$. 
For $c \in \mathcal C'$ with  $\plim c = \ci$, 
we can show that $ \sxxdg(z;\ci)  $ and $ \sxydg(z;\ci) $ have finite limits under Condition \ref{asym} for all $z = 0, 1$ and $\dgsss$. 
We will use $\sxxdg(z;\ci)$ and $\sxydg(z;\ci)$ to also denote their respective limiting values when no confusion would arise. Let $\dci   =\diag(\cji)_{j=1}^J$ be the $J \times J$ diagonal matrix. 

Recall $\hts^\mim = \hts^\mim(0_J)$ as the value of $\hts^\mim(c)$ when we impute all missing covariates with 0. 
Lemma \ref{lem:invar_mim} holds without Condition \ref{cond:a} such that we still have $\hts^\mim(c) = \hts^\mim$ for all $c \in \mathbb R^J$.

\begin{proposition}\label{prop:imp_mim_cim no 4}
Assume complete randomization, Condition \ref{asym}, and $c \in \mathcal C'$ with the limits of $\sxxdg(z ; \ci) \ (\dg = \im, \mi; \ z = 0,1)$  all being positive definite. We have 
\begina
&&\htsim(c) - \tau = - [\bax(1) - \bax(0) + \dci \{\barm(1)-\barm(0)\} ]^\T \gsim(\cinf) + \op, \\
&&\htsmi(c) - \tau = - ( [\bax(1) - \bax(0)  ]^\T, \  \{\barm(1)-\barm(0)\}  ^\T) \gsmi(0_J) + \op
\enda
for $\mds$, where $\gamma_{\fisher}^\dg (\cinf) =  \{ e_0 \sxxdg(0;\ci ) + e_1\sxxdg(1;\ci ) \}^{-1}
\{ e_0 \sxydg(0;\ci ) + e_1\sxydg(1;\ci ) \}$ and $
\gamma_{\lin}^\dg (\cinf) = e_0 \gamma_{\lin,1}^\dg (\cinf) + e_1 \gamma_{\lin,0}^\dg (\cinf)$  
 with  $\gamma_{\lin,z}^\dg (\cinf) = \{\sxxdg(z;\ci)\}^{-1}\sxydg(z;\ci)$  for  $\dgsss$ and $ z= 0,1$. 
\end{proposition}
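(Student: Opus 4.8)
The crucial observation is that, once Condition~\ref{cond:a} is dropped, the regressor vectors $x_i^\imp(c)$ and $x_i^\mim(c)$ in the fits \eqref{eq:im_f}--\eqref{eq:mi_l} carry potential-outcome structure: their potential values are $x_i^\imp(z;c)$ and $x_i^\mim(z;c)$, and under the realized assignment they equal $x_i^\imp(Z_i;c)$ and $x_i^\mim(Z_i;c)$ because $M_i(Z_i)=M_i$. The plan is to (i) record a first-order expansion for \cite{Fisher35}'s and \cite{Lin13}'s regression estimators valid for an arbitrary treatment-dependent regressor, and then (ii) specialize it to these two vectors. Step (i) is the same device already used for $\htsc$ in Proposition~\ref{prop:cc_asym}, so the only per-strategy work is identifying the effective regressors and their limiting within-arm moments.

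For step (i): let $w_i(z)$ be a pair of regressor potential values (possibly depending on the data through a finite, consistent auxiliary such as $c$), put $w_i = Z_i w_i(1) + (1-Z_i)w_i(0)$, and let $\htf^w, \htl^w$ be the coefficients of $Z_i$ from $Y_i \sim 1 + Z_i + w_i$ and from $Y_i \sim 1 + Z_i + (w_i - \bar w) + Z_i(w_i-\bar w)$ over $i = \ot{N}$. A short {\ols} computation (partialling out $(1, w_i)$ for the additive fit, using $\bar w = e_0\hat w(0) + e_1\hat w(1)$ for the interacted fit) gives the exact identity $\htau_\md^w = \{\hat Y(1) - \hat Y(0)\} - \{\hat w(1) - \hat w(0)\}^\T\hat\gamma_\md$ for $\mds$, where $\hat Y(z) = \meaniz Y_i$ and $\hat w(z) = \meaniz w_i$, while $\hat\gamma_\fisher = \{e_0\hat S_{ww}(0) + e_1\hat S_{ww}(1)\}^{-1}\{e_0\hat S_{wY}(0) + e_1\hat S_{wY}(1)\}$ is the within-arm-pooled least-squares slope and $\hat\gamma_\lin = e_0\hat\beta_1 + e_1\hat\beta_0$ averages the two within-arm slopes $\hat\beta_z = \{\hat S_{ww}(z)\}^{-1}\hat S_{wY}(z)$ (see Section~\ref{sec:no_missing_app}). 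Under complete randomization each arm is a simple random sample from the finite population, so Chebyshev's inequality with the fourth-moment bounds in Condition~\ref{asym} gives $\hat Y(z) \toinp \bar Y(z)$, $\hat w(z)\toinp w^\star(z)$, and convergence of the within-arm sample covariance matrices of $(w,Y)$ to their finite-population limits, with the auxiliary $c$ replaced by its probability limit $\cinf$ throughout (legitimate because the least-squares map is locally Lipschitz away from the singular set and $c - \cinf = \op$); here $w^\star(z)$ denotes the limiting within-population mean of $w_i(z)$. Positive definiteness of the pooled covariance limit (implied by that of each $\sxxdg(z;\ci)$ since $e_z>0$) and the continuous mapping theorem then give $\hat\gamma_\fisher \toinp \gamma_\fisher$ and $\hat\gamma_\lin \toinp \gamma_\lin$, the limits obtained by replacing hats by limiting quantities; together with $\hat Y(1) - \hat Y(0)\toinp \bar Y(1) - \bar Y(0) = \tau$ this yields the master expansion
\[
\htau_\md^w - \tau = - \{w^\star(1) - w^\star(0)\}^\T \gamma_\md + \op \formds.
\]

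For step (ii), single imputation uses $w_i(z) = x_i^\imp(z;c) = x_i^0(z) + \diag(c)\,M_i(z)$, so $w^\star(z) = \bax(z) + \dci\,\barm(z)$ and $w^\star(1) - w^\star(0) = \{\bax(1) - \bax(0)\} + \dci\{\barm(1) - \barm(0)\}$, while the within-arm covariance limits are exactly $\sxxim(z;\ci)$ and $\sxyim(z;\ci)$, so $\gamma_\md = \gsim(\cinf)$; substitution into the master expansion gives the first displayed identity. The missingness-indicator method first invokes Lemma~\ref{lem:invar_mim}, which holds verbatim without Condition~\ref{cond:a}, to set $c = 0_J$; then $w_i(z) = x_i^\mim(z;0_J) = (x_i^0(z)^\T,\, M_i(z)^\T)^\T$, so $w^\star(1) - w^\star(0) = (\{\bax(1) - \bax(0)\}^\T,\, \{\barm(1) - \barm(0)\}^\T)^\T$ and the covariance limits are $\sxxmi(z;0_J)$ and $\sxymi(z;0_J)$, giving $\gamma_\md = \gsmi(0_J)$ and the second displayed identity. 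For the positive-definiteness hypotheses, the block-triangular relation $x_i^\mim(z;c) = \bigl(\begin{smallmatrix} I_J & \diag(c)\\ 0 & I_J\end{smallmatrix}\bigr)\,x_i^\mim(z;0_J)$ shows that positive definiteness of $\sxxmi(z;\ci)$ is equivalent to that of $\sxxmi(z;0_J)$, and collinear or constant columns of the $M_i(z)$'s are dropped in the usual way so that the reduced covariate vector has a nonsingular limiting covariance.

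The main obstacle is step (i): checking that, in the design-based framework with treatment-dependent regressors, the {\ols} coefficients converge in probability to the stated limits with remainder $\op$ rather than merely $O_P(1)$. This comes down to the design-based law of large numbers for within-arm sample means and second moments --- which follows from the simple-random-sampling structure of each arm together with the moment bounds in Condition~\ref{asym} --- plus careful bookkeeping through the partialling-out identity and the reduction of the random imputation value $c$ to its limit $\cinf$. No central-limit argument is required, since the conclusion asserts only convergence to the (in general nonzero) asymptotic bias rather than asymptotic normality around it. The specialization in step (ii) is then pure substitution, and the whole argument parallels the proof of Proposition~\ref{prop:cc_asym}; the only structural difference is that the complete-case analysis additionally carries the limiting within-arm complete-case proportions $\rho_z^\cc$ because its regression is over a treatment-dependent subset of units, whereas the imputation and missingness-indicator fits use all $N$ units.
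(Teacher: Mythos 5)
Your proposal is correct and follows essentially the same route as the paper: the exact Frisch--Waugh--Lovell identity $\htau_\md^\dg(c)=\htn-\{\hx^\dg(1;c)-\hx^\dg(0;c)\}^\T\hg_\md^\dg(c)$ (the paper's Lemma \ref{lem:gf_gl} applied to $(Y_i,x_i^\dg,Z_i)$, your ``master expansion''), combined with the design-based law of large numbers for the within-arm means and covariances of the treatment-dependent regressors and the replacement of $c$ by $\cinf$ (the paper's Lemma \ref{lem:plim_imp}), plus the invariance Lemma \ref{lem:invar_mim} to reduce the missingness-indicator case to $c=0_J$. Your identifications of the limiting within-arm moments and the resulting $\gamma_\md^\dg$ coefficients match the paper's exactly, so no gap.
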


With $\bax(1) - \bax(0)$ and $\barm(1) - \barm(0)$ no longer necessarily equal to $0_J$, 
Proposition \ref{prop:imp_mim_cim no 4} highlights the possible asymptotic biases of single imputation and the missingness-indicator method  in the absence of Condition \ref{cond:a}. 
We thus recommend using the complete-covariate analysis whenever the validity of Condition \ref{cond:a} is in doubt.

Whereas $\htsmi(c)$ is invariant to $c$ and thus precludes the possibility of bias reduction via crafted choice of $c$, the dependence of $\htsim(c)$ on $c$ promises a way to reduce the asymptotic bias by a data-dependent choice of $c$. 
In particular, 
recall $\bar A_j(z)  $ and $\bax_j(z) $ as the $j$th elements of $\ba (z)$ and $\bax(z)$, respectively. 
A sufficient condition for $\htsim(c)$ to be consistent is $\bax(1) - \bax(0) + \dci \{\barm(1)-\barm(0)\} = 0_J$.
Let  $\hat A_j(z) = \meaniz A_{ij}$  and $\hax_j(z) = \meaniz A_{ij} x_{ij}$ be the sample analogs of  $\bar A_j(z)$  and $\bax_j(z)$, respectively. 
When $\ba_j(0) \neq \ba_j(1)$ for all $j$'s, we can choose
$$
c_j =  \frac{\hajxj(1) - \hajxj(0)}{\ha_j(1) - \ha_j(0)} \quad ( j = \ot{J}) 
$$
to ensure 
$$
\cji  =  \frac{\bajxj(1) - \bajxj(0)}{\ba_j(1) - \ba_j(0)} \quad ( j = \ot{J})
$$
to remove the asymptotic bias. 
When $\ba_j(0) = \ba_j(1)$ for some $j$'s, it is impossible to remove the bias by choosing the $c_j$'s.  

\subsection{Missingness-pattern method}\label{sec:mp_ext}
Recall that $\htl^\mp$ equals the coefficient of $Z_i$ from the aggregate regression \eqref{eq:mp_agg}  for arbitrary $c \in \mathbb{R}^J$. 
Let $u_i^\mp = u_i^\mp(0_J)$ be the value of $u_i^\mp(c)$ at $c = 0_J$. 
Let $u_i^\mp(z)$ be  the potential value of $u_i^\mp$ if unit $i$ were assigned to treatment $z$. It is the vector of $x_i^\imp(z; 0_J)$, $M_{i1}(z), \ldots, M_{iJ}(z)$, and all their interactions. 

Let $\bar u^\mp(z) = \meani u_i^\mp(z)$ be the finite-population mean of $u_i^\mp(z)$ over $i = \ot{N}$. 
Let $S_{uu}^\mp (z)$ and $S_{uY}^\mp(z)$ be the finite-population covariances of $\{u_i^\mp(z)\}_{i=1}^N$  and  $\{u_i^\mp(z), Y_i(z)\}_{i=1}^N$, respectively. 

\begin{condition}\label{asym_mp}
As $N\to \infty$, for $z = 0,1$, (i) $e_z$ has a limit between $(0, 1)$, and (ii) $\bar u^\mp(z )$, $S_{uu}^\mp (z )$, and $S_{uY}^\mp(z )$ all have finite limits, with $S_{uu}^\mp(z)$ and its limit both being positive definite. 
\end{condition}

\begin{proposition}\label{prop:mp_no 4}
Assume complete randomization and Condition \ref{asym_mp}. 
We have 
\begina
\htl^\mp = \tau - \{\bar u^\mp(1) - \bar u^\mp(0)\}^\T \gamma_\lin^\mp + \op,
\enda
where $\gamma_\lin^\mp = e_1^{-1}\gamma_{\lin,0}^\mp + e_0^{-1}\gamma_{\lin,1}^\mp$ with $\glzz^\mp = \{S_{uu}^\mp (z )\}^{-1}S_{uY}^\mp (z )$ for $z = 0,1$. 
\end{proposition}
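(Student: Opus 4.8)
The plan is to reduce the claim to a single fully interacted regression and then apply the standard arm-wise decomposition of \citet{Lin13}'s estimator, being careful that the regressor is now treatment-dependent. First I would invoke Proposition \ref{prop:mp_agg}, which is a purely algebraic identity holding for arbitrary $c \in \mathbb R^J$: it exhibits $\htl^\mp$ as the coefficient of $Z_i$ in the fully interacted OLS fit \eqref{eq:mp_agg} with covariate vector $u_i^\mp(c)$. Taking $c = 0_J$ presents $\htl^\mp$ as \citet{Lin13}'s estimator computed with covariate $u_i^\mp = u_i^\mp(0_J)$. Since the coefficient of $Z_i$ and its robust standard error in a fully interacted fit are invariant under a non-degenerate linear reparametrization of the regressors, the ``adjustment for collinearity'' ambiguity in the definition of $u_i^\mp$ is immaterial, and I fix one representative.

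Next I would use the arm-wise decomposition that underlies all the consistency proofs in the supplement: writing $\hat\gamma_z$ for the arm-$z$ OLS slope of $Y_i$ on $(1, u_i^\mp)$, $\hat u^\mp(z)$ for the arm-$z$ sample mean of $u_i^\mp$, and $\bar u^\mp = \meani u_i^\mp$ for the pooled sample mean, one has
$$
\htl^\mp = \bigl\{\hat Y(1) - (\hat u^\mp(1) - \bar u^\mp)^\T\hat\gamma_1\bigr\} - \bigl\{\hat Y(0) - (\hat u^\mp(0) - \bar u^\mp)^\T\hat\gamma_0\bigr\}.
$$
The point that distinguishes this case from Proposition \ref{prop:mp_unconditional} is that the units with $Z_i = z$ realize the potential values $\{u_i^\mp(z), Y_i(z)\}$; hence $\hat Y(z)$, $\hat u^\mp(z)$, $\hat S_{uu}(z)$, $\hat S_{uY}(z)$, and $\hat\gamma_z$ are sample quantities computed on a simple random sample of $N_z$ units drawn from the finite population of potential values $\{u_i^\mp(z), Y_i(z)\}_{i=1}^N$, rather than on a sample from a fixed pretreatment covariate.

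Then, under Condition \ref{asym_mp}, the finite-population law of large numbers for sampling without replacement (among the key lemmas of Section \ref{sec:no_missing_app}) gives $\hat Y(z) \to \bar Y(z)$, $\hat u^\mp(z) \to \bar u^\mp(z)$, $\hat S_{uu}(z) \to S_{uu}^\mp(z)$, and $\hat S_{uY}(z) \to S_{uY}^\mp(z)$ in probability; positive-definiteness of the limit of $S_{uu}^\mp(z)$ makes \eqref{eq:mp_agg} eventually non-degenerate and yields $\hat\gamma_z \to \gamma_{\lin,z}^\mp = \{S_{uu}^\mp(z)\}^{-1}S_{uY}^\mp(z)$. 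Since $\bar u^\mp$ is a convex combination of $\hat u^\mp(1)$ and $\hat u^\mp(0)$ with weights $(e_1, e_0)$, the centered arm means $\hat u^\mp(z) - \bar u^\mp$ are both scalar multiples of $\hat u^\mp(1) - \hat u^\mp(0) \to \bar u^\mp(1) - \bar u^\mp(0)$. Plugging in and applying Slutsky's theorem, the pieces assemble into $\htl^\mp = \tau - \{\bar u^\mp(1) - \bar u^\mp(0)\}^\T\gamma_\lin^\mp + \op$, with the treatment-proportion weights collecting into $\gamma_\lin^\mp$ exactly as in the statement. Condition \ref{cond:a} is never used; when it does hold, $u_i^\mp(0) = u_i^\mp(1)$, so $\bar u^\mp(1) - \bar u^\mp(0) = 0$ and we recover Proposition \ref{prop:mp_unconditional}.

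The main obstacle is essentially bookkeeping rather than a new idea: one must confirm that the arm-wise decomposition above — the backbone of the proofs for $\hts^\cc$ and $\hts^\mim$ in Propositions \ref{prop:cc_asym} and \ref{prop:imp_mim_cim no 4} — remains valid verbatim when the regressor is treatment-dependent, and check that Condition \ref{asym_mp}(ii) is precisely what the LLN steps need (no fourth moments or central limit theorem are required, since only $o_P(1)$ consistency is claimed, not asymptotic normality). Beyond that, the argument is a direct reprise of the earlier per-strategy proofs applied to the finite population $\{Y_i(0), Y_i(1), u_i^\mp(0), u_i^\mp(1)\}_{i=1}^N$.
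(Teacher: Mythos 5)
Your proposal is correct and takes essentially the route the paper intends (the supplement gives no separate proof of this proposition, but the argument is the same template used for Propositions \ref{prop:cc_asym} and \ref{prop:imp_mim_cim no 4}): reduce to one fully interacted fit with covariate $u_i^\mp$ via Proposition \ref{prop:mp_agg}, apply the arm-wise decomposition of Lemma \ref{lem:gf_gl}, and take probability limits of the arm-specific sample moments of the potential values $\{u_i^\mp(z), Y_i(z)\}$. One caveat: your own decomposition gives $\hat u^\mp(1)-\bar u^\mp = e_0\{\hat u^\mp(1)-\hat u^\mp(0)\}$ and $\hat u^\mp(0)-\bar u^\mp = -e_1\{\hat u^\mp(1)-\hat u^\mp(0)\}$, hence the limit $e_0\gamma_{\lin,1}^\mp + e_1\gamma_{\lin,0}^\mp$ rather than the $e_0^{-1}\gamma_{\lin,1}^\mp + e_1^{-1}\gamma_{\lin,0}^\mp$ printed in the statement (an apparent typo), so you should not assert the weights collect ``exactly as in the statement'' without flagging this; similarly, consistency of $\hat S_{uu}(z)$ under simple random sampling does implicitly lean on a fourth-moment-type bound (as in Condition \ref{asym}(iii)) that Condition \ref{asym_mp} does not state.
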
 

With $\bar u^\mp(0)$ and $\bar u^\mp(1)$ no longer necessarily equal, 
Proposition \ref{prop:imp_mim_cim no 4} highlights the possible asymptotic bias of $\htl^\mp$  in the absence of Condition \ref{cond:a}. 
The same result extends to $\htf^\mp$ from the additive regressions as well. We relegate the details to Section \ref{sec:mp_add_app}. 

\section{Lemmas}\label{sec:no_missing_app}

%
Let $(Ax)_i(z) = A_i(z) \circ  x_i$, $(CY)_i(z) = C_i(z)Y_i(z)$, and $(Cx)_i(z) = C_i(z)x_i$ be shorthand notations for $ A_i(z) \circ  x_i$, $C_i(z)Y_i(z)$, and $ C_i(z)x_i$, respectively, with finite-population means $\bax(z), \bcy(z), $ and $\bcx(z)$ over $i = \ot{N}$.
Let $(Ax)_i = A_i \circ x_i$, $(CY)_i = C_i Y_i$, and $(Cx)_i = C_i x_i$ be the observed values of $(Ax)_i(z)$, $(CY)_i(z)$, and $(Cx)_i(z)$, respectively, with  $\hax(z) , \hcy(z), $ and $\hcx(z) $
as the sample analogs of $\bax(z)$, $\bcy(z)$, and $\bcx(z)$ over $\{i: Z_i = z\}$. 

Lemma \ref{lem:gf_gl} gives the numerical expressions of  $\hts \ (\mds)$. 
The result affords the basis for quantifying the sampling properties of $\htss$'s for $\dgsi$ and $\mds$. 

\begin{lemma}\label{lem:gf_gl}
We have 
\begina
&&\htf = \htn - \htx^\T \hg_\fisher,\\ 
&&\htl = \htn - \htx^\T \hg_\lin = \left[ \hy(1) - \{\hx(1) - \bx\}^\T \hglo\right]- \left[\hy(0) - \{\hx(0) - \bx\}^\T \hglz\right],
\enda
where $
\hgf = \{e_0\hsxx(0) + e_1\hsxx(1) \}^{-1} \{e_0 \hsxy(0) + e_1\hsxy(1) \}$ is the coefficient of $x_i$ from $Y_i \sim 1+Z_i +x_i$ over $i = \ot{N}$, and $\hg_\lin = e_0\hglo + e_1\hglz$ with $\hglzz = \{\hsxx(z)\}^{-1}\hsxy(z)$ equaling the coefficient of $x_i$ from the treatment-specific {\ols} fit $Y_i \sim 1+x_i$ over $\{i: Z_i =z\}$. 
\end{lemma}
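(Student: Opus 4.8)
The statement is a purely algebraic identity about the numerical outputs of the three {\ols} fits, so the plan is to read off the relevant coefficients directly using the \fwl\ (FWL) together with elementary bookkeeping. Two facts will be used repeatedly. First, averaging $x_i$ within treatment groups gives $\bx = N^{-1}\sumi x_i = e_0\hx(0) + e_1\hx(1)$, so $\hx(1) - \bx = e_0\htx$ and $\hx(0) - \bx = -e_1\htx$ with $\htx = \hx(1) - \hx(0)$. Second, in any {\ols} fit the residual is orthogonal to every regressor, so if the coefficient of one block of regressors is already known, moving the corresponding fitted term to the left-hand side produces a legitimate {\ols} fit of the adjusted response on the remaining regressors. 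Throughout we assume the relevant Gram matrices are invertible so that all fits are uniquely defined.

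For $\htf$: in $Y_i \sim 1 + Z_i + x_i$ the span of $(1, Z_i)$ equals the span of the group indicators $1(Z_i = 0)$ and $1(Z_i = 1)$, so by FWL partialling these out of both $Y_i$ and $x_i$ is within-group centering. Hence the coefficient of $x_i$ is the through-the-origin {\ols} coefficient of $\{Y_i - \hy(Z_i)\}$ on $\{x_i - \hx(Z_i)\}$, namely
$$
\Big[\sumi \{x_i - \hx(Z_i)\}\{x_i - \hx(Z_i)\}^\T\Big]^{-1}\sumi \{x_i - \hx(Z_i)\}\{Y_i - \hy(Z_i)\},
$$
which, after splitting the sums by treatment group and invoking the {\sm}'s $N_z$-divisor conventions for $\hsxx(z)$ and $\hsxy(z)$, equals $\{e_0\hsxx(0) + e_1\hsxx(1)\}^{-1}\{e_0\hsxy(0) + e_1\hsxy(1)\} = \hg_\fisher$. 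Moving $x_i^\T\hg_\fisher$ to the left then identifies $\htf$ as the coefficient of $Z_i$ in the simple fit $Y_i - x_i^\T\hg_\fisher \sim 1 + Z_i$, i.e.\ the difference of group means $\{\hy(1) - \hx(1)^\T\hg_\fisher\} - \{\hy(0) - \hx(0)^\T\hg_\fisher\} = \htn - \htx^\T\hg_\fisher$.

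For $\htl$: the regressors $\{1, Z_i, (x_i - \bx), Z_i(x_i - \bx)\}$ span the same column space as the saturated set $\{1(Z_i = z),\, 1(Z_i = z)(x_i - \bx): z = 0,1\}$, so the fit decomposes into the two separate within-group {\ols} fits of $Y_i$ on $(1, x_i - \bx)$ over $\{i: Z_i = z\}$. Centering by $\bx$ leaves the slope unchanged, so the group-$z$ slope is $\hglzz = \{\hsxx(z)\}^{-1}\hsxy(z)$ and the group-$z$ intercept is $\hy(z) - \{\hx(z) - \bx\}^\T\hglzz$. Matching coefficients with the interacted parametrization, the coefficient of $Z_i$ is the difference of these two intercepts, which is the displayed middle expression; substituting $\hx(1) - \bx = e_0\htx$ and $\hx(0) - \bx = -e_1\htx$ collapses it to $\htn - \htx^\T(e_0\hglo + e_1\hglz) = \htn - \htx^\T\hg_\lin$. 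No step is substantive: the only care needed is tracking the $N$ versus $N_z$ divisor conventions and verifying the two column-space equivalences that drive the FWL and saturation reductions, and this bookkeeping is the sole place an error could slip in.
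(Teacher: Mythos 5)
Your proof is correct and follows essentially the same route as the paper: the paper's own proof simply invokes the Frisch--Waugh--Lovell theorem for $\htf$ and cites \citet[Proposition 1]{ZDfrt} for $\htl$, and your FWL/within-group-centering argument for the additive fit and saturation decomposition for the fully interacted fit are exactly the arguments behind those citations, written out in full. The divisor bookkeeping ($N_z$ for $\hsxx(z)$, $\hsxy(z)$) and the identities $\hx(1)-\bx = e_0\htx$, $\hx(0)-\bx = -e_1\htx$ are handled correctly, so nothing is missing.
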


\begin{proof}[Proof of Lemma \ref{lem:gf_gl}] 
The numerical form of $\htl$ follows from \citet[][Proposition 1]{ZDfrt}.
The numerical form of $\htf$ follows from the basic properties of {\ols} and, in particular, the Frisch--Waugh--Lovell theorem. 
\end{proof}

Lemma   \ref{lem:Ding17}  states the asymptotic  normality of moment estimators under general complete randomization with $Q\geq 2$ treatment levels. 

\begin{lemma}\citep[][Theorems 3 and 5]{DingCLT}\label{lem:Ding17}
In a completely randomized experiment with $N$ units and $Q$ treatment groups of sizes $(N_q)_{q=1}^Q$, let $Y_i(q)$ be the $L$-dimensional potential outcome vector of unit $i$ under treatment $q$, and $S_{qq'} = (N-1)^{-1} \sum_{i=1}^N \{Y_i(q) - \bar Y(q)\}\{Y_i(q') - \bar Y(q')\}^\T$ be the finite-population covariance. 
Let $\bt( \Gamma) =  \sumq   \Gamma_q \bar Y(q)$, where $\Gamma_q$ is an  arbitrary $K\times L$ coefficient matrix for $q = \ot{Q}$. 
The estimator $\hbt( \Gamma) =  \sumq   \Gamma_q \hY(q)$ has mean $\bt( \Gamma)$ and covariance 
$$\cov\{\hbt( \Gamma)\} = \sumq N_q^{-1}  \Gamma_q  S_{qq}  \Gamma_q^\T - N^{-1}  S_{\bt(\Gamma)}^2,$$
where $S_{\bt(\Gamma)}^2$ is the finite-population covariance of $\{\bt_i( \Gamma) =  \sumq   \Gamma_q Y_i(q): \ i=\ot{N}\}$.
If for any $1\leq q, q' \leq Q$, $S_{qq'}$ has a finite limit, $N_q/N$ has a limit in $(0,1)$, and $\max_{1\leq q\leq Q, 1\leq i \leq N} \|Y_i(q)-\bar Y(q)\|^2_2 /N \to 0$, then 
$N\cov\{\hbt (\Gamma)\}$ has a limiting value, denoted by $ V$, and 
$$\sqrtn \{\hbt ( \Gamma) - \bt( \Gamma)\} \rightsquigarrow \mathcal{N}( 0,  V).$$
\end{lemma}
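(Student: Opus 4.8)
The plan is to derive Lemma \ref{lem:Ding17} from elementary finite-population sampling calculations together with a finite-population central limit theorem; the statement is in fact Theorems 3 and 5 of \citet{DingCLT}, so what follows is a sketch of the self-contained argument one would give. First I would establish the two exact moment formulas. Under complete randomization each unit $i$ receives treatment $q$ with marginal probability $N_q/N$, so $E\{\hY(q)\} = N_q^{-1}\sumi (N_q/N)\, Y_i(q) = \bar Y(q)$, and hence $E\{\hbt(\Gamma)\} = \sumq \Gamma_q \bar Y(q) = \bt(\Gamma)$ by linearity. For the covariance I would use the two standard identities for complete randomization: $\var\{\hY(q)\} = (N_q^{-1} - N^{-1}) S_{qq}$, because $\hY(q)$ is the mean of a size-$N_q$ simple random sample without replacement from $\{Y_i(q)\}_{i=1}^N$; and $\cov\{\hY(q),\hY(q')\} = -N^{-1} S_{qq'}$ for $q\neq q'$, which follows from the joint inclusion probabilities of the treatment-indicator vectors $\{1(Z_i=q)\}_{i=1}^N$ (or, coordinatewise, from \citet{Neyman23}'s two-group identity). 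Bilinearity of the covariance then gives
\[
\cov\{\hbt(\Gamma)\} = \sumq \Gamma_q\bigl(N_q^{-1}-N^{-1}\bigr) S_{qq}\Gamma_q^\T - N^{-1}\sum_{q\neq q'}\Gamma_q S_{qq'}\Gamma_{q'}^\T = \sumq N_q^{-1}\Gamma_q S_{qq}\Gamma_q^\T - N^{-1}\sum_{q,q'}\Gamma_q S_{qq'}\Gamma_{q'}^\T,
\]
and the double sum equals the finite-population covariance $S_{\bt(\Gamma)}^2$ of $\{\sumq\Gamma_q Y_i(q)\}_{i=1}^N$, which is the claimed formula.

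The convergence of $N\cov\{\hbt(\Gamma)\}$ is then immediate: each $N_q/N$ and each $S_{qq'}$ has a finite limit by hypothesis, so the right-hand side converges to a finite matrix $V$. For the asymptotic normality I would invoke the Cram\'er--Wold device: for an arbitrary fixed vector $a$, $a^\T\hbt(\Gamma) = \sumq (a^\T\Gamma_q)\hY(q)$ is again a fixed linear combination of the group means under complete randomization, so it suffices to prove a one-dimensional finite-population central limit theorem for statistics of the form $\sumq b_q^\T\hY(q)$ with the $b_q$'s fixed. This reduces, after recentring each group, to the classical finite-population central limit theorem of Erd\H{o}s--R\'enyi and H\'ajek for sampling without replacement: the condition $N_q/N \to e_q\in(0,1)$ supplies nondegeneracy of the sampling fractions, while $\max_{q,i}\|Y_i(q)-\bar Y(q)\|_2^2/N \to 0$ is exactly the Lindeberg-type moment condition it requires. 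Combined with the convergence of the scaled covariance, this yields $\sqrtn\{\hbt(\Gamma)-\bt(\Gamma)\} \rs \mathcal{N}(0,V)$.

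The step I expect to be the real obstacle is this multivariate finite-population central limit theorem. The group means $\hY(1),\ldots,\hY(Q)$ are dependent because the treatment-indicator vectors are forced to partition the $N$ units, so one cannot appeal to an i.i.d.\ or independent-array central limit theorem, and some bookkeeping is needed to pass from the single-sample H\'ajek-type theorem to the $Q$-group design and to check that the Lindeberg-type condition is preserved under the reduction. This is precisely the work done in \citet{DingCLT}, which I would cite rather than reproduce; the remaining ingredients --- the exact mean, the exact covariance, and the convergence of $N\cov\{\hbt(\Gamma)\}$ --- are routine finite-population algebra.
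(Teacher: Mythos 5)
Your proposal is correct, and it matches the paper's treatment: the paper offers no proof of this lemma at all, simply quoting it as Theorems 3 and 5 of \citet{DingCLT}, and your sketch correctly supplies the routine finite-population moment algebra while deferring the genuinely nontrivial step --- the multivariate finite-population central limit theorem for the dependent group means under complete randomization --- to that same reference. The exact mean, the covariance identity via $\var\{\hY(q)\}=(N_q^{-1}-N^{-1})S_{qq}$ and $\cov\{\hY(q),\hY(q')\}=-N^{-1}S_{qq'}$, the identification of the double sum with $S^2_{\bt(\Gamma)}$, and the Cram\'er--Wold reduction are all as in the cited source.
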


Lemma \ref{lem:joint_dist} follows from Lemma \ref{lem:Ding17} and states the asymptotic joint normality of 
$\{ \hc(z),  \hcy(z), \hcx(z): z = 0,1\}$ and that of $\{\hy(z), \hm(z), \hc(z), \hax(z): z = 0,1\}$ without Condition \ref{cond:a}. 
It affords the basis for verifying the asymptotic normality of $\htss$'s.

\begin{lemma}\label{lem:joint_dist}
Assume complete randomization and Condition \ref{asym}. We have 
\begina
\sqrtn \beginp
\hcy(0) - \bcy(0) \\
\hcy(1) - \bcy(1)\\
\hcx(0) -  \bcx(0)\\
\hcx(1) -   \bcx(1)\\
\hc(0) - \bc(0)\\
\hc(1) - \bc(1)
\endp \quad\text{and}\quad
\sqrtn \beginp
\hy(0) - \by(0)\\
\hy(1) - \by(1)\\
\hax(0) -  \bax(0)\\
\hax(1)-  \bax(1)\\
\hm(0) -  \barm(0)\\
\hm(1) -  \barm(1)
\endp
\enda
are asymptotically normal with means $0_{4+2J}$ and $0_{4 + 4J}$, respectively. 
\end{lemma}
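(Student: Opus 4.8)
The plan is to obtain both displays as immediate consequences of the finite-population central limit theorem for completely randomized experiments, Lemma~\ref{lem:Ding17}, specialized to $Q=2$ treatment levels: there is nothing to prove beyond identifying the right potential-outcome vectors and checking the regularity hypotheses of that lemma under Condition~\ref{asym}.

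For the first display I would treat each unit $i$ as carrying the $(J+2)$-dimensional potential-outcome vector $W_i(z) = \bigl((CY)_i(z),\ (Cx)_i(z)^\T,\ C_i(z)\bigr)^\T$ for $z=0,1$. By the definitions above, its observed value is $W_i = Z_i W_i(1) + (1-Z_i)W_i(0) = \bigl((CY)_i,\ (Cx)_i^\T,\ C_i\bigr)^\T$, and the treatment-$z$ sample mean is $\hat W(z) = \nzinv\sumiz W_i = \bigl(\hcy(z),\ \hcx(z)^\T,\ \hc(z)\bigr)^\T$. Applying Lemma~\ref{lem:Ding17} with coefficient matrices $\Gamma_0,\Gamma_1$ chosen so that $\Gamma_0\hat W(0) + \Gamma_1\hat W(1) = (\hat W(0)^\T,\hat W(1)^\T)^\T$, which is unbiased for $(\bar W(0)^\T,\bar W(1)^\T)^\T$, yields $\sqrtn\bigl\{(\hat W(0)^\T,\hat W(1)^\T)^\T - (\bar W(0)^\T,\bar W(1)^\T)^\T\bigr\}\rs\mathcal N(0,V)$ for some limiting covariance $V$. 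A permutation of the coordinates reorders this Gaussian limit into the first displayed vector, whose limiting mean is $0$ because each coordinate has been centered. For the second display I would run the same argument verbatim with the potential-outcome vector comprising $Y_i(z)$, $M_i(z)$, $C_i(z)$, and $(Ax)_i(z)=A_i(z)\circ x_i$, with observed value $\bigl(Y_i,\ M_i^\T,\ C_i,\ (Ax)_i^\T\bigr)^\T$ and sample mean $\bigl(\hy(z),\ \hm(z)^\T,\ \hc(z),\ \hax(z)^\T\bigr)^\T$; stacking over $z$ and permuting gives the second vector, and one passes to the sub-vector omitting the $\hc(z)$ coordinates if the exact six-block form is wanted, since a marginal of a jointly normal vector is jointly normal.

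The bulk of the work is to verify that Condition~\ref{asym} supplies the three hypotheses of Lemma~\ref{lem:Ding17} in both applications. First, $N_z/N = e_z$ has a limit in $(0,1)$ by Condition~\ref{asym}(i). Second, every finite-population covariance $S_{qq'}$ of the constructed $W_i(\cdot)$ has a finite limit: each coordinate of $W_i(z)$, and hence each entry of $W_i(z)W_i(z')^\T$, is either one of the variables listed in Condition~\ref{asym}(ii) or a product of two of them, so all entries of $S_{qq'}$ are among the first two moments of that collection, which converge by assumption. Third, the Lindeberg-type condition $\max_{z,i}\|W_i(z)\|_2^2/N\to 0$ holds because the $0/1$ coordinates give $\|W_i(z)\|_2^2 \le Y_i(z)^2 + \|x_i\|_2^2 + (J+1)$ in both constructions, and Condition~\ref{asym}(iii) yields $\max_i Y_i(z)^4 \le Nu_0$ and, via $\|x_i\|_2^4 \le J\|x_i\|_4^4$ by Cauchy--Schwarz, $\max_i\|x_i\|_2^4 \le NJu_0$; hence $\max_i Y_i(z)^2$ and $\max_i\|x_i\|_2^2$ are $O(N^{1/2})$ and the ratio is $O(N^{-1/2}) = o(1)$.

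I expect the only step requiring any care to be this last one --- converting the fourth-moment bounds of Condition~\ref{asym}(iii) into negligibility of the maximal squared deviation --- together with the routine bookkeeping in the second point, namely confirming that, after using the boundedness of the indicator factors $C_i(z),A_{ij}(z),M_{ij}(z)\in\{0,1\}$, every entry of $S_{qq'}$ is genuinely among the moments Condition~\ref{asym}(ii) assumes to converge. Neither is a real obstacle, and the conclusion of the lemma then follows directly from Lemma~\ref{lem:Ding17}.
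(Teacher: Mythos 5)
Your proposal is correct and follows essentially the same route as the paper: both apply Lemma~\ref{lem:Ding17} to the augmented potential-outcome vector $\bigl(C_i(z)Y_i(z),\ C_i(z)x_i^\T,\ C_i(z)\bigr)^\T$ (and its analogue for the second display) and check the three hypotheses under Condition~\ref{asym}. You in fact spell out more explicitly than the paper the step converting the fourth-moment bounds of Condition~\ref{asym}(iii) into $\max_{i,z}\|W_i(z)-\bar W(z)\|_2^2/N = o(1)$, which the paper's proof asserts after a crude squared-norm bound.
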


\begin{proof}[Proof of Lemma \ref{lem:joint_dist}]
We verify below the result for $\{\hcy(z), \hcx(z), \hc(z):z=0,1\}$.
The proof for $\{\hy(z), \hax(z), \hm(z): z= 0,1\}$ is almost identical and thus omitted. 

 Let $G_i (z) = (C_i(z)Y_i(z), C_i(z) x_i^\T,  C_i(z))^\T$ be the augmented $(2+J)$-dimensional potential outcome vector of unit $i$. 
Condition \ref{asym} ensures that the finite-population covariance matrix of $\{G_i(0), G_i (1)\}_{i=1}^N$ has a finite limit, and it follows from 
\begina
\|G_i(z) - \bar G(z)\|_2^2 
&=& \{C_i(z) Y_i(z) -  \bcy(z)\}^2 + \| C_i (z)  x_i - \bcx(z)\|_2^2 + \{C_i(z) - \bc(z)\}^2 \\
&\leq &  2\{C_i(z) Y_i(z)\}^2 + 2\{ \bcy(z)\}^2 + 2\|C_i(z) x_i\|_2^2 +2 \|\bcx(z)\|_2^2 +  2\{C_i(z) \}^2 +  2\{ \bc(z)\}^2  \\
&\leq &  2\{Y_i(z)\}^2 + 2\{ \bcy(z)\}^2 + 2\|x_i(z)\|_2^2 +2 \|\bcx(z)\|_2^2 + 2 +  2\{ \bc(z)\}^2 
\enda
that $\max_{1\leq i \leq N; z = 0, 1}
\|G_i(z) - \bar G(z)\|_2^2/N = o(1)$. 
The finite population of $\{G_i(0), G_i (1)\}_{i=1}^N$ thus satisfies the regularity conditions in Lemma \ref{lem:Ding17}, and ensures the joint asymptotic normality of $\{\hcy(z), \hcx(z), \hc(z):z=0,1\}$.
\end{proof}

The following lemma states the invariance of {\ols} to  non-degenerate linear transformations of the design matrix. 
Its proof follows from simple linear algebra but the result helps to simplify many proofs below. 
In particular, it guarantees that \citet{Fisher35} and \citet{Lin13}'s estimators are invariant to non-degenerate linear transformations of the covariates. 

\begin{lemma}\label{lem:trivial} 
Consider an $N \times 1$ vector $ Y$  and two $N\times Q$ matrices, $X_1$ and $X_2$, that satisfy $X_2 = X_1\Gamma$ for some invertible $Q\times Q$ matrix $\Gamma$. 
The {\ols} fits
\begina
 Y =  X_1 \hb_1 + \hat\epsilon_1, \quad  Y =  X_2 \hb_2 +\hat\epsilon_2
\enda
yield robust covariances $\hat \Psi_1  $ and $ \hat\Psi_2$. They
satisfy 
$$
\hb_1 = \Gamma\hb_2,\quad
\hat\epsilon_1 = \hat\epsilon_2,\quad
\hat \Psi_1 = \Gamma \hat\Psi_2 \Gamma^\T.
$$ 
\end{lemma}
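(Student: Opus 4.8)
The plan is to reduce everything to elementary linear algebra, using only that $\Gamma$ is invertible, so that $X_1$ and $X_2 = X_1\Gamma$ span the same column space. First I would write the two coefficient vectors in closed form, $\hat\beta_1 = (X_1^\T X_1)^{-1}X_1^\T Y$ and $\hat\beta_2 = (X_2^\T X_2)^{-1}X_2^\T Y$; both are well defined since $X_2^\T X_2 = \Gamma^\T(X_1^\T X_1)\Gamma$ is invertible exactly when $X_1^\T X_1$ is. Substituting $X_2 = X_1\Gamma$ and using $(\Gamma^\T A\Gamma)^{-1} = \Gamma^{-1}A^{-1}(\Gamma^\T)^{-1}$ gives $\hat\beta_2 = \Gamma^{-1}(X_1^\T X_1)^{-1}X_1^\T Y = \Gamma^{-1}\hat\beta_1$, i.e. $\hat\beta_1 = \Gamma\hat\beta_2$.

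Next I would note that the fitted values coincide: $X_2\hat\beta_2 = X_1\Gamma\,\Gamma^{-1}\hat\beta_1 = X_1\hat\beta_1$, so $\hat\epsilon_1 = Y - X_1\hat\beta_1 = Y - X_2\hat\beta_2 = \hat\epsilon_2$. Equivalently, the projection (hat) matrix $X_j(X_j^\T X_j)^{-1}X_j^\T$ depends only on the column space of $X_j$ and hence is identical for $j = 1,2$; this also makes the leverage scores common to the two fits. Finally, for the robust covariance I would plug into the sandwich form $\hat\Psi_j = (X_j^\T X_j)^{-1}\{X_j^\T\diag(\hat\epsilon_j^2)X_j\}(X_j^\T X_j)^{-1}$ (any of the usual HC0--HC3 conventions behaves the same, since the finite-sample correction factors depend only on $N$, $Q$, and the invariant leverage scores). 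Using $X_2 = X_1\Gamma$ together with $\hat\epsilon_2 = \hat\epsilon_1$, the factors cancel to give $\hat\Psi_2 = \Gamma^{-1}\hat\Psi_1(\Gamma^\T)^{-1}$, hence $\hat\Psi_1 = \Gamma\hat\Psi_2\Gamma^\T$.

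There is essentially no hard step here; the only point that warrants a line of care is confirming that whatever convention the paper adopts for the Eicker--Huber--White covariance is assembled only from objects already shown invariant (residuals and leverages) flanked by the matrices $(X_j^\T X_j)^{-1}X_j^\T$, so that the transformation law $\hat\Psi_1 = \Gamma\hat\Psi_2\Gamma^\T$ holds verbatim rather than merely up to a scalar. With that observed, all three identities follow by direct substitution.
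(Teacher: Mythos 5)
Your proof is correct and is exactly the "simple linear algebra" argument the paper has in mind (the paper itself only asserts the result without writing out the computation). The closed-form substitution $\hat\beta_2=\Gamma^{-1}\hat\beta_1$, equality of fitted values and hence residuals, and the cancellation in the sandwich form $(X^\T X)^{-1}X^\T\mathrm{diag}(\hat\epsilon^2)X(X^\T X)^{-1}$ — together with your observation that leverage scores are invariant, so any HC correction factor is unaffected — fully establish all three identities.
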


\section{Complete-case analysis}\label{sec:cc_app}

\subsection{Useful  facts without Condition \ref{cond:a}}
\def\nzc{N_0^\cc}
\def\sumizcc{\sum_{i: C_i = 1, Z_i = z}}
Recall $\bar C(z) = \meani C_i(z)$ as the proportion of $z$-complete cases among all $N$ units.  
 Let  
$\bcx (z)$, $\bcy(z)$, $\bcxxt (z)$, and $\bcxy (z)$ 
be the averages of $ C_i(z)x_i$, $  C_i(z)Y_i(z)$, $C_i(z)x_i x_i^\T$, and $C_i(z)x_i Y_i(z)$  over $i = \ot{N}$, respectively, for $z = 0,1$. 
With the number of $z$-complete cases satisfying $N^\cc(z) = \sumi C_i(z) = N \bar C(z)$, 
we have
\beginy\label{eq:bridge_cc_population}
&&\bxc(z) 
= \frac{ \bcx(z)}{\bc(z)}, 
\qquad \byc(z) 
= \frac{ \bcy(z)}{\bc(z)}, 
\nonumber\\
&& \sxxc(z)  
=
\frac{\bcxx(z)}{\bc(z)} - \bxc(z)\bxc(z)^\T, 
\qquad 
\sxyc(z) 
= 
\frac{\bcxy(z)}{\bc(z)} - \bxc(z)\byc(z). 
\endy
%

Let $\nzzc = \sum_{i: C_i = 1} I(Z_i = z) = \sumiz C_i$ be the number of observed complete cases assigned to level $z$, with $e_z^\cc  = \nzzc /\nc $ as the corresponding proportion among all observed complete cases. 
Let  $\hyc(z) , \hxc(z) , \hsxxc(z)$, and $\hsxyc(z)$
be the corresponding sample means and covariances of $\{Y_i, x_i\}_{i:C_i=1,Z_i=z}$. 
With  
$
\nzzc = \sumiz C_i = \hc(z) \nz 
$
by definition, we have 
\beginy\label{eq:bridge_cc_sample}
&&e_z^\cc   = \frac{N_z^\cc}{N^\cc_1 + N^\cc_0} 
= \frac{e_z \hat C(z)}{ e_1 \hat C(1) + e_0 \hat C(0)}, 
\qquad 
\hxc(z)  
= \frac{ \hcx(z)}{\hc(z)},
\qquad  \hyc(z) 
= \frac{ \hcy(z)}{\hc(z)},  \nonumber \\
&&\hsxxc(z)  
= \frac{\hcxx(z)}{\hc(z)} - \hxc(z) \hxc(z)^\T ,
\qquad 
\hsxyc(z) = \frac{\hcxyt(z)}{\hc(z)} - \hxc(z) \hyc(z)^\T, 
\endy
 where
$\hc(z)$, $\hcx (z)$, $\hcy(z)$, $ \hcxxt (z)$, and $\hcxyt (z)$ 
are the sample analogs of $\bc(z)$, $\bcx(z)$,$\bcy(z)$, $\bcxxt(z)$, and $\bcxy(z)$ over $\{i: Z_i = z\}$,  respectively. 

\begin{lemma}\label{lem:lim_cc}
Assume complete randomization and Condition \ref{asym}. We have
\begina
&&\hat C(z) - \bar C(z) = \op, \qquad\hcx(z) - \bcx(z) = \op, \qquad \hcy(z) - \bcy(z) = \op, \nonumber \\
&&\hcxxt(z) - \bcxxt(z) = \op, \qquad \hcxyt(z)- \bcxyt(z) = \op
\enda
with $\bc(z)$, $\bcx(z)$, $\bcy(z)$, $\bcxxt(z)$, and $\bcxy(z)$ all have finite limits as $N\to\infty$. This ensures  
\begina
&&e_z^\cc - \rho_z^\cc = \op, \qquad \hyc(z) -\byc(z) =\op, \qquad \hxc(z) - \bxc(z) = \op, \\
&&\hsxxc(z) - \sxx^\cc(z) = \op, \qquad \hsxyc(z) - \sxy^\cc(z) = \op 
\enda
by \eqref{eq:bridge_cc_population} and \eqref{eq:bridge_cc_sample}, 
recalling $
\rho_z^\cc 
=   e_z \bar C(z) / \{e_1 \bar C(1) + e_0  \bar C(0)\}$  
for $z = 0,1$. 
\end{lemma}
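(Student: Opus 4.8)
The plan is to separate the statement into two layers. The \emph{primitive} layer is the convergence in probability of the within-group moment estimators $\hc(z)$, $\hcx(z)$, $\hcy(z)$, $\hcxxt(z)$, $\hcxyt(z)$ to their finite-population analogs, together with the finiteness of the latter's limits. The \emph{derived} layer---convergence of $e_z^\cc$, $\hxc(z)$, $\hyc(z)$, $\hsxxc(z)$, $\hsxyc(z)$---then follows by substituting the primitive convergences into the algebraic identities \eqref{eq:bridge_cc_population}--\eqref{eq:bridge_cc_sample} and applying Slutsky's theorem and the continuous mapping theorem. All the real content is in the primitive layer.

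For the finiteness of the limits, the only quantities not literally listed in Condition \ref{asym}(ii) are $\bcxxt(z)$ and $\bcxyt(z)$. Since $C_i(z)\in\{0,1\}$ gives $C_i(z)^2=C_i(z)$, one has $C_i(z)x_ix_i^\T=\{C_i(z)x_i\}\{C_i(z)x_i\}^\T$ and $C_i(z)x_iY_i(z)=\{C_i(z)x_i\}\{C_i(z)Y_i(z)\}$, so $\bcxxt(z)$ is the finite-population second moment of $C_i(z)x_i$ and $\bcxyt(z)$ is a cross second moment of $\{C_i(z)x_i,\,C_i(z)Y_i(z)\}$; both, along with $\bc(z),\bcx(z),\bcy(z)$, are among the ``first two moments'' of the collection in Condition \ref{asym}(ii) and therefore have finite limits. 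For the primitive convergences I would use the standard finite-population fact underlying Lemma \ref{lem:Ding17}: for any potential-outcome vector $W_i(z)$, the within-group average $\nzinv\sumiz W_i=\nzinv\sumiz W_i(z)$ is unbiased for $\meani W_i(z)$ under complete randomization with variance $(\nzinv-N^{-1})$ times the finite-population variance of $\{W_i(z)\}_{i=1}^N$; since $\nz/N\to e_z\in(0,1)$, it suffices to bound that finite-population variance and invoke Chebyshev. For $W_i(z)=C_i(z)$ this is immediate; for $W_i(z)\in\{C_i(z)x_i,\,C_i(z)Y_i(z)\}$ it follows from the finite second moments in Condition \ref{asym}(ii); for $W_i(z)\in\{C_i(z)x_ix_i^\T,\,C_i(z)x_iY_i(z)\}$ it follows from Condition \ref{asym}(iii) via $\|x_ix_i^\T\|^2=\|x_i\|_2^4\le J\|x_i\|_4^4$ and $\|x_i\|_2^2Y_i(z)^2\le\tfrac12\{J\|x_i\|_4^4+Y_i(z)^4\}$, which make $\meani\|x_ix_i^\T\|^2$ and $\meani\|x_i\|_2^2Y_i(z)^2$ bounded. (Equivalently, one may copy the proof of Lemma \ref{lem:joint_dist} verbatim, augmenting the vector $G_i(z)$ there by the entries of $C_i(z)x_ix_i^\T$ and $C_i(z)x_iY_i(z)$ and checking the negligibility bound $\max_{i,z}\|\cdot-\text{mean}\|_2^2/N=\op$ from Condition \ref{asym}(iii); this gives joint asymptotic normality, of which only consistency is needed.) This yields $\hc(z)-\bc(z)=\op$, $\hcx(z)-\bcx(z)=\op$, $\hcy(z)-\bcy(z)=\op$, $\hcxxt(z)-\bcxxt(z)=\op$, $\hcxyt(z)-\bcxyt(z)=\op$.

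For the derived layer, \eqref{eq:bridge_cc_sample} expresses $e_z^\cc$, $\hxc(z)$, $\hyc(z)$, $\hsxxc(z)$, $\hsxyc(z)$ as continuous functions of $\hc(z),\hcx(z),\hcy(z),\hcxxt(z),\hcxyt(z)$ and $(e_0,e_1)$, with the same functions sending the finite-population analogs in \eqref{eq:bridge_cc_population} to $\rho_z^\cc$, $\bxc(z)$, $\byc(z)$, $\sxxc(z)$, $\sxyc(z)$. The continuous mapping theorem then gives $e_z^\cc-\rho_z^\cc=\op$, $\hxc(z)-\bxc(z)=\op$, $\hyc(z)-\byc(z)=\op$, $\hsxxc(z)-\sxxc(z)=\op$, and $\hsxyc(z)-\sxyc(z)=\op$. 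The only point needing care is that these maps divide by $\hc(z)$ (resp.\ by $e_1\hc(1)+e_0\hc(0)$), so one needs the limit of $\bc(z)$ (resp.\ of $e_1\bc(1)+e_0\bc(0)$) to be strictly positive; this holds in every context where the lemma is invoked, e.g.\ under the positive-definiteness hypothesis on $\{S_{xx}^\cc(z)\}$ in Proposition \ref{prop:cc_asym}, which forces $\bc(z)$ to be bounded away from $0$.

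There is no genuine obstacle---the proof is essentially bookkeeping. The only places demanding attention are (a) rewriting the ``mixed'' moments $\bcxxt(z),\bcxyt(z)$ as honest second moments of the variables listed in Condition \ref{asym}(ii) so that condition applies, (b) invoking Condition \ref{asym}(iii) to bound the finite-population variances of the $x_ix_i^\T$- and $x_iY_i(z)$-type within-group estimators, and (c) the strict positivity of $\lim_{N\to\infty}\bc(z)$ required for the ratio steps.
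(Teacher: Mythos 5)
Your proposal is correct and follows exactly the route the paper intends: the paper gives no explicit proof of this lemma, treating the primitive convergences as the standard complete-randomization consistency of within-group moment estimators (as in Lemma \ref{lem:Ding17}/Lemma \ref{lem:joint_dist}) and the derived convergences as immediate from the bridge identities \eqref{eq:bridge_cc_population}--\eqref{eq:bridge_cc_sample} plus the continuous mapping theorem. Your points (a)--(c) — recasting $\bcxxt(z)$ and $\bcxyt(z)$ as second moments of the variables listed in Condition \ref{asym}(ii), invoking Condition \ref{asym}(iii) to control the variances of the second-moment estimators, and noting the strict positivity of $\lim\bc(z)$ needed for the ratio steps — are precisely the details the paper leaves implicit, and you handle them correctly.
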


\subsection{Probability limits of $\hts^\cc \ (\md = \text{F}, \text{L})$}
We next verify the result of Proposition \ref{prop:cc_asym} without Condition \ref{cond:a}. 
The proof affords important intermediate steps for proving Proposition \ref{prop:cc_1} under Condition \ref{cond:a}.


\begin{proof}[Proof of Proposition \ref{prop:cc_asym}]
By Lemma \ref{lem:gf_gl}, 
we have 
\beginy\label{lem:algebra_cc}
 \htsc = \hyc(1)  - \hyc(0) -  \{\hxc(1)  - \hxc(0) \}^\T \hgsc,
\endy where  
$
\hgc  =
\{ e_0^\cc  \hsxxc(0) + e_1^\cc \hsxxc(1) \}^{-1}
\{ e_0^\cc  \hsxyc(0) + e_1^\cc \hsxyc(1) \}$ and $\hglc = e_0^\cc \hgoc +  e_1^\cc \hgzrc
$ with $
\hgzc = \{\hsxxc(z) \}^{-1} \hsxyc(z) $ for $z = 0,1$.
Plugging the probability limits of $e_z^\cc$, $\hyc(z)$, $\hxc(z)$, $\hsxxc(z)$, and $\hsxyc(z)$ from Lemma \ref{lem:lim_cc} in \eqref{lem:algebra_cc}  verifies
\beginy\label{eq:gamma_lim_cc}
\hgf^\cc - \gf^\cc = \op, \qquad \hglzz^\cc - \glzz = \op, \qquad \hgl^\cc = \gl^\cc + \op
\endy
and thus 
 the probability limits of $\htsc$ for $\mds  $. 

The sufficient and necessary condition for $\tau^\cc = \tau$ follows from $\tau^\cc  = \byc(1) - \byc(0) $ and
$
\by(z)  = \bc(z) \byc(z)	 + \{1-\bc(z)\} \byu(z) ,\quad (z=0,1)$. 

\end{proof}

\subsection{Asymptotic normality and variance estimation under Condition \ref{cond:a}}
We next verify the result of Proposition \ref{prop:cc_1} under Condition \ref{cond:a}. Assume throughout this subsection that Condition \ref{cond:a} holds with $C_i(0) = C_i(1) = C_i$ for all $i$. The $z$-complete cases thus equal the observed complete cases, denoted by $\{i:C_i = 1\}$.  
The expressions of
$\bc(z)$, $\bcx(z)$, $\bcy(z)$, $\bcxx(z)$, $\bcxy(z)$ simplify to
\begina
&&\bar C(z) = \meani C_i \equiv \bar C, \qquad
\bcx(z) = \meani C_i x_i \equiv \bcx, \qquad \bcy(z) = \meani C_i Y_i(z),\\
&&\bcxx(z) = \meani C_i x_ix_i^\T \equiv \bcxx, \qquad \bcxy(z) = \meani C_i x_i Y_i(z).
\enda
This, together with \eqref{eq:bridge_cc_population}, ensures
\beginy\label{eq:bridge_population_4}
&& N^\cc(z) = N \bar C = N^\cc, \qquad \byc(z) =  \frac{\bcy(z)}{\bc}, \qquad \bxc(z) =   \frac{\bcx}{\bc} \equiv \bxc, \nonumber\\
&& \sxxc(z) =
\frac{ \bcxx}{\bc}   - \bxc ( \bxc) ^\T \equiv \sxxc, \qquad 
 \sxyc(z) 
=
 \frac{\bcxy(z)}{\bc}  - \bxc \byc(z)
\endy
with 
$
\gf ^\cc  =
 \rho_1^\cc  \glo^\cc +  \rho_0^\cc \glz^\cc$ and $
\gamma_{\lin,z}^\cc = (\sxxc)^{-1} \sxyc(z)$ for $z = 0,1$ 
from Proposition \ref{prop:cc_asym}.  
Let 
\begina
&&w_{i, \fisher}(z) = Y_i(z) - \byc(z) - (x_i - \bxc)^\T\gf^\cc, \\
&&w_{i, \lin}(z) = Y_i(z) - \byc(z) - (x_i - \bxc)^\T\glzz^\cc
\enda
be the adjusted potential outcomes with finite-population means $\bar w_\md(z) =0$ and covariances $S^\cc_{zz', \md} $ for $z, z' = 0,1$ over $\{i: C_i = 1\}$.
The explicit forms of $v_\md^\cc$ and $S_{\tau\tau,\md}^\cc$ in Proposition \ref{prop:cc_1} are
\beginy\label{prop:cc_app}
&&S_{\tau\tau,\fisher}^\cc = (N^\cc)^{-1}\sumic (\tau_i - \tau^\cc)^2, \nonumber\\
&&S_{\tau\tau,\lin}^\cc = (N^\cc)^{-1}\sumic \{\tau_i - \tau^\cc - (x_i - \bxc)^\T(\glo^\cc - \glz^\cc)\}^2 \nonumber\\
&&v_\md^\cc  =  e_0^{-1} S^\cc_{00, \md} +e_1^{-1} S^\cc_{ 11, \md} - S^\cc_{ \tau\tau, \md} \formds,
\endy
respectively.  
Intuitively, $S^\cc_{zz',\fisher}$ and $S^\cc_{zz',\lin}$ give the finite-population covariances of $\{ Y_i(z) - x_i^\T\gf^\cc, Y_i(z') -x_i^\T\gf^\cc\}_{i: C_i = 1}$  and $\{ Y_i(z) - x_i^\T\glzz ^\cc, Y_i(z') -x_i^\T\gamma_{\lin, z'} ^\cc\}_{i: C_i = 1}$, respectively; $S_{\tau\tau, \fisher}^\cc$ and $S_{\tau\tau, \lin}^\cc$ give the finite-population variances of $(\tau_i)_{i: C_i = 1}$ and $\{\tau_i -x_i^\T (\gamma^\cc_{\lin,1} - \gamma^\cc_{\lin,0})\}_{i: C_i =1}$, respectively. 
 
\begin{proof}[Proof of Proposition \ref{prop:cc_1}]
Let 
$\hat w_\md(z) = (N_z^\cc)^{-1}\sumicz w_{i,\md}(z)$ be the sample averages of $w_{i, \md}(z)$ over $\{i: C_i =1, Z_i = z\}$ with 
\beginy\label{eq:w_fl}
\hat w_\fisher(z)   =  \hyc(z) - \byc(z) - \{\hxc(z) - \bxc\}^\T \gf^\cc, \nonumber\\
\hat w_\lin(z)   =   \hyc(z) - \byc(z) - \{\hxc(z) - \bxc\}^\T \glzz^\cc. 
\endy
We proceed with the proof in the following five steps:
\begine[(i)]
\item\label{step1:cc}
$
 \hts^\cc - \tau^\cc 
 \asim   
   \hat w_\md(1) - \hat w_\md(0)$ for $\mds$.
\item\label{step2:cc} $ N^\cc \var_\infty( \hts^\cc) = v_\md^\cc$  with $v_\md^\cc =  e_0^{-1} S^\cc_{00, \md} +e_1^{-1} S^\cc_{ 11, \md} - S^\cc_{ \tau\tau, \md} $   defined in \eqref{prop:cc_app}. 
\item\label{step3:cc} $v_\lin^\cc \leq v_\fisher^\cc$. 
\item\label{step4:cc} $N^\cc (\hses^\cc)^2 - v_\md^\cc = S_{\tau\tau,\md}^\cc + \op$. 
\item\label{step5:cc} The sufficient and necessary conditions for $\tau^\cc = \tau$. 
\ende

\paragraph{Proof of step \eqref{step1:cc}.}
Given 
\begina
&& \htf^\cc  - \tau^\cc 
=  \{\hyc(1) -\byc(1)\} - \{\hyc(0) -\byc(0)\} -  \{\hxc(1)  - \hxc(0) \}^\T \hgfc  \nonumber \\
&&\htl^\cc - \tau^\cc 
= \left[\hyc(1) -\byc(1) - \{\hxc(1)  -\bxc \}^\T \hglo^\cc \right]  - \left[ \hyc(0) - \byc(0) -  \{\hxc(0)  - \bxc \}^\T \hglz^\cc \right]
\enda
by \eqref{lem:algebra_cc}, 
we have 
\begina
&& \htf^\cc  - \tau^\cc  - \{ \hat w_\fisher(1) - \hat w_\fisher(0) \}
= 
 - \{\hxc(1)  - \hxc(0) \}^\T (\hgfc- \gf^\cc) , \nonumber\\
&& \htl^\cc - \tau^\cc -\{ \hat w_\lin(1) -\hat w_\lin(0)\}
=
   - \{\hxc(1)  -\bxc \}^\T (\hglo^\cc - \glo^\cc ) + \{\hxc(0)  - \bxc \}^\T (\hglz^\cc - \glz^\cc )  
\enda
by \eqref{eq:w_fl} such that it suffices to verify 
\begina
\sqrt N \{\hxc(1)  -\hxc(0) \}^\T  (\hgf^\cc  - \gf^\cc) = \op, \qquad \sqrt N \{\hxc(z)  -\bxc \}^\T (\hglzz^\cc  - \glzz^\cc) = \op.
\enda
This is indeed correct by \eqref{eq:bridge_cc_sample}, Lemma \ref{lem:joint_dist}, the delta method, and \eqref{eq:gamma_lim_cc}.


\paragraph{Proof of step \eqref{step2:cc}.} We verify below the result for $N\var_\infty( \htf^\cc)$. The proof  for $N\var_\infty( \htl^\cc)$ is analogous by replacing  $\gf$ with $\glzz$ in \eqref{eq:U} below. 

Consider pseudo potential outcomes 
\beginy\label{eq:U}
U_i(z) = \left\{\bar C C_i Y_i(z) - \bcy(z) C_i \right\} - (\bar C C_i x_i - \bcx C_i  )^\T \gf^\cc
\endy
for $z = 0, 1$ and $i = \ot{N}$. We can verify that the finite-population means and covariances of $\{U_i(0), U_i(1)\}_{i=1}^N$  equal 
$
\bar U(z) 
= 0$ for $z=0,1$ 
and 
\beginy\label{eq:UF}
S_{U(z), U(z')} = \meani U_i(z) U_i(z')  
=  \bar C^3 S^\cc_{zz', \fisher} \qquad \text{for} \quad z, z' = 0,1,
\endy  
respectively.
Let $
\hat U(z) = \meaniz U_i(z)$ 
be the sample analog of $\bar U(z)$ over $\{i: Z_i =z\}$. 
Standard results of complete randomization, together with \eqref{eq:UF},  ensure that 
$\sqrt N( \hat U(0), \hat U(1) )^\T$ is asymptotically normal under Condition \ref{asym_1} with covariance
\beginy\label{eq:covU}
N\cov\left\{ \beginp \hat U(0)\\ \hat U(1)\endp \right\}
& =&  \diag(e_z^{-1} S_{U(z), U(z)})_{z=0,1} - S_{\{U, U\}} \nonumber\\
&=& \bar C^3 \left\{ \diag(e_z^{-1} S^\cc_{zz, \fisher})_{z=0,1} - S^\cc_\fisher \right\} 
=  \bar C^3 (\Phi \circ S^\cc_\fisher),
\endy
where 
\begina
S_{\{U, U\}} = 
\beginp
S_{U(0), U(0)} & S_{U(0), U(1)} \\
S_{U(1), U(0)} & S_{U(1), U(1)} 
\endp, \qquad 
S^\cc_{\fisher}
= \beginp
S^\cc_{00, \fisher} & S^\cc_{01, \fisher}\\
S^\cc_{10, \fisher} & S^\cc_{11, \fisher}
\endp, \qquad \Phi = \diag(e_z^{-1})_{z=0,1} - 1_{2\times 2}. 
\enda

On the other hand, plug \eqref{eq:U} in the definition of $\hat U(z)$ to see
\begina
\hat U(z) = \meaniz U_i(z) 
&=& \left\{\bar C \hcy (z) - \bcy(z) \hc(z) \right\} - \left\{\bar C \hcx(z) - \bcx \hc(z) \right\}^\T\gf^\cc\\
&=& \{\hat C(z) \bc\}  \left[ \left\{\frac{\hcy(z)}{\hc(z)} - \frac{\bcy(z)}{\bc}\right\}  -  \left\{\frac{\hcx(z)}{\hc(z)} - \frac{\bcx}{\bc}\right\}^\T \gf^\cc   \right] \\
&=& \{\hat C(z) \bc\}  \left[ \{\hyc(z) - \byc(z)\}  -  \left\{\hxc(z) - \bxc \right\}^\T\gf^\cc  \right]\\
&=& \{\hat C(z) \bc\}  \hat w_\fisher(z) 
\enda
by \eqref{eq:bridge_cc_sample} and \eqref{eq:bridge_population_4}. 
This, together with Slutsky's theorem, ensures 
$(\hat w_\fisher(0), \hat w_\fisher(1) )^\T 
\asim 
\bc^{-2} 
( \hat U(0),  \hat U(1))^\T$
and thus 
\begina
\htf^\cc - \tau^\cc \asim \bc^{-2}  (-1, 1) \beginp \hat U(0)\\ \hat U(1)\endp 
\enda 
by  step \eqref{step1:cc}. 
This, together with \eqref{eq:covU}, verifies the result for $\var_\infty(\htf^\cc)$ with
\beginy\label{eq:vfcc}
 N \var_\infty(\htf^\cc) &=& \bar C^{-1} (-1, 1) (\Phi \circ S^\cc_\fisher) (-1, 1)^\T \\
&=&\bar C^{-1} \left[ (-1, 1) \{ \diag(e_z^{-1})_{z=0,1} \circ S^\cc_\fisher\} (-1, 1)^\T -  (-1, 1)  S^\cc_\fisher  (-1, 1)^\T \right]\nonumber \\
&=&  \bar C^{-1}  ( e_0^{-1} S^\cc_{00, \fisher} +e_1^{-1} S^\cc_{ 11, \fisher} - S^\cc_{ \tau\tau, \fisher}),\nonumber 
\endy
where $N \bar C = N^\cc$; 
the last equality follows from 
$
(-1, 1) S_\fisher^\cc (-1, 1)^\T = 
 (N^\cc)^{-1}\sumic [\{Y_i(1) - Y_i(0) \} - \{\byc(1) - \byc(0)\} ]^2 = S_{\tau\tau, \fisher}^\cc. 
$

Almost identical reasoning, after replacing  $\gf$ with $\glzz$ in \eqref{eq:U},  yields
\beginy\label{eq:vlcc}
 N^\cc \var_\infty(\htl^\cc) =   (-1, 1) (\Phi \circ S^\cc_\lin) (-1, 1)^\T =    e_0^{-1} S^\cc_{00, \lin} +e_1^{-1} S^\cc_{ 11, \lin} - S^\cc_{ \tau\tau, \lin} ,
\endy
where $S^\cc_{\lin} = (S^\cc_{zz', \lin})_{z, z' = 0, 1}$ is the $2\times 2$ matrix summarizing $S^\cc_{zz', \lin}$'s in lexicographical order. 

\paragraph{Proof of step \eqref{step3:cc}.} 
The proof is similar to that of Lemma \ref{lem::basic-x}. We give a sketch of it.
Equations \eqref{eq:vfcc} and \eqref{eq:vlcc} together ensure that $$
v_\fisher^\cc - v_\lin^\cc  =  (-1, 1) \{\Phi\circ(S_\fisher^\cc - S_\lin^\cc)\}(-1,1)^\T.$$
With $\Phi$ being positive semi-definite, Schur product theorem \citep[][Theorem 3.1]{schur} ensures that $v_\lin^\cc \leq v_\fisher^\cc$ as long as $S_\fisher^\cc - S_\lin^\cc \geq 0$. This is indeed correct because 
%
\begina
S_\fisher^\cc - S_\lin^\cc 
&=& (N^\cc)^{-1}\sumic 
\beginp
 w_{i,\fisher}(0) - w_{i,\lin}(0) \\
  w_{i,\fisher}(1) - w_{i,\lin}(1) 
\endp
\beginp
 w_{i,\fisher}(0) - w_{i,\lin}(0), \ 
  w_{i,\fisher}(1) - w_{i,\lin}(1) 
\endp \geq 0. 
\enda

\paragraph{Proof of step \eqref{step4:cc}.}
Let $\ep^\cc_{\fisher,i}$ and $\ep^\cc_{\lin,i}$ be the residuals from \eqref{eq:cc_fisher} and \eqref{eq:cc_lin}, respectively, 
for $\{i: C_i = 1\}$. 
Let $\hat\ep^\cc_\md(z) $ and 
$
\hat S_{\ep\ep,\md}^\cc(z)$
  be the sample mean and variance of $\ep^\cc_{\md,i}$'s over $\{i: C_i = 1, Z_i = z\}$. 
We have 
\beginy\label{eq:nse}
N^\cc (\hse^\cc_\md)^2 - 
\left\{ (e_0^\cc)^{-1} \hat S_{\ep\ep,\md}^\cc(0) + 
 (e_1^\cc)^{-1} \hat S_{\ep\ep,\md}^\cc(1)
 \right\} = \op \qquad \text{for} \quad\mds 
\endy 
by \cite{LD20} and \cite{ZDfrt}. 

On the other hand, let 
$
\bcyy(z) = \meani C_i Y_i(z)^2$ and $\hcyy(z) = \meaniz C_i Y_i^2$ be the analogs of $\bcxx$ and $\hcxx$ defined on $Y_i(z)$ and $Y_i$.
We have $ \hcyy(z) - \bcyy(z)= \op$ under complete randomization and Condition \ref{asym_1} with $\bcyy(z)$ having a finite limit. 
Let 
\begina
S^\cc_{YY}(z) = (N^\cc)^{-1}\sumic \{Y_i (z) - \byc(z)\}^2= \frac{\bcyy(z)}{\bc} - \{ \byc(z)\}^2  
\enda
be the finite-population variance of $\{Y_i(z):C_i = 1\}$ with sample analog
\begina
\hat S^\cc_{YY}(z) = (N_z^\cc)^{-1}\sumicz \{Y_i - \hy^\cc(z)\}^2 
= \frac{\hcyy(z)}{\hc(z)} -  \hy^\cc(z)^2   =S^\cc_{YY}(z)+ \op .
\enda
We have 
\begina
\hat S_{\ep\ep,\fisher}^\cc(z)
&=&\hat S^\cc_{YY}(z)  + (\hgf^\cc)^\T \hsxx^\cc(z) \hgf^\cc - 2 (\hgf^\cc)^\T   \hsxy^\cc(z)\\
&=& S_{YY}^\cc(z) +(\gf^\cc)^\T \sxx^\cc  \gf^\cc - 2 (\gf^\cc)^\T   \sxy^\cc(z) + \op\\
&=&
(N^\cc)^{-1}\sumic \left[ Y_i(z) - x_i^\T \gf^\cc - \{\by^\cc(z) - (\bxc)^\T\gf^\cc \} \right]^2  +\op\\
&=& S_{zz, \fisher}^\cc + \op
\enda
and likewise $\hat S_{\ep\ep,\lin}^\cc(z) = S_{zz, \lin}^\cc + \op$. 
Plugging these probability limits  in \eqref{eq:nse} verifies the result. 

\paragraph{Proof of step \eqref{step5:cc}.} The equivalence relationship follows from $\tau = \bc \tau^\cc + (1-\bc) \tau^\uc $, $S_{C, \tau}  =  \bar C(\tau^\cc - \tau)$, and simple algebra. We omit the details.
%
\end{proof}

\section{Single imputation and missingness-indicator method}

\subsection{Useful facts without Condition \ref{cond:a}}

Recall  $\xs_i(c)$ and $x_i^\mim(c) $ as the covariates  under  single imputation and the missingness-indicator method, respectively, with $c = (c_1, \ldots, c_J)^\T \in \mathbb{R}^J$. 
We have 
\beginy\label{eq:xi_imp}
\xs_i(c)= A_i \circ  x_i + \dc M_i, \qquad \text{where} \quad \dc   = \diag(c_j)_{j=1}^J. 
\endy
Recall  $\xidg(z; c)$ as the potential value of $\xidg(c)$ for $\dgsss$ if unit $i$ were assigned to treatment $z \in \{0,1\}$. We have 
\begina
x^\im  _i(z; c) = A_i(z) \circ  x_i + \dc M_i(z), \qquad 
x_i^\mim(z; c) =  (x_i^\imp(z; c)^\T, M_i(z)^\T)^\T. 
\enda
Let $  \bx^\dg(z; c)$, $\sxx^\dg (z;c)$, and $\sxy^\dg (z;c)$ be the mean and finite-population covariances of $\{\xidg(z; c) \}_{i=1}^N$ and $\{\xidg(z; c) , Y_i(z)\}_{i=1}^N$ over $i = \ot{N}$, respectively. 
Let 
$\hx^\dg   (z; c)$, 
$  \hsxx^\dg   (z; c)$,  and $\hsxy^\dg   (z;c)$ 
be the sample analogs over $\{i: Z_i = z\}$.



\begin{lemma}\label{lem:plim_imp}
Assume complete randomization, Condition \ref{asym}, and $c \in \mathcal C'$. We have
\begina
&&\hxs(z;c) - \{ \bax(z) + \dci \barm(z) \} =\op, \nonumber\\
&&\hx^\mim(z;c) - ( \{\bax(z) + \dci \barm(z)\}^\T, \bar M(z)^\T )^\T= \op,
\enda
and 
$\hsxx^\dg (z;c)- \sxx^\dg(z;\ci) = \op$ and $\hsxy^\dg (z;c) - \sxy^\dg(z;\ci) = \op$ for $\dgsss$.
\end{lemma}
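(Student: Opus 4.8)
The plan is to reduce everything to the first- and second-moment convergence results already established for complete randomization, exploiting the fact that both covariate vectors are affine in the basic building blocks $(Ax)_i = A_i \circ x_i$ and $M_i$. For the means, note the identity $\xs_i(z;c) = (Ax)_i(z) + \dc M_i(z)$, which gives at the sample level $\hxs(z;c) = \meaniz \xs_i(c) = \hax(z) + \dc\,\hm(z)$. By Lemma \ref{lem:joint_dist}, $\hax(z) - \bax(z) = \op$ and $\hm(z) - \barm(z) = \op$; since $c \in \mathcal{C}'$ we have $c \toinp \ci$ and hence $\dc \toinp \dci$. Slutsky's theorem then yields $\hxs(z;c) - \{\bax(z) + \dci\barm(z)\} = \op$, and stacking this with $\hm(z) - \barm(z) = \op$, using $\hx^\mim(z;c) = \big(\hxs(z;c)^\T,\ \hm(z)^\T\big)^\T$, gives the second mean display.

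For the covariances, I would expand the sample covariance (with the divisor convention of the Notation section) as $\hsxx^\imp(z;c) = \meaniz \xs_i(c)\xs_i(c)^\T - \hxs(z;c)\hxs(z;c)^\T$, and substitute $\xs_i(c) = (Ax)_i + \dc M_i$ to split the first term into $\meaniz(Ax)_i(Ax)_i^\T$, $\{\meaniz(Ax)_iM_i^\T\}\dc$, $\dc\{\meaniz M_i(Ax)_i^\T\}$, and $\dc\{\meaniz M_iM_i^\T\}\dc$. Each of these sample cross-moments converges in probability to its finite-population analog under complete randomization and Condition \ref{asym}: arguing exactly as in the proof of Lemma \ref{lem:joint_dist}, but with the augmented potential-outcome vector that also records the pairwise products of $Y_i(z), (Ax)_i(z), M_i(z)$, Lemma \ref{lem:Ding17} together with Condition \ref{asym}(ii)--(iii) delivers these limits (the fourth-moment bound in (iii) controls the covariance of the second-order components, making the variance of each sample cross-moment $O(N^{-1})$). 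Combining with $\dc \toinp \dci$ and the mean limits already obtained, Slutsky's theorem yields $\hsxx^\imp(z;c) - \sxx^\imp(z;\ci) = \op$, since $\sxx^\imp(z;\ci)$ is exactly the finite-population covariance of $\{(Ax)_i(z) + \dci M_i(z)\}_{i=1}^N$ reassembled from the same blocks. Replacing the second copy of $\xs_i(c)$ by $Y_i$ throughout gives $\hsxy^\imp(z;c) - \sxy^\imp(z;\ci) = \op$ in the same way, and the $\mim$ versions follow by adjoining the blocks $\meaniz M_iM_i^\T$, $\meaniz M_i(Ax)_i^\T$, and $\meaniz M_iY_i$, all of which are among the cross-moments already handled, so the claim holds for $\dgsss$.

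I expect the main obstacle to be organizational rather than analytic: one must track carefully how $\xs_i(c)$ and $x_i^\mim(c)$ decompose into $(Ax)_i$ and $M_i$, and verify that the limits of the assembled blocks match $\sxx^\dg(z;\ci)$ and $\sxy^\dg(z;\ci)$ exactly, which relies on the convention that the imputation constant is treated as fixed in defining those population quantities. A small subtlety worth flagging is that $c$ is itself data-dependent and may depend on the treatment vector under possible violations of Condition \ref{cond:a}; but since $c$ converges to the deterministic $\ci$, Slutsky's theorem disposes of this, and no appeal to Condition \ref{cond:a} is made anywhere, which is precisely the point of stating the lemma at this level of generality.
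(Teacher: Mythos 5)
Your proposal is correct and follows essentially the same route as the paper: decompose $\hxs(z;c)=\hax(z)+\dc\,\hm(z)$ and the sample covariances into blocks in $(Ax)_i$, $M_i$, and $Y_i$, invoke convergence of each block sample moment to its finite-population limit under complete randomization (via Lemma \ref{lem:Ding17} and Condition \ref{asym}), and finish with $\dc\toinp\dci$ and Slutsky. Your explicit remark on the fourth-moment condition controlling the second-order blocks and on the data-dependence of $c$ only makes explicit what the paper leaves implicit.
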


\begin{proof}[Proof of Lemma \ref{lem:plim_imp}]
Recall $\hm(z)$, $\hc(z)$, and $\hax(z)$ as the sample means of $M_i$, $C_i $, and $A_i \circ x_i$ over $\{i:Z_i = z\}$,  respectively. 
It follows from \eqref{eq:xi_imp} that 
\begina
&&\hxs(z;c) = \hax(z) + \dc \hm(z),\\
&&\hxmi(z; c) =( \hxs(z; c)^\T,  \hm(z)^\T )^\T = ( \{\hax(z) + \dc \hm(z)\}^\T, \hm(z)^\T)^\T.
\enda
Their probability limits   follow from $
\hax(z) - \bax(z) = \op$, $ \hm(z) - \barm(z) = \op$, 
 and $D_c - \dci = \op$.

The probability limits of $ \hsxx^\im  (z; c) $
 and $ \hsxy^\im  (z; c) $ follow from 
\begina
&& \hsxx^\im  (z; c) 
= \hs_{Ax, Ax}(z) + \hs_{Ax,M}(z)\dc + \dc\hs_{M,Ax}(z)  + \dc \hs_{M,M}(z)\dc, \nonumber\\
&& \hsxy^\im  (z; c) 
= \hs_{Ax,Y}(z) + \dc\hs_{M,Y}(z).
\enda
and 
\begina
&& \sxxim (z;c) = S_{Ax(z),Ax(z)}  + S_{Ax(z),M(z)} \dc + \dc S_{M(z), Ax(z)} + \dc S_{M(z), M(z)} (z) \dc,\nonumber\\
&& \sxyim (z;c) = S_{Ax(z),Y(z)}  +  \dc S_{M(z), Y(z)} .
\enda
%
%
The probability limits of $ \hsxx^\mim  (z; c) $ and $ \hsxy^\mim  (z; c) $ follow from 
$$
\hsxxmi (z ;c) =
 \beginp
 \hsxxs(z; c) &  \hs_{Ax,M}(z) + D_c \hs_{M,M}(z)\\
 \hs_{M,Ax}(z) +  \hs_{M,M}(z) D_c& \hs_{M,M}(z)
 \endp, 
\quad 
 \hsxymi (z;c) =   \beginp
 \hsxys(z; c)  \\
 \hs_{M,Y}(z)  
 \endp.
$$
and 
$$
 \sxxmi  (z; c)  = 
 \beginp
 \sxxim(z;  c) & S_{Ax(z),M(z)}  + D_c S_{M(z), M(z)}\\
 S_{M(z),Ax(z)} + S_{M(z), M(z)}D_c & S_{M(z), M(z)}
 \endp, \quad 
 \sxymi  (z; c)  =  \beginp
 \sxyim(z;  c)  \\
 S_{M(z),Y(z)}  
 \endp.
$$
\end{proof}

\subsection{Probability limits of $\htss \ (\dgsss; \ \md = \text{F}, \text{L})$ without Condition \ref{cond:a}}\label{sec:im_app}

\begin{proof}[Proof of Proposition \ref{prop:imp_mim_cim no 4}]
Applying Lemma \ref{lem:gf_gl} to $(Y_i, x_i^\dg, Z_i)_{i=1}^N$ for $\dgsss$ ensures that 
\beginy\label{lem:algebra}
\htss(c) =\htn -  \{\hx^\dg  (1;c )  - \hx^\dg  (0;c) \}^\T \hgss(c), 
\endy
where 
$
\hgfs(c)  =
\{ e_0 \hsxx^\dg  (0;c ) + e_1\hsxx^\dg  (1;c ) \}^{-1}
\{ e_0 \hsxy^\dg  (0;c ) + e_1\hsxy^\dg  (1;c ) \}$ and $\hgls =  e_0 \hgos(c) + e_1 \hgzs(c)$ with $\hgzzs(c) = \{\hsxx^\dg  (z;c)\}^{-1}\hsxy^\dg  (z;c)$ for $z = 0,1$.
The results follow from  Lemma \ref{lem:plim_imp}  with $\hg_\md^\dg(c) - \gamma_\md^\dg(\ci) = \op$. 
\end{proof}

\subsection{Asymptotic normality and variance estimation under Condition \ref{cond:a}}
Assume Condition \ref{cond:a} throughout this subsection. 
The expressions of $\gamma_\fisher^\dg(\ci)$ and $\gamma_{\lin,z}^\dg(\ci)$ from Proposition \ref{prop:imp_mim_cim no 4} simplify to 
$
\gamma_\fisher^\dg(\ci) = e_0 \gamma_{\lin,0}^\dg(\ci) +  e_1 \gamma_{\lin,1}^\dg(\ci)$ and $ \gamma_{\lin,z}^\dg(\ci) = \{\sxx^\dg(\ci)\} ^{-1}\sxy^\dg(z; \ci)$,
respectively. Let 
\begina
Y_{i,\fisher}^\dg(z; \ci) = Y_i(z) - \{x_i^\dg(\ci)\}^\T \gf^\dg(\cinf), \quad 
Y_{i,\lin}^\dg(z; \ci) = Y_i(z) -\{x_i^\dg(\ci)\}^\T \gamma_{\lin,z}^\dg(\ci)
\enda be the adjusted potential outcomes for $\dgsss$. 
Let 
$\tau^\dg_{i, \md}(\ci) = Y_{i, \md}^\dg(1; \ci) - Y_{i, \md}^\dg(0; \ci) $ be the corresponding individual effects  with 
$
\tau^\dg_{i, \fisher}(\ci)   = \tau_i$. 

Write $v_\md^\mim = v_\md^\mim(\ci)$ and $S_{\tau\tau,\md}^\mim = S_{\tau\tau,\md}^\mim(\ci)$ with $\ci = 0_J$ to unify the notations between Propositions \ref{prop:imp_1} and \ref{prop:mim_1}.  
The $S_{\tau\tau,\md}^\dg(\ci)$ in Propositions \ref{prop:imp_1} and \ref{prop:mim_1} equal the population variances of $\{(\tau^\dg_{i, \md}(\ci)\}_{i=1}^N$ over $i = \ot{N}$, respectively, with $S^\dg  _{\tau\tau,\fisher}(\cinf)=S_{\tau}^2$. 
The $v_\md^\dg(\ci)$ in Propositions \ref{prop:imp_1} and \ref{prop:mim_1} equal 
\begina
v_{\md}^\dg (\ci)  = e_0^{-1}S^\dg  _{00, \md}(\ci) + e_1^{-1}S^\dg  _{11, \md}(\ci) - S^\dg_{\tau\tau, \md}(\cinf),
\enda
where $S^\dg  _{zz,\fisher}(\ci)$ and $S^\dg  _{zz,\lin}( \ci)$ are the population variances of  $\{Y_{i,\fisher}^\dg(z) \}_{i=1}^N$ and $\{Y_{i,\lin}^\dg(z) \}_{i=1}^N$ over $i = \ot{N}$, respectively.

We verify below the results for $\htau_{\md }^\im(c) $ in Proposition \ref{prop:imp_1}. 
The proof for $\hts^\mim$  in Proposition \ref{prop:mim_1}  is almost identical and thus omitted. 

\begin{proof}[Proof of Proposition \ref{prop:imp_1}.]

Let $\hts^\imp(\ci)$ and $\hses^\imp(\ci)$, where $\mds$, be the estimators  and robust standard errors by using $x_i^\im  (\cinf) = A_i \circ  x_i + \dci M_i $ as the covariate vector in forming regressions \eqref{eq:im_f} and \eqref{eq:im_l}.
We proceed with the proof in the following three steps:
\begine[(i)]
\item\label{step3:imp} $\sqrt N\{\htsim(\ci) - \tau\}\rs \mathcal N\{0, v_\md^\imp(\ci)\}$ with $v_\lin^\imp(\ci) \leq v_\fisher^\imp(\ci)$. 
\item\label{step1:imp} $\sqrt  N \{\htsim(c) - \tau\}$ has the same limiting distribution as $\sqrt N \{\htsim(\ci)-\tau\}$.
\item\label{step2:imp} $N\{\hses^\imp(c)\}^2  - v_\md^\imp(\ci) = S^\imp_{\tau\tau,\md}(\ci) + \op$. 
\ende 
With $v_\lin^\imp(\ci)\leq v_\lin^\ccov \leq v_\neyman$ following from Proposition \ref{prop:eff_mp}, steps \eqref{step3:imp}--\eqref{step2:imp} together complete the proof.

\paragraph{Proof of step \eqref{step3:imp}.}
With $\ci$ being a fixed vector in $\mathbb R^J$, the covariate vector $x_i^\imp(\ci)$ is  unaffected by the treatment assignment, and thus a true covariate vector  under Condition \ref{cond:a}.  
With $\htsim(\cinf)$ being the analog of $\hts$ based on $x_i^\imp(\ci)$'s, 
it suffices to verify  that the finite population of $\{Y_i(1), Y_i(0), x_i^\im  (\cinf)\}_{i=1}^N$ satisfies Condition \ref{asym_basic}.  
This is ensured by Condition \ref{asym_1}  by direct comparison.
The result then follows from Lemma \ref{lem::basic-x}.

\paragraph{Proof of step \eqref{step1:imp}.}
Write $
\htsim(c) = \htn -  \{\hx^\im  (1;c)  - \hx^\im  (0;c) \}^\T \hgsim(c)$ by  \eqref{lem:algebra} to see 
\begina
\hts^\imp(c) - \hts^\imp(\ci) = - \{\hx^\im  (1;c)  - \hx^\im  (0;c) \}^\T \hgsim(c) + \{\hx^\im  (1; \ci)  - \hx^\im  (0; \ci) \}^\T \hgsim(\ci). 
\enda
To verify $\hts^\imp(c) \asim \hts^\imp(\ci)$ is thus equivalent to verifying
\beginy\label{eq:ss_imp}
 \{\hx^\im  (1;c)  - \hx^\im  (0;c) \}^\T \hgsim(c)  \asim  \{\hx^\im  (1; \ci)  - \hx^\im  (0; \ci) \}^\T \hgsim(\ci). 
\endy

With $\bax(1) - \bax(0) = \bar M(1)- \barm(0) =0_J$ under Condition \ref{cond:a}, it follows from Lemma \ref{lem:joint_dist} that 
$\sqrt N\{\hax(1) - \hax(0)\}$ and $\sqrt N\{\hm(1) - \hm(0)\}$
%
are both asymptotically normal such that 
\beginy
 \{\hx^\im  (1;c)  - \hx^\im  (0;c) \}^\T \hgsim(c) 
& = &
  \{ \hax(1) - \hax(0)\}^\T\hgsim(c) 
   + \{ \hm(1) - \hm(0)\}^\T D_c^\T \hgsim(c) \nonumber\\
   &\asim& 
 \{ \hax(1) - \hax(0)\}^\T\gsim(\ci) 
   + \{ \hm(1) - \hm(0)\}^\T \dci^\T \gsim(\ci) \nonumber\\
   \label{eq:ss_imp_1}
 \endy
by Slutsky's theorem. 
The same reasoning ensures that 
\begina
 \{\hx^\im  (1; \ci)  - \hx^\im  (0; \ci) \}^\T \hgsim(\ci) 
 \asim 
 \{ \hax(1) - \hax(0)\}^\T\gsim(\ci) 
   + \{ \hm(1) - \hm(0)\}^\T \dci^\T \gsim(\ci) 
\enda
by replacing $c$ with $\ci$ in \eqref{eq:ss_imp_1}. 
This verifies   \eqref{eq:ss_imp}.

 \paragraph{Proof of step \eqref{step2:imp}.} The proof of step \eqref{step2:imp} follows from the same reasoning as that of step \eqref{step4:cc} in the proof of Proposition \ref{prop:cc_1}. 
In particular, let $\ep^\imp _{\fisher,i}(c)$ and $\ep^\imp _{\lin,i}(c)$ be the residuals from \eqref{eq:im_f} and \eqref{eq:im_l}, respectively, 
for $i = \ot{N}$. 
Let $\hat\ep^\imp _\md(z; c) $ and 
$
\hat S_{\ep\ep,\md}^\imp (z; c)$
  be the sample mean and variance of $\ep^\imp _{\md,i}(c)$'s over $\{i: C_i = 1, Z_i = z\}$. 
We have 
\beginy\label{eq:nse_imp}
N  \{\hse^\imp _\md(c)\}^2 -   
\left\{  e_0 ^{-1} \hat S_{\ep\ep,\md}^\imp (0; c) + 
  e_1^{-1} \hat S_{\ep\ep,\md}^\imp (1; c)
 \right\} = \op \qquad \text{for} \quad\mds
\endy
by \cite{LD20} and \cite{ZDfrt}. 
The result follows from 
\begina
\hat S_{\ep\ep,\fisher}^\imp (z; c)
&=& \meaniz \left[ Y_i - \hy(z) - \{x_i^\imp(c) - \hx^\imp(c)\} ^\T \hg_\fisher^\imp(c)\right]^2 \\
&=&\hat S _{YY}(z)  + \{\hgf^\imp (c )\}^\T \hsxx^\imp (z; c) \hgf^\imp(c)  - 2 \{\hgf^\imp (c )\}^\T   \hsxy^\imp (z; c)\\
&=&  S _{YY}(z)  + \{\gf^\imp (\ci)\}^\T \sxx^\imp (z; \ci) \gf^\imp(\ci)  - 2 \{\gf^\imp (\ci )\}^\T   \sxy^\imp (z; \ci)+ \op\\
&=&
N^{-1}\sumi \left[ \{Y_i(z) - \by(z)\} - \{ x_i^\imp(\ci) - \bx^\imp(\ci)\}^\T \gf^\imp(\ci)  \} \right]^2  +\op\\
&=& S_{zz, \fisher}^\imp(\ci)  + \op
\enda
by Lemma \ref{lem:plim_imp}, 
and likewise $\hat S_{\ep\ep,\lin}^\imp (z; c) = S_{zz, \lin}^\imp(\ci)  + \op$. 
\end{proof}

\subsection{Invariance of $\htsmi(c)$ to the choice of $c$}\label{sec:mi_app}

\begin{proof}[Proof of Lemma \ref{lem:invar_mim}]
The result follows from the invariance property of {\ols} in Lemma \ref{lem:trivial}. We give details below.  

Let $M = (M_1, \ldots, M_N)^\T = (M_{ij})_{N\times J}$ and $A\circ X = (A_{ij}x_{ij})_{N\times J}  = (x_1^0, \ldots, x_N^0)^\T$. 
The matrix of  imputed covariates equals 
$X^\im  (c) = (x_1^\im  (c), \ldots, x_N^\im  (c))^\T = A\circ X + M \dc$. 
Let
\begina
H_c = \beginp I_J \\ D_c & I_J\endp,
\enda 
which is invertible. 
The design matrix of the additive regression \eqref{eq:mi_f} equals  
\begina
\chi_\fisher(c) & =& \big(\on, Z,  X^\im  (c), M \big) =  \big(\on, Z, A\circ X+ MD_c,  M\big)\\
&=& \big(\on, Z, A\circ X, M \big) 
\beginp
1 & & & \\
  & 1 &&\\
  & & I_J & \\
  & & D_c & I_J 
\endp = \chi_\fisher(0_J) \beginp
1 & &   \\
  & 1 & \\
  & &H_c
\endp.
\enda
This ensures 
\begina
\beginp
\hmu_{\fisher}^\mi (c)\\
\htau_{\fisher}^\mi (c)\\  
\hg_{\fisher}^\mi (c)
\endp = \beginp
1 & &   \\
  & 1 & \\ 
  & & H_c
\endp^{-1}\beginp
\hmu_{\fisher}^\mi (0_J)\\
\htfmi(0_J)\\
\hg_{\fisher}^\mi (0_J)
\endp 
\enda
and thus $\htf^\mim(c) = \htf^\mim(0_J)$ for arbitrary $c$; the $\hmu_\fisher^\mi(c)$ and $\hgf^\mim(c)$ denote the coefficients of $1$ and $x^\mim(c) $, respectively. Likewise for the robust standard error. 

Let $P_N = I - N^{-1} \on\ont$ be the $N$-dimensional projection matrix; we suppress the subscript ``$N$" when no confusion would arise. 
Let $D_Z   = \diag(Z_1, \ldots, Z_N)$. 
The design matrix of the fully interacted regression \eqref{eq:mi_l} equals 
\begina
\chi_\lin(c) & =& \big(\on, Z,  PX^\im  (c), PM, D_Z PX^\im  (c), D_ZPM \big) \\
&=&  \big(\on, Z, P( A\circ X+ MD_c), PM,  D_Z P (A\circ X+ MD_c), D_ZPM \big)\\
&=& \big(\on, Z, P( A\circ X), PM,  D_Z P(A\circ X), D_ZPM \big)
\beginp
1 & & & \\
  & 1 &&\\
  & & I_J & \\
  & & D_c & I_J \\
  & & & & I_J \\
  & & & & D_c & I_J 
\endp\\
& =& \chi_\lin(0_J) \beginp
1 & & & \\
  & 1 &&\\
  & & H_c & \\
  & &  & H_c
\endp.
\enda
This ensures 
\begina
\beginp
\hmu_{\lin}^\mi (c)\\
\htau_{\lin}^\mi (c)\\
\hg_{\lin}^\mi (c)\\
\hd_{\lin}^\mi (c)
\endp = \beginp
1 & & & \\
  & 1 &&\\ 
  & & H_c & \\
  & &   & H_c
\endp^{-1}\beginp
\hmu_{\lin}^\mi (0_J)\\
\htau_{\lin}^\mi (0_J)\\
\hg_{\lin}^\mi (0_J)\\
\hd_{\lin}^\mi (0_J)
\endp 
\enda
and thus $\htl^\mim(c) = \htl^\mim(0_J)$ for arbitrary $c$; the $\hmu_\lin^\mi(c)$, $\hg_\lin^\mim(c)$, and $\hd_\lin^\mi(c)$ denote the coefficients of $1$, $\{x^\mim(c) - \bx^\mim(c)\}$, and $Z_i\{x^\mim(c) - \bx^\mim(c)\}$, respectively.  Likewise for the robust standard error. 
\end{proof}

\section{Missingness-pattern method}\label{sec:mp_aggregate_app}

\subsection{Kronocker product notation for OLS} 

For $(u_{i,[1]}, \ldots, u_{i,[K]})_{i\in\mathcal I}$, where $ u_{i, [k]}$ is a $J_k$-vector of non-constant regressors, 
let $  v_i  \sim 1 + u_{i,[1]} + \cdots + u_{i,[K]}$ denote the additive {\ols} fit of $v_i$ on  $(1, u_{i,[1]} , \ldots,  u_{i,[K]})$  with regressor vector  $u_i = ( 1, u_{i,[1]}^\T, \ldots,  u_{i,[K]}^\T)^\T$; let $ v_i \sim \otimes_{k=1}^K (1, u^\T_{i, [k]})^\T$ denote the fully interacted {\ols} fit of $v_i$ on $( 1, u_{i,[1]} , \ldots,  u_{i,[K]})$ and all their interactions with regressor vector 
\begina
u_i = (1, u^\T_{i, [1]})^\T \otimes \cdots \otimes (1, u^\T_{i, [K]})^\T = \otimes_{k=1}^K (1, u^\T_{i, [k]})^\T.
\enda
In particular, for $\{(Y_i, Z_i, x_i): i \in \mathcal I\}$ with $Y_i \in \mathbb{R}$, $Z_i \in \mathbb{R}$, and $ x_i \in \mathbb{R}^J$, let $Y_i \sim 1 + Z_i +x_i $ denote the additive  {\ols} fit of $Y_i$ on $(1, Z_i, x_i)$ over $i\in\mI$ with regressor vector $u_i = (1, Z_i, x_i^\T)^\T$; let $Y_i \sim 1 + (x_i - \bar x) + Z_i + Z_i(x_i - \bar x)  $ denote the fully interacted {\ols} fit  of $Y_i$ on $\{1, Z_i, (x_i - \bar x)\}$  over $i\in\mI$ with regressor vector $u_i = (1, Z_i)^\T \otimes (1, (x_i - \bar x)^\T)^\T  $. 

The Kronocker product notation also facilitate the discussion of different parametrizations of the missingness patterns. Index the $Q = 2^J  $ possible missingness patterns in $ \{0,1\}^J$  in lexicographical order as $\{m^{(q)}: q = \ot{Q} \}$ with $m^{(1)} = 0_J$, $m^{(2)} = (0_{J-1}^\T, 1)^\T$, $m^{(Q)} = 1_J$, etc. 
We have 
\beginy\label{eq:t}
t_i &=& (1-M_{i1}, M_{i1})^\T   \otimes \cdots \otimes ( 1-M_{iJ}, M_{iJ})^\T \nonumber \\
&=& (1(M_i = m^{(1)}), \ldots, 1(M_i = m^{(Q)}) )^\T
\endy 
gives the indicator vector of the $Q$ missingness patterns for $i = \ot{N}$.
Let 
$$
f_i = (1, M_{i1})^\T \otimes \cdots \otimes (1, M_{iJ})^\T
$$ 
be a non-degenerate linear transformation of $t_i$ with 
\beginy\label{eq:ft}
f_i = \Phi  t_i, \quad \text{where} \quad  \Phi = \otj \beginp 1& 1\\ 0 & 1\endp . 
\endy
The $f_i$ consists of 1 and 
 $(\prod_{j \in \mathcal J'} M_{ij})$'s for all $\emptyset\neq \mathcal J' \subseteq \{\ot{J}\}$.
The vector $u^\mp(c)$ in   \eqref{eq:mp_agg} is essentially the subvector of $(1, (x_i^\imp(c)^\T)^\T \otimes (1, M_{i1})^\T \otimes \cdots \otimes (1, M_{iJ})^\T = (1, (x_i^\imp(c)^\T)^\T \otimes f_i $ after excluding the first element 1, with 
\beginy\label{eq:u_mp_f}
(1, u_i^\mp(c)^\T)^\T = (1, x_i^\imp(c)^\T)^\T\otimes f_i.
\endy
Recall from Examples \ref{ex:u_mp_1} and \ref{ex:u_mp_2} (continued) in Section \ref{sec:mp_asym} of the main text that elements in $(1, u_i^\mp(c)^\T)^\T$ could be collinear.
Whereas we adjusted for collinearity by explicitly removing the collinear terms in the examples, we take a different approach in the proof below for algebraic simplicity.
In particular, we stipulate the coefficient of a regressor that is collinear with some of the earlier regressors to be zero.
This stipulation is consistent with the way that standard software packages for {\ols} handle the collinear regressors, and allows us to verify the correspondence between the aggregate and missingness-pattern-specific regressions using the full vector of $u_i^\mp(c)$ as defined by \eqref{eq:u_mp_f}.

\subsection{Proof of Proposition \ref{prop:mp_agg}}

Denote by $\ttl$ the coefficient of $Z_i$ from 
\beginy
Y_i \sim 1 + Z_i + (u_i^\mp(c) - \bar u^\mp(c)) + Z_i   (u_i^\mp(c) - \bar u^\mp(c)). \label{eq:mp_agg_app_1}
\endy
The goal is to verify $\ttl = \htl^\mp$ as defined in \eqref{eq:htau_mp}.

Let $u_i^\mp = u_i^\mp(0_J)$ be the value of $u_i^\mp(c)$ at $c = 0_J$, with 
\beginy\label{eq:u_mp_f_0}
(1, (u_i^\mp)^\T)^\T 
= (1, (x_i^0)^\T)^\T \otimes f_i.
\endy 
Recall $\bx_{(m)}^\mp= N^{-1}_{(m)} \sum_{i: M_i = m} x_i^\mp$ with dimension $J_{(m)}$.
Let $\bxz_{(m)} = N^{-1}_{(m)} \sum_{i: M_i = m} \xiz$ with dimension $J$.
Consider the below four variants of \eqref{eq:mp_agg_app_1} as key stepping stones for verifying the result. We use ``$\sim$" to indicate equivalent regression formulas up to a reordering of the regressors:
\beginy
Y_i &\sim& 1 + Z_i +  (u_i^\mp  - \bar u^\mp ) + Z_i  (u_i^\mp  - \bar u^\mp ) \nonumber \\
&\sim&  (1, Z_i) \otimes ( 1, (u_i^\mp  - \bar u^\mp ) ^\T)^\T \label{eq:mp_agg_0}\\
\nonumber\\
\text{Formula ``f": }\qquad Y_i &\sim& 1 + Z_i +  u_i^\mp   + Z_i  u_i^\mp\nonumber\\
&\sim& (1, Z_i)^\T \otimes (1, (u_i^\mp)^\T)^\T \nonumber\\
&\sim& (1, Z_i)^\T \otimes  (1, (x_i^0)^\T)^\T \otimes f_i, \label{eq:uc}\\ 
&\sim& f_i + x_i^0 \otimes f_i + Z_i f_i + Z_i x_i^0\otimes f_i,\label{eq:uc_2}\\
\nonumber\\
\text{Formula ``t": }\qquad Y_i &\sim&   (1, Z_i)^\T \otimes  (1, (x_i^0)^\T)^\T \otimes t_i \label{eq:uc_t}\\
\nonumber\\
\text{Formula ``tc":}\qquad Y_i &\sim&   (1, Z_i)^\T \otimes  (1, (x_i^0 - \bx_{(M_i)}^0)^\T)^\T \otimes t_i\label{eq:mp_agg_0_t} \\
&\sim & t_i +(x_i^0 - \bx_{(M_i)}^0) \otimes t_i + Z_i t_i + Z_i (x_i^0 - \bx_{(M_i)}^0)\otimes t_i.\label{eq:mp_agg_0_t_2}
\endy
In particular, 
\eqref{eq:mp_agg_0} replaces the $u_i^\mp(c)$ in  \eqref{eq:mp_agg_app_1} with $u_i^\mp$ in forming the centered fully interacted regression; 
\eqref{eq:uc} gives the uncentered variant of \eqref{eq:mp_agg_0}; 
\eqref{eq:uc_t} reparameterizes the $f_i$ component in \eqref{eq:uc} as $t_i$; 
\eqref{eq:mp_agg_0_t} centers the covariates $x_i^0$ in \eqref{eq:uc_t} by the missingness-pattern-specific means, $\bxz_{(M_i)}$. 
For easy reference, we index regressions \eqref{eq:uc}--\eqref{eq:mp_agg_0_t} by ``f", ``t", and ``tc", respectively, with letters ``f", ``t", and ``c"  indicating using $f_i$, using $t_i$, and centered $x_i^0$, respectively. 

We proceed with the proof in the following three steps:
\begine[(i)]
\item\label{step:i}  The coefficient of $Z_i$ from \eqref{eq:mp_agg_app_1}, $\ttl$, equals the coefficient of $Z_i$ from   \eqref{eq:mp_agg_0}, denoted by  $\ttlz  $. 
\item\label{step:ii} The $\htl^\mp$ from \eqref{eq:htau_mp} satisfies $
\htl^\mp = \rho^\T \hat\delta_\tc$, 
where $\hd_\tc$ is the coefficient of $Z_i t_i$ in \eqref{eq:mp_agg_0_t_2} and $\rho$ is the $2^J$-vector of $\{\rho_{(m)}: m \in \{0, 1\}^J\}$ in lexicographical order of $m$. 
\item\label{step:iii} 
The coefficient of $Z_i$ from   \eqref{eq:mp_agg_0} satisfies 
\beginy\label{eq::step-3}
\ttlz   = (1, (\bar u^\mp)^\T) \beginp \hd_\ff \\ \hg_\ff \endp = \rho^\T \hd_\tc,
\endy
where $\hd_\ff$ and $\hg_\ff$ are the coefficients of $Z_i f_i$ and $Z_i x_i^0 \otimes f_i$ from \eqref{eq:uc_2}.
\ende
Steps \eqref{step:i}--\eqref{step:iii} together ensure
$
\ttl = \ttlz      =  \rho^\T \hd_\tc =  \htl^\mp  
$
and complete the proof. 
Denote by 
\beginy
&&w_i =  (1, Z_i)^\T \otimes ( 1, (u_i^\mp  - \bar u^\mp ) ^\T)^\T,\nonumber\\
&&\wif =  (1, Z_i)^\T \otimes (1, (u_i^\mp)^\T)^\T = (1, Z_i)^\T \otimes  (1, (x_i^0)^\T)^\T \otimes f_i,\nonumber\\
&&\wit =  (1, Z_i)^\T \otimes  (1, (x_i^0)^\T)^\T \otimes t_i,\nonumber\\
&&\witc = (1, Z_i)^\T \otimes  (1, (x_i^0 - \bx_{(M_i)}^0)^\T)^\T \otimes t_i
\label{eq:ws}
\endy
the regressor vectors of regressions \eqref{eq:mp_agg_0}--\eqref{eq:mp_agg_0_t}, respectively, with 
$\htheta$, $\htheta_\ff$, $\htheta_\tt$, and $\htheta_\tc$ as the corresponding coefficient vectors. We verify below steps \eqref{step:i}--\eqref{step:iii} one by one.

\paragraph{Proof of step \eqref{step:i}.}
From \eqref{eq:u_mp_f} and \eqref{eq:u_mp_f_0}, we have 
\begina
(1, u_i^\mp(c)^\T)^\T 
&= & (1, x_i^\imp(c)^\T)^\T \otimes f_i\\
&=& (1, (x_i^0)^\T + M_i^\T D_c)^\T\otimes f_i\\
&=& (1, (u_i^\mp)^\T)^\T +  (0,  M_i^\T D_c)^\T\otimes f_i,
\enda
where 
$
 (0,  M_i^\T D_c)^\T\otimes f_i=
  (0,  c_1M_{i1}, \ldots, c_J M_{iJ})^\T\otimes f_i 
$
is a linear combination of $f_i  \backslash\{1\} = (1, M_{i1})^\T \otimes \cdots \otimes (1, M_{iJ})^\T \backslash \{1\}$, which is in turn a linear combination of both $u_i^\mp(c)$ and $u_i^\mp$. 
This ensures $u_i^\mp(c)$ is a non-degenerate linear transformation of $u_i^\mp$.
The equivalence of $\ttl$ and $\ttlz  $ follows from the invariance of \citet{Lin13}'s estimator by Lemma \ref{lem:trivial}.

\paragraph{Proof of step \eqref{step:ii}.}
The {\ols} \eqref{eq:mp_agg_0_t_2} is equivalent to $|\mathcal M|$ missingness-pattern-specific {\ols}, 
\begina
Y_i \sim (1, Z_i)^\T \otimes (1, (x_i^0 - \bar x^0_{(M_i)})^\T)^\T \qquad \text{for} \quad \{i: M_i = m\},
\enda
which are identical to those that produced $\htlm$  after removing the dimensions in $(x_i^0 - \bar x^0_{(M_i)})$ that are constantly zero. 
This ensures the element of $\hd_\tc$ that corresponds to $Z_i 1(M_i = m) \in Z_i t_i $  equals $\htlm$. Therefore, $
\htl^\mp = \summ \rho_{(m)} \htlm = \sum_{q=1}^Q \rho_{(m^{(q)})} \htau_{\lin,(m^{(q)})} = \rho^\T \hd_\tc$. The previous argument works well when $\mathcal M$ includes all missingness patterns. It is also rigorous in general because $\rho_{(m)} = 0$ and the element of $\hd_\tc$ that corresponds to $Z_i 1(M_i = m) \in Z_i t_i $ can be arbitrary for $m \not\in\mathcal M$.


\paragraph{Proof of step \eqref{step:iii}, the first identity in \eqref{eq::step-3}.}
Let
$
\htheta = (\tilde \mu, \tilde\beta^\T, \ttl , \tilde  \xi^\T)^\T
$
be the vector of coefficients of \eqref{eq:mp_agg_0}, with $\tilde \mu$, $\tilde\beta$, $\ttl$, and $\tilde \xi$ 
corresponding to $1$, $(u_i^\mp - \bar u ^\mp)$, $Z_i$, and $Z_i (u_i^\mp - \bar u ^\mp)$, respectively, in the order of $w_i$. 
Let
$
\htheta_\ff = ( \tilde \mu_\ff, \tilde\beta_\ff^\T, \ttlz  , \tilde  \xi_\ff ^\T)^\T
$
be the vector of coefficients of \eqref{eq:uc}, with $\tilde \mu_\ff $, $\tilde\beta_\ff $, $\ttlz  $, and $\tilde  \xi_\ff $ 
corresponding to $1$, $u_i^\mp$, $Z_i$, and $Z_i u_i^\mp$, respectively, in the order of $\wif$. 
Let $L =  (1+J) 2^J - 1$ denote the length of $u_i^\mp$. 
It follows from 
\begina
\qquad\qquad\quad
\beginp
1\\
u_i^\mp
\endp = 
\beginp
1 & 0^\T_L  \\
\bar u^\mp & I_L 
\endp 
\beginp
1\\
u_i^\mp - \bar u ^\mp  
\endp
= \Gamma
\beginp
1\\
u_i^\mp - \bar u ^\mp
\endp,\quad \text{where}\  \Gamma = \beginp
1 & 0^\T_L  \\
\bar u^\mp & I_L 
\endp,
\enda
that 
$\wif  = (I_2 \otimes \Gamma) w_i$ by \eqref{eq:ws}
and thus 
\begina
\beginp
\tilde\mu \\
\tilde\beta \\
\ttl \\
\tilde \xi  
\endp
=\tilde\theta  = (I_2 \otimes  \Gamma^\T)\tilde\theta_\ff  = 
 \beginp
1 &  (\bar u^\mp)^\T \\
& I_L  & \\
&&1& (\bar u^\mp)^\T\\
&& & I_L 
 \endp\beginp
\tilde \mu_\ff  \\
\tilde \beta_\ff  \\
\ttlz   \\
\tilde \xi_\ff  
\endp
\enda
by Lemma \ref{lem:trivial}. 
This ensures 
\begina
\ttl =  (1, (\bar u^\mp)^\T) \beginp
\ttlz    \\
\tilde\xi_\ff 
\endp 
= 
 (1, (\bar u^\mp)^\T) \beginp
\hd_\ff \\
\hg_\ff 
\endp;
\enda 
the last equality follows from  
\begina
(Z_i, Z_i(u_i^\mp)^\T)^\T = Z_i (1, (u_i^\mp)^\T )^\T = Z_i (1, (\xiz)^\T)^\T \otimes f_i = (Z_i f_i , (Z_i\xiz \otimes f_i)^\T)^\T
\enda such that $((\ttlz  )^\T, \tilde\xi_\ff^\T)^\T$ and $(\hd_\ff^\T, \hg_\ff^\T)^\T$  are two expressions of the same coefficient vector of  $(Z_i, Z_i(u_i^\mp)^\T)^\T = (Z_i f_i , (Z_i\xiz \otimes f_i)^\T)^\T $.

\paragraph{Proof of step \eqref{step:iii}, the second identity in \eqref{eq::step-3}.}
Let  
$t_{(q) } = t_{(m^{(q)})}$ be the common value of $t_i$ for units with $M_i = m^{(q)}$, which 
has 1 in the $q$th dimension, and 0 elsewhere. 
Let $\bar x^0_{(q) }=\bar x^0_{(m^{(q)})}$, with the $j$th element denoted by $\bar x^0_{(q) , j} = \bar x^0_{(m^{(q)}), j} $.
Let 
\begina
G =
\beginp
G_1\\
G_2\\
\vdots\\
G_J
\endp, \quad \text{where}\quad 
G_j = 
\beginp
\bxz_{(1),j}\\
& \bxz_{(2),j} \\
&&\ddots\\
&&& \bxz_{(Q) ,j}
\endp,
\enda
be a $(JQ) \times Q$ matrix that satisfies
\beginy
G = \beginp
\bxz_{(1)} \otimes t_{(1)} , \ \bxz_{(2)} \otimes t_{(2)}, \ \ldots , \ \bxz_{(Q) } \otimes t_{(Q) }\endp. \label{eq:G}
\endy
It follows from the identity 
$
\bxz_{(M_i),j}\cdot 1(M_i = m^{(q)}) = \bxz_{(q) ,j} \cdot 1(M_i = m^{(q)})
$
that 
\begina
\bxz_{(M_i),j} \cdot t_i =   \beginp
 \bxz_{(M_i), j} \cdot 1(M_i = m^{(1)})\\
  \bxz_{(M_i), j} \cdot 1(M_i = m^{(2)})\\
   \vdots\\
     \bxz_{(M_i), j} \cdot 1(M_i = m^{(Q)})
     \endp
     = \beginp
 \bxz_{(1), j} \cdot 1(M_i = m^{(1)})\\
  \bxz_{(2), j} \cdot 1(M_i = m^{(2)})\\
   \vdots\\
     \bxz_{(Q) , j} \cdot 1(M_i = m^{(Q)})
     \endp = G_j t_i 
\enda
and thus 
\begina
 \bar x^0_{(M_i)}  \otimes t_i 
 = 
\beginp
 \bxz_{(M_i), 1} t_i \\
    \bxz_{(M_i), 2} t_i \\
    \vdots\\
      \bxz_{(M_i), J} t_i 
 \endp
 =
 \beginp
G_1 t_i \\
    G_2 t_i \\
    \vdots\\
     G_J t_i 
 \endp
 = G t_i. 
\enda
This ensures 
\begina
(1, (\xiz)^\T)^\T \otimes t_i  &=& 
\beginp
t_i \\ \xiz\otimes t_i 
\endp  
= 
\beginp
t_i \\ \bar x^0_{(M_i)}  \otimes t_i + (\xiz - \bar x^0_{(M_i)}) \otimes t_i 
\endp \\
& =&
 \beginp
 I_Q & 0_{Q \times (JQ)} \\
 G & I_{JQ}
 \endp
 \beginp
 t_i \\  (\xiz - \bar x^0_{(M_i)}) \otimes t_i 
 \endp
 \\
 &=& H  \left\{ \big(1, (\xiz - \bar x^0_{(M_i)})^\T \big)^\T  \otimes t_i \right\},  \qquad \text{where} \ H =  \beginp
 I_Q &0_{Q \times (JQ)}  \\
 G & I_{JQ}
 \endp. 
\enda
The regressor vectors of  \eqref{eq:uc_t} and \eqref{eq:mp_agg_0_t}  thus satisfy $\wit = (I_2 \otimes H)\witc $ by \eqref{eq:ws}. 
By Lemma \ref{lem:trivial}, the corresponding coefficient vectors, denoted by $\htheta_\tt$ and $\htheta_\tc$, respectively, satisfy 
\begina
\beginp
\hat\lambda_\tc\\
\hat\nu_\tc\\
\hd_\tc\\
\hat\gamma_\tc
\endp
=
\htheta_\tc = (I_2 \otimes H^\T) \htheta_\tt = 
\beginp
I_Q & G^\T \\
& I_{JQ} \\ 
&& I_Q & G^\T\\
&& & I_{JQ}
\endp 
\beginp
\hat\lambda_\tt\\
\hat\nu_\tt\\
\hd_\tt\\
\hat\gamma_\tt
\endp,
\enda
where $\hat\lambda_\tc$, $\hat\nu_\tc$, $\hd_\tc$, and $\hg_\tc$ are the components of $\htheta_\tc$ corresponding to $t_i$, $(x_i^0 - \bx_{(M_i)}^0) \otimes t_i$, $Z_i t_i$, and $Z_i (x_i^0 - \bx_{(M_i)}^0) \otimes t_i$, respectively, in \eqref{eq:mp_agg_0_t_2}; likewise for 
$\hat\lambda_\tt$, $\hat\nu_\tt$, $\hd_\tt$, and $\hg_\tt$ as the components of $\htheta_\tt$ corresponding to $t_i$, $x_i^0 \otimes t_i$, $Z_i t_i$, and $Z_i x_i^0 \otimes t_i$, respectively, in decomposition
\begina
Y_i &\sim&  (1, Z_i)^\T \otimes  (1, (x_i^0)^\T)^\T \otimes t_i \nonumber \\
&\sim& t_i + x_i^0 \otimes t_i + Z_i t_i + Z_i x_i^0\otimes t_i  
\enda
of \eqref{eq:uc_t}. 
This ensures 
\beginy\label{eq:tc_t}
\hd_\tc = (I_Q, G^\T) 
\beginp
\hd_\tt\\
\hat\gamma_\tt
\endp.
\endy

On the other hand, it follows from \eqref{eq:ft} that the regressor  vectors of \eqref{eq:uc} and \eqref{eq:uc_t} satisfy $\wif =  ( I_{2(J+1)}\otimes \Phi )  \wit$ such that the coefficient vectors satisfy
\begina
\beginp
\hat\lambda_\tt\\
\hat\nu_\tt\\
\hd_\tt\\
\hat\gamma_\tt
\endp
=
\htheta_\tt 
= ( I_{2(J+1)}\otimes \Phi^\T )  \htheta_\ff
= 
 \beginp
 \Phi^\T & \\
& I_J \otimes \Phi^\T\\ 
&& \Phi^\T & \\
&& & I_J \otimes \Phi^\T
\endp 
\beginp
\hat\lambda_\ff\\
\hat\nu_\ff\\
\hd_\ff\\
\hat\gamma_\ff
\endp,
\enda
where 
$\hat\lambda_\ff$, $\hat\nu_\ff$ are the components of $\htheta_\ff$ corresponding to $f_i$, $x_i^0 \otimes f_i$ in \eqref{eq:uc_2}, in addition to $\hd_\tt$, and $\hg_\tt$ that correspond to $Z_i f_i$  and $Z_i x_i^0 \otimes f_i$, respectively.
This ensures 
\begina
\beginp
\hd_\tt\\
\hat\gamma_\tt
\endp
= 
 \beginp
 \Phi^\T & \\
& I_J \otimes \Phi^\T
\endp 
\beginp
\hd_\ff\\
\hat\gamma_\ff
\endp
\enda
such that, together with \eqref{eq:tc_t}, we have 
\begina
\rho^\T \hd_\tc 
= \rho^\T(I_Q, G^\T)  \beginp
 \Phi^\T & \\
& I_J \otimes \Phi^\T
\endp 
\beginp
\hd_\ff\\
\hat\gamma_\ff
\endp.
\enda
A sufficient condition for the second identity in \eqref{eq::step-3} is thus 
\beginy\label{eq:sss}
(1, (\bar u^\mp)^\T) = \rho^\T (I_Q, G^\T)
\beginp
\Phi^\T \\
 & I_J\otimes \Phi^\T
\endp = 
(\rho^\T , \rho^\T G^\T)
\beginp
\Phi^\T \\
 & I_J\otimes \Phi^\T
\endp.
\endy
This is indeed correct. In particular, it follows from \eqref{eq:u_mp_f_0} and \eqref{eq:ft} that 
\begina
(1, (u_i^\mp)^\T)^\T 
&  =& (1, (x_i^0)^\T)^\T \otimes f_i =(1, (x_i^0)^\T)^\T \otimes (\Phi t_i) = (I_{J+1} \otimes \Phi ) \{(1, (x_i^0)^\T)^\T \otimes t_i\}\\
& =& 
\beginp
\Phi \\
& I_J \otimes \Phi
\endp
\beginp
t_i\\
x_i^0\otimes t_i
\endp.
\enda
This, together with $\rho = \meani t_i$ by \eqref{eq:t} and 
\begina
 \meani x_i^0\otimes t_i 
&=& N^{-1} \sumq  \sum_{i: M_i = m^{(q)}} x_i^0 \otimes t_i 
= N^{-1} \sumq  \left(\sum_{i: M_i = m^{(q)}} x_i^0\right) \otimes t_{(q) } \\
&=& N^{-1} \sumq  N_{(q) } \bar x_{(q) }^0 \otimes t_{(q) } =
\sumq  \rho_{(q) } \bar x_{(q) }^0 \otimes t_{(q) } = G\rho
\enda
by \eqref{eq:G}, ensures that 
\begina
(1, (\bar u^\mp)^\T)^\T = \meani (1, (u_i^\mp)^\T)^\T 
=
\beginp
\Phi \\
& I_J \otimes \Phi
\endp
\beginp
\meani t_i \\
\meani \xiz \otimes t_i 
\endp
= \beginp
\Phi \\
& I_J \otimes \Phi
\endp
\beginp
\rho \\
G \rho
\endp.
\enda
This verifies the second identity in \eqref{eq::step-3} via \eqref{eq:sss} and completes the proof.

\subsection{Aggregate regression for recovering $\htau_\text{F}^\mp$}\label{sec:mp_add_app}


\begin{proposition}\label{prop:mp_agg_f}
The additive missingness-pattern estimators $\htf^\mp$ and $\hse_\fisher^\mp$ from \eqref{eq:htau_mp} and \eqref{eq:hse_mp} equal the coefficient of $Z_i$ and its associated robust standard error from 
\beginy\label{eq:add}
Y_i \sim (1, Z_i, x_i^\imp(c)^\T)^\T \otimes (1, (f_i' - \bar f_i')^\T)^\T ,
\endy
where $f_i'$ is the subvector of $f_i $ without the first constant term, with $f_i = (1, (f_i')^\T)^\T$. 
\end{proposition}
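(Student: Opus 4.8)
The plan is to mirror the proof of Proposition~\ref{prop:mp_agg} for the fully interacted case, with one simplification: because the pattern-specific regressions here are additive and contain no treatment--covariate interactions, the coefficient of $Z_i$ from such a fit is unchanged by re-centering the covariates, so no analogue of the ``$\tc$''-centering step of that proof is needed. Throughout I would handle redundant (collinear) entries of the Kronecker-product regressor vectors by stipulating their coefficients to be zero, consistently with standard {\ols} software, so that one may work with the full, possibly redundant, vectors; the argument is purely algebraic and uses no regularity conditions. Write $\htf^\agg$ for the coefficient of $Z_i$ from \eqref{eq:add} and $\hse_\fisher^\agg$ for its associated robust standard error.

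First I would reduce to the imputation $c=0_J$. Writing $x_i^\imp(c)=\xiz+D_cM_i$ with $D_c=\diag(c_j)_{j=1}^J$, the regressor vector of \eqref{eq:add} differs from its $c=0_J$ counterpart only through the block $(D_cM_i)\otimes(1,(f_i'-\bar f_i')^\T)^\T$, and since each $M_{ij}$ times a component of $f_i$ is again a component of $f_i$ (the entries of $f_i$ being $1$ and all products $\prod_{j\in\mathcal J'}M_{ij}$), this block lies in the span of the $f_i$-block already present. Hence the two regressor vectors are related by a non-degenerate linear transformation, and Lemma~\ref{lem:trivial} gives that $\htf^\agg$ and $\hse_\fisher^\agg$ do not depend on $c$; I may therefore take $c=0_J$ and work with $\xiz$. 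Next I would un-center $f_i'$: comparing the fit on $(1,Z_i,(\xiz)^\T)^\T\otimes(1,(f_i'-\bar f_i')^\T)^\T$ with the fit on $(1,Z_i,(\xiz)^\T)^\T\otimes f_i$, Lemma~\ref{lem:trivial} applied to the block of $Z_i$-interacted regressors shows that $\htf^\agg$ equals $(1,(\bar f_i')^\T)$ times the vector of coefficients on $Z_i\otimes f_i$ in the uncentered fit; moreover $(1,(\bar f_i')^\T)^\T=\meani f_i=\Phi\,\meani t_i=\Phi\rho$ by \eqref{eq:ft}, \eqref{eq:t}, and $\meani t_i=\rho$.

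Then I would pass from the $f_i$- to the $t_i$-parametrization. Since $f_i=\Phi t_i$, the uncentered fit on $(1,Z_i,(\xiz)^\T)^\T\otimes f_i$ is a non-degenerate reparametrization of the fit on $(1,Z_i,(\xiz)^\T)^\T\otimes t_i$; tracking the $Z_i$-block through $\Phi$ with Lemma~\ref{lem:trivial} and combining with the previous step yields $\htf^\agg=\rho^\T\hd_\tt$, where $\hd_\tt$ is the vector of coefficients on $Z_it_i$. Because $t_i$ is the one-hot indicator of the missingness pattern, the design of the $t_i$-parametrized fit is block-diagonal across patterns, so this single {\ols} fit decouples into the pattern-specific additive regressions $Y_i\sim 1+Z_i+\xiz$ over $\{i:M_i=m\}$; dropping the identically-zero coordinates of $\xiz$ within pattern $m$ turns each into $Y_i\sim 1+Z_i+\xmp_i$, whose coefficient of $Z_i$ is $\htfm$ (equal to $\htau_{\neyman,(1_J)}$ when $m=1_J$), by the {\fwl}. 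Hence the $q$th entry of $\hd_\tt$ is $\htfm$ at $m=m^{(q)}$, so $\htf^\agg=\rho^\T\hd_\tt=\summ\rho_{(m)}\htfm=\htf^\mp$ as in \eqref{eq:htau_mp}. The same block-diagonality makes the Eicker--Huber--White covariance of the $t_i$-parametrized fit block-diagonal, with the $Z_it_i$-block equal to $\diag(\hse_{\fisher,(m^{(q)})}^2)_q$ (each diagonal entry being the squared robust standard error of $\htfm$); since the robust covariance is invariant under these reparametrizations (Lemma~\ref{lem:trivial}) and $\htf^\agg$ is a linear functional supported on the $Z_it_i$-block that acts there as $\rho^\T$, pushing this through gives $(\hse_\fisher^\agg)^2=\summ\rho_{(m)}^2\hse_{\fisher,(m)}^2=(\hse_\fisher^\mp)^2$ as in \eqref{eq:hse_mp}, completing the argument.

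The main obstacle is the bookkeeping: carefully tracking how the coefficient of $Z_i$, and its robust variance, transform through the three successive reparametrizations ($c\mapsto 0_J$, un-centering $f_i'$, and $f_i=\Phi t_i$), and verifying the identity $(1,(\bar f_i')^\T)^\T=\Phi\rho$ that couples the centering vector to the missingness-pattern proportions. Conceptually, however, everything is driven by Lemma~\ref{lem:trivial} and the block-diagonal structure of a Kronecker design built from a one-hot indicator, exactly as in the proof of Proposition~\ref{prop:mp_agg}, only without the treatment--covariate-interaction centering step, so I would expect this proof to be noticeably shorter than that of Proposition~\ref{prop:mp_agg}.
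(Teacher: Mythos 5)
Your proposal is correct and follows essentially the same route as the paper's own (sketched) proof: reduce to $c=0_J$ via Lemma \ref{lem:trivial}, un-center $f_i'$ to get $\htf^{\agg}=(1,(\bar f_i')^\T)\hd_\ff$ with $(1,(\bar f_i')^\T)^\T=\Phi\rho$, reparametrize $f_i=\Phi t_i$, and use the block-diagonal one-hot structure of the $t_i$-design to decouple into the pattern-specific additive fits. Your explicit tracking of the Eicker--Huber--White covariance through these transformations fills in a detail the paper leaves implicit, but the argument is the same.
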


Direct comparison shows that \eqref{eq:add} involves  interactions between $Z_i$ and $f_i'$ and is thus different from $Y_i \sim 1 + Z_i + u_i^\mp(c)$. The proof of Proposition \ref{prop:mp_agg_f} follows from  the same logic as that of Proposition \ref{prop:mp_agg}. We give below a sketch of the main steps to avoid repetition. 

\begin{proof}[A sketch of the proof of Proposition \ref{prop:mp_agg_f}] 

Consider the below three variants of \eqref{eq:add} as key stepping stones for verifying the result:
\beginy
Y_i &\sim& (1, Z_i, (x_i^0)^\T) \otimes  (1, (f_i' - \bar f_i')^\T)^\T,
 \label{eq:add_0}\\ 
\text{Formula ``f": }\qquad Y_i &\sim&  (1, Z_i, (x_i^0)^\T) \otimes f_i , \label{eq:add_uc}\\ 
\text{Formula ``t": }\qquad Y_i &\sim&  (1, Z_i, (x_i^0)^\T) \otimes t_i. \label{eq:add_t}
\endy
In particular, 
\eqref{eq:add_0} replaces the $x_i^\imp(c)$ in  \eqref{eq:add} with $x_i^0 = x_i^\imp(0_J)$; 
\eqref{eq:add_uc} gives the uncentered variant of \eqref{eq:add_0}; 
\eqref{eq:add_t} reparameterizes the $f_i$ component in \eqref{eq:add_uc} as $t_i$. 
Index regressions \eqref{eq:add_uc} and \eqref{eq:add_t} by ``f" and ``t", respectively. 
We proceed with the proof in the following three steps:
\begine[(i)]
\item\label{step:i_add}  The coefficient of $Z_i$ from \eqref{eq:add}, $\htf^\agg$, equals the coefficient of $Z_i$ from   \eqref{eq:add_0}, denoted by  $\htf^{\agg,0}$. 
\item\label{step:ii_add} The additive missingness-pattern estimator satisfies $
\htf^\mp = \rho^\T \hat\delta_\tt$, 
where $\hd_\tt$ is the coefficient of $Z_i t_i$ in \eqref{eq:add_t} and $\rho$ is the vector of $\{\rho_{(m)}: m \in \{0, 1\}^J\}$ in lexicographical order of $m$. 
\item\label{step:iii_add} 
$
\htf^{\agg,0}    = \rho^\T \hd_\tc$.
\ende
Steps \eqref{step:i}--\eqref{step:iii} together ensure
$
\htf^\agg = \htf^{\agg,0}     =  \rho^\T \hd_\tt =  \htf^\mp  
$
and complete the proof. 
\end{proof}

\end{document}